\newcommand{\beq}{\begin{equation}}
\newcommand{\eeq}{\end{equation}}
\newcommand{\eg}{\textit{e.g.},}
\newcommand{\ie}{\textit{i.e.},}
\newcommand{\etal}{\textit{et~al.}}
\newcommand{\N}{\mathbb{N}}
\newcommand{\stra}[1]{{\bf S{#1}}}
\def\bes2{\beta^{(s2)}}
\def\bes3{\beta^{(s3)}}
\def\ms2{\mu^{(s2)}}
\def\ms3{\mu^{(s3)}}
\newtheorem{theorem}{Theorem}
\newtheorem{lemma}{Lemma}
\title{Sharing LRU Cache Resources among Content Providers: A Utility-Based Approach}
\author{Mostafa Dehghan$^1$, Weibo Chu$^2$, Philippe Nain$^3$, Don Towsley$^1$\\
$^1${University of Massachusetts, Amherst, USA}\\
$^2${Northwestern Polytechnical University, Xi'an, China}, 
$^3${Inria, France}\\
{\tt \{mdehghan, towsley\}@cs.umass.edu, wbchu@nwpu.edu.cn, philippe.nain@inria.fr}
}
\begin{document}

\maketitle

\begin{abstract}
In this paper, we consider the problem of allocating cache resources among multiple content providers. The cache can be partitioned into slices and each partition can be dedicated to a particular content provider, or shared among a number of them. It is assumed that each partition employs the LRU policy for managing content. We propose utility-driven partitioning, where we associate with each content provider a utility that is a function of the hit rate observed by the content provider. We consider two scenarios: i)~content providers serve disjoint sets of files, ii)~there is some overlap in the content served by multiple content providers. In the first case, we prove that cache partitioning outperforms cache sharing as cache size and numbers of contents served by providers go to infinity. In the second case, It can be beneficial to have separate partitions for overlapped content.  In the case of two providers it is usually always benefical to allocate a cache partition to serve all overlapped content and separate partitions to serve the non-overlapped contents of both providers.  We establish conditions when this is true asymptotically but also present an example where it is not true asymptotically.  We develop online algorithms that dynamically adjust partition sizes in order to maximize the overall utility and prove that they converge to optimal solutions, and through numerical evaluations we show they are effective.
\end{abstract}
\section{Introduction}
The Internet has become a global information depository and content distribution platform, where various types of information or content are stored in the ``cloud'',  hosted by a wide array of {\em content providers}, and delivered or ``streamed'' on demand. 
The (nearly) ``anytime, anywhere access'' of online information or content -- especially multimedia content -- has precipitated rapid growth in Internet data traffic in recent years, both in wired and wireless (cellular) networks. It is estimated~\cite{cisco14} that the global Internet traffic in 2019 will reach 64 times its entire volume in 2005. A primary contributor to this rapid growth in data traffic comes from online video streaming services such as Netflix, Hulu, YouTube and Amazon Video, just to name a few. 
It was reported~\cite{sandvine-netflix} that Netflix alone consumed nearly a third of the peak downstream traffic in North America in 2012, and it is predicted~\cite{cisco14} that  nearly 90\% of all data traffic will come from video content distributors in the near future. 

Massive data traffic generated by large-scale online information access -- especially, ``over-the-top'' video delivery -- imposes an enormous burden on the Internet and poses many challenging issues. Storing, serving, and delivering videos to a large number of geographically dispersed users in particular require a vast and sophisticated infrastructure with huge computing, storage and network capacities. The challenges in developing and operating large-scale video streaming services in today's Internet~\cite{Vijay:YouTube-INFOCOM12,Vijay:Netflix,Vijay:Hulu} 
to handle user demands and meet user desired {\em quality-of-experience} also highlight some of the key limitations of today's Internet architecture.
This has led to a call for alternate Internet architectures that connect people to content rather than servers
(see~\cite{ahlgren2012survey} for a survey of representative architecture proposals). 
The basic premise of these content-oriented architectures is that storage is an integral part of the network substrate where content can be cached {\em on-the-fly}, or prefetched or ``staged'' {\em a priori}.

While there has been a flurry of recent research studies in the design of caching mechanisms~\cite{borst10,Carofiglio11,Zhang13,Michelle14}, relatively little attention has been paid to the problem of storage or {\em cache resource allocation among multiple content providers}. In this paper, we address a fundamental research question that is pertinent to all architecture designs:  {\em how to share or allocate the  cache resource within a single network forwarding element and  across various network forwarding elements among multiple content providers so as to maximize the cache resource utilization}  or {\em provide best utilities to content providers?}  

This question was addressed in \cite{ICN16} in an informal and heuristic manner.  It proposed a utility maximization framework, which we adopt, to address the aforementioned fundamental problem.
We consider a scenario where there are multiple content providers offering the same type of content, \eg~videos; the content objects offered by the content providers can be all distinct or there may be common objects owned by different content providers. Due to disparate user bases, the access probabilities of these content objects may vary across the CPs.
Our analysis and results are predicated on the use of Least Recently Used (LRU) cache replacement; however, we believe that they apply to other policies as well. 
\cite{ICN16} argued that, if all CPs offer {\em distinct} content objects, {\em partitioning the cache into slices of appropriate sizes, one slice per CP}, yields the best cache allocation strategy as it maximizes the sum of CP utilities. \cite{ICN16} also considered the case where CPs serve common content and argued that placing common content into a single LRU cache and non-common content in separate LRU caches usually provides the best performance.  We make more precise statements to support these observations.  In the case that common content are requested according to the same popularity distributions, regardless of provider, in the limit aggregate hit rate is maximized when three LRU partitions are established, one for the overlapped content and the other two for the non-overlap content.  We also provide a counterexample that shows that such a strategy is not always optimal.  However, the conclusion is that partitioning is usually best.

The above results are based on work of Fagin \cite{Fagin1977}, who characterized the asymptotic behavior of LRU for a rich class of content popularity distributions that include the Zipf distribution. 

In the last part of the paper, we develop decentralized algorithms to implement utility-driven cache partitioning. These algorithms adapt to changes in system parameters by dynamically adjusting the partition sizes, and are theoretically proven to be stable and converge to the optimal solution.

Our results illustrate the importance of considering the cache allocation problem among multiple CPs and has implications in architectural designs: from the perspective of cache resource efficiency or utility maximization of CPs, cache partitioning (among CPs) should be a basic principle for cache resource allocation; it also suggests alternate content-oriented architectures which explicitly account for the role of CPs~\cite{CONIA}.
Cache partitioning also provides a natural means to effectively handle heterogeneous types of content with different traffic or access characteristics, and offer differentiated services for content delivery~\cite{kelly99,ko03,lu04,Zhang13}.  In the future Internet  where network cache elements will likely be provided by various entities~\cite{OpenCDN},  our framework also facilitates the design of  distributed pricing and control mechanisms, and allows for the establishment of a viable cache market economic model.

The main contributions of this paper can be summarized as follows:
\begin{itemize}
\item We establish the connection between Fagin's asymptotic results on the LRU cach and the characteristic time (CT) approximation introduced in \cite{Che01}, providing a stronger theoretical underpinning for the latter than previously known.  Moreover, we extend Fagin's results and therefore theoretical justification of the CT approximation to a larger class of workloads that include those coming from independent content providers.

\item Using Fagin's asymptotic framework we show that partitioning is the best strategy for sharing a cache  when content providers do not have any content in common.  On the other hand when contnet providers serve the same content, it can be beneficial for content providers to share a cache to serve their overlapped content. We establish this to be true for a class of popularity distributions.  We also present an example where placing common content in a shared cache is not optimal.

\item We develop online algorithms for managing cache partitions, and prove the convergence of these algorithms to the optimal solution using Lyapunov functions.

\item We show that our framework can be used in revenue based models where content providers react to prices set by (cache) service providers without revealing their utility functions.

\item We perform simulations to show the efficiency of our online algorithms using different utility functions with different fairness implications.
\end{itemize}

The remainder of this paper is organized as follows. We present the problem setting and basic model in  Section~\ref{sec:model} where we make the connection between Fagin's asymptotic results and the CT approximation. We describe the cache allocation problem via the utility maximization framework in  Section~\ref{sec:problem}. In Section~\ref{sec:online}, we develop online algorithms for implementing utility-maximizing cache partitioning. Simulation results are presented in Section~\ref{sec:simulation}. In Section~\ref{sec:discussion}, we explore the implications of our results, and discuss future research directions and related work. Section~\ref{sec:conclusion} concludes the paper.


\section{Problem Setting \& Basic Model}
\label{sec:model}
\begin{figure}[t]
\centering
  	\includegraphics[scale=0.19]{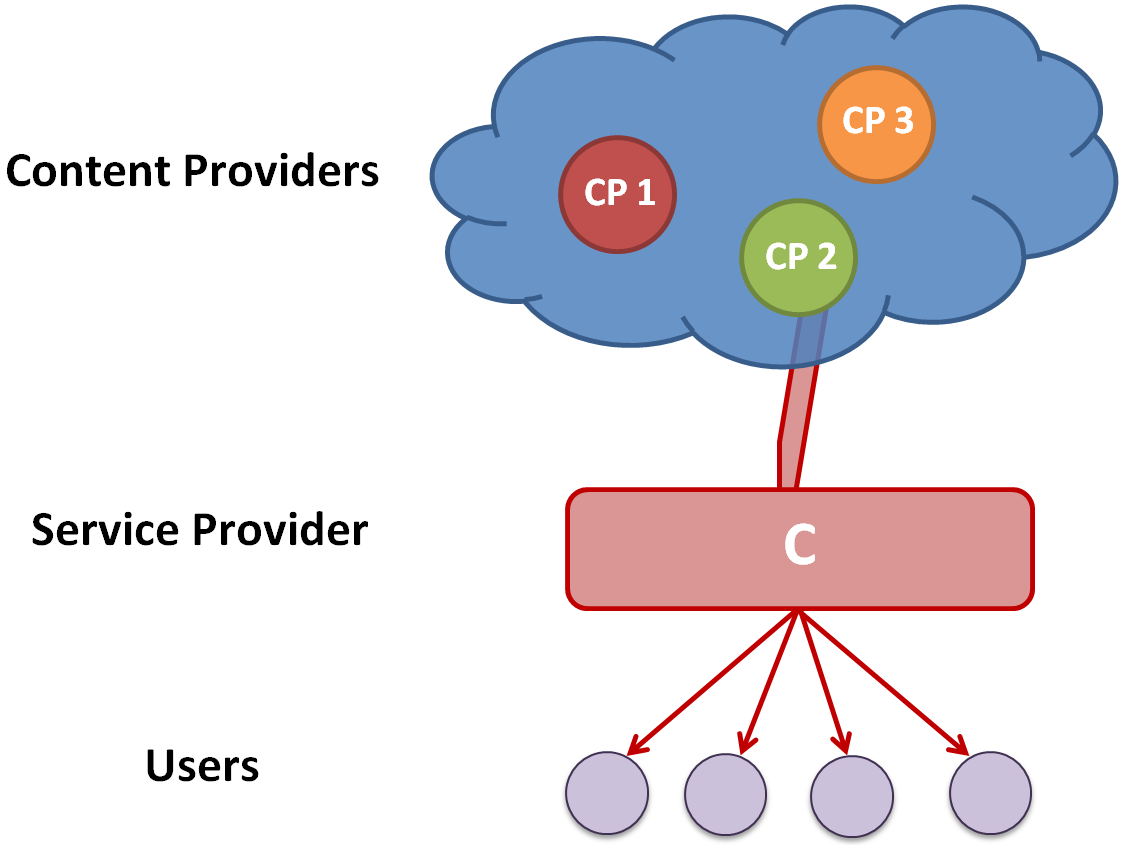} 
\caption{Network Model.}
    \label{fig:model}
\end{figure}

Consider a network as shown in Figure~\ref{fig:model}, where users access content, \eg~videos,  from $K$ content providers~(CPs).
CP~$k~(k=1,\ldots,K)$ serves a set $S_k$ of $n_k$ unit size files where $n_k = |S_k|$; we will usually label these files $i=1,\dotsc, n_k$. All CPs share a content cache, supplied by a third-party network provider, referred to as a service provider hereafter.
Content providers have business relations with the service provider and pay for cache resources. 
There are two possible scenarios: i)~the content objects offered by the CPs are all {\em distinct}; and ii)~some {\em common} objects are provided by different CPs. Due to disparate user bases, the access patterns of these content objects may vary across the CPs.

We assume that requests are described by a Poisson process with request rate for file $i$ of CP $k$ being ${\lambda_{k,i}=\lambda_k p_{k,i}, i\in S_k, k = 1,\ldots,K}$, where $\lambda_k$ denotes the aggregate request rate for contents from CP~$k$, and $p_{k,i}$ is the probability that a request to CP~$k$ is for content~$i$. Associated with each CP is a utility $U_k(h_k)$ that is an \emph{increasing} and \emph{concave} function of the hit rate $h_k$ over all its files. In its most general form, the service provider wishes to maximize the sum of the utilities over all content providers, $\sum_{k}{ U_k(h_k)}$, through a proper allocation of cache space to the CPs.  In the simple case where  $U_k(h_k)=h_k$, the objective becomes that of maximizing the overall cache hit rate, which provides a measure of  the overall cache utilization efficiency.

\vspace*{2pt}
\noindent
{\bf Cache Partitioning:}
When the cache is shared among the CPs, content objects offered by all CPs compete for the storage space based on their access patterns. To restrict cache contention to smaller sets of content, the service provider can form content groups from files served by a CP or multiple CPs, and partition the cache into slices and dedicate a partition to each content group. Let $P$ denote the number of content groups/cache partitions. Also, let $V_p$ and $C_p, p=1,\ldots,P$ denote the content groups and partition sizes, respectively. Note that $P=1$ implies that the cache is shared as a whole, while $P>1$ means it is partitioned.

The first question to ask is: \emph{what is the optimal number of partitions and how should files be grouped?}
To determine the number of slices and that what files should be requested from which partition, the service provider proceeds as follows. Files are first grouped into disjoint sets according to which content providers serve them. The service provider then decides how many partitions to create, and whether to dedicate a separate partition for each set of files, or have multiple sets of files share a partition. In the next section, we explain what could change if the cache manager made partitioning decisions on a per file basis rather than sets of files.

Assuming the answer to the first question, the second question is: \emph{how should the partitions be sized?}
Let ${\mathbf{C} = (C_1, C_2, \ldots, C_P)}$ denote the vector of partition sizes. For each content provider $k$, hit rate is a function of the partition sizes $h_k(\mathbf{C})$. For a cache of size $C$, we formulate this question as the following optimization problem:
\begin{align*}
\label{eq:main}
\text{maximize} \quad &\sum_{k=1}^{K}{U_k\Big(h_k(\mathbf{C})\Big)} \notag\\
\text{such that} \quad &\sum_{p=1}^{P}{C_p} \le C \\
& C_p = 0, 1, 2,\ldots; \quad p=1, 2, \ldots, P. \notag
\end{align*}

Note that the above formulation is an integer programming problem that is typically hard to solve. However, in practice caches are large and therefore we assume $C_p$ can take any real value, as the rounding error will be negligible.

\vspace*{2pt}
\noindent
{\bf Cache Characteristic Time:}
Assume a cache of size~$C$ serving~$n$ contents with popularity distribution~$p_i$, ${i=1,\ldots , n}$ .
Under the independent reference model (requests are i.i.d.), Fagin~\cite{Fagin1977} introduced the notion of a {\em window size} $T$ that satisfies
\[ C = \sum_{i=1}^n (1-(1-p_i)^T). \]
The miss probability associated with a window of size $T$ is defined as
\[ m(T) = \sum_{i=1}^n p_i(1-p_i)^T. \]
Fagin introduced a cumulative probability distribution, $F$, that is continuously differentiable in $(0,1)$ with $F(0) = 0$ and $F(1) = 1$. The right-derivative of $F$ at $0$, denoted by $F^\prime(0)$,
may be infinite. This will allow us to account for Zipf-like distributions.
Define 
\[ 
p_i^{(n)} = F(i/n) - F((i-1)/n), \quad i=1,\ldots , n
\]
the probability that page $i$ is requested. Hereafter, we will refer to $F$ as the popularity distribution.
If $C/n = \beta$ then $T/n \rightarrow \tau_0$ where (see~\cite{Fagin1977})
\beq \label{eq:beta}
 \beta = \int_0^1 (1-e^{-F'(x)\tau_0})dx 
\eeq
and $m(T) \rightarrow \mu$ where
\beq \label{eq:mu}
 \mu = \int_0^1 F'(x)e^{-F'(x)\tau_0}dx. 
\eeq
Moreover, $\mu$ is the limiting miss probability under LRU when~$n\rightarrow \infty$.

Suppose that requests arrive according to a Poisson process with rate $\lambda$. Express $\beta$ as
\[ \beta = \int_0^1 P(X(x) <\tau_0/\lambda) dx \]
where $X(x)$ is an exponential random variable with intensity~$\lambda F'(x)$.  $X(x)$ is the inter-arrival time of two requests for content of type $x$.  If this time is less than $\tau_0/\lambda$, then the request is served from the cache, otherwise it is not.   In practice, as $n$ is finite, this is approximated by 
\beq \label{eq:CT-beta}
 C = \beta n = \sum_{i=1}^n (1 - e^{-\lambda p^{(n)}_i T_c}) 
\eeq
where $T_c$ is the Characteristic Time (CT) for the finite content cache \cite{Che01}.  The aggregate miss probability is approximated by 
\beq \label{eq:CT-hit}
 m(T_c) =1- \sum_{i=1}^n p^{(n)}_i \bigl(1 - e^{-\lambda p_i^{(n)}T_c}\bigr).
 \eeq
Fagin's results suffice to show that as $n\rightarrow \infty$, the r.h.s. of~\eqref{eq:CT-hit} converges to the LRU miss probability.

In the context of $K$ providers, let $n_k = b_k n$, $n,b_k \in \N$, ${k=1,\ldots , K}$. Denote $B_k :=\sum_{j=1}^{k} b_j$ with $B_0 = 0$ by convention. It helps also to denote $B_K$ by $B$. Let $F_1,F_2, \ldots , F_K$ be continuous uniformly differentiable CDFs in $(0,1)$. $F_k^\prime(0)$ may be infinite for $k=1,\ldots K$. If each provider has a cache that can store a fraction $\beta_k$ of its contents, then the earlier described CT approximation, (\ref{eq:CT-beta}), (\ref{eq:CT-hit}), applies with 
\beq  \label{eq:popularity-k}
p_{k,i}^{(n)} = F_k\Bigl(\frac{i}{b_k n}\Bigr) 
  - F_k\Bigl(\frac{i-1}{b_kn}\Bigr), \quad i=1,\ldots,b_k n. 
\eeq
We denote the asymptotic miss probabilities for the $K$ caches, each using LRU, by 
\begin{equation}
\label{def-mup}
 \mu^{(p)}_k = \int_0^1 F'_k(x) e^{-F'_k(x) \tau_k}dx, \quad k=1,\dotsc , K
 \end{equation}
where $\tau_k$ is the solution of \eqref{eq:beta} with $\beta$ replaced by $\beta_k$.

Assume that the providers share a cache of size $C$.  Define $\beta^{(s)}=\sum_{k=1}^K \beta_k$. We introduce $\mu^{(s)}$ and $\tau_0$ through the following two equations,
\begin{align}
\mu^{(s)} & = \sum_{k=1}^K a_k \int_0^1 F'_k(x)e^{- a_k  F'_k(x)\tau_0 B/b_k}dx, 
      \label{eq:mu-shared} \\
\beta^{(s)} & = 1-\sum_{k=1}^K \frac{b_k}{B}\int_0^1 e^{-a_k  F'_k(x)\tau_0B/b_k}dx\label{eq:beta-shared}
\end{align}
where $a_k :=\lambda_k/\lambda$, $k=1,\ldots , K$. 

\begin{theorem}  
\label{th:shared_Fagin}
\label{THM:SHARED_FAGIN}
Assume that we have  $K$ providers with popularity distributions $F_1,\dotsc , F_K$ as defined above, with numbers of contents given by $b_k n$ and request rates $\lambda_k$, ${k=1,\dotsc , K}$. Construct the sequence of popularity probabilities $\{p_{k,i}^{(n)}\}$, $n=1, \dotsc$ defined in \eqref{eq:popularity-k} and cache sizes $C^{(n)}$ such that $C^{(n)}/n = \beta$.  Then, the aggregate miss probability under LRU  converges to $\mu^{(s)}$ given in \eqref{eq:mu-shared}, where $\tau_0$ is the unique solution of \eqref{eq:beta-shared}.
\end{theorem}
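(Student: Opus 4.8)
The plan is to analyze the shared LRU cache through the characteristic-time (CT) approximation, treating it as a single cache that stores all $N:=Bn$ files, where the file indexed by $(k,i)$ is requested at rate $\lambda_k p_{k,i}^{(n)}=\lambda a_k p_{k,i}^{(n)}$. For a common characteristic time $T_c^{(n)}$, the exact finite-$n$ fixed point \eqref{eq:CT-beta} reads $C^{(n)}=\sum_{k=1}^K\sum_{i=1}^{b_k n}\bigl(1-e^{-\lambda_k p_{k,i}^{(n)}T_c^{(n)}}\bigr)$, and the template \eqref{eq:CT-hit} gives the aggregate miss probability $m(T_c^{(n)})=\sum_{k=1}^K a_k\sum_{i=1}^{b_k n}p_{k,i}^{(n)}e^{-\lambda_k p_{k,i}^{(n)}T_c^{(n)}}$. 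The structural observation driving the proof is that, once $T_c^{(n)}$ is fixed, the providers \emph{decouple}: provider $k$ enters only through its own popularities $p_{k,i}^{(n)}$ and rate $\lambda_k$, so each contribution is amenable to Fagin's single-cache asymptotics, with coupling entering solely through the shared $T_c^{(n)}$.

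First I would identify the correct scaling of the characteristic time, namely $\lambda T_c^{(n)}/n\to\tau_0 B$, and pass to the limit in the cache constraint. Using the mean-value approximation $b_k n\,p_{k,i}^{(n)}\to F_k'(x)$ with $x=i/(b_k n)$, each provider's term $\frac1n\sum_{i=1}^{b_k n}\bigl(1-e^{-\lambda_k p_{k,i}^{(n)}T_c^{(n)}}\bigr)$ is a Riemann sum converging to $b_k\int_0^1\bigl(1-e^{-a_k F_k'(x)\tau_0 B/b_k}\bigr)dx$. Dividing the constraint by $Bn$ and summing over $k$ yields precisely the fixed-point equation \eqref{eq:beta-shared} (under the normalization $C^{(n)}/(Bn)\to\beta^{(s)}$), which determines $\tau_0$. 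I would then establish uniqueness by noting that $\sum_k\frac{b_k}{B}\int_0^1 e^{-a_k F_k'(x)\tau_0 B/b_k}dx$ is strictly decreasing in $\tau_0$, running from $1$ at $\tau_0=0$ to $0$ as $\tau_0\to\infty$, so $\beta^{(s)}$ sweeps strictly monotonically over $(0,1)$ and \eqref{eq:beta-shared} has a single root.

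The same Riemann-sum passage applied to the miss probability gives $m(T_c^{(n)})\to\sum_k a_k\int_0^1 F_k'(x)e^{-a_k F_k'(x)\tau_0 B/b_k}dx=\mu^{(s)}$, matching \eqref{eq:mu-shared}. It then remains to replace the CT approximation by the \emph{true} LRU miss probability. For this I would regard the shared cache as a single cache on $Bn$ files whose merged popularity distribution has increments $\{a_k p_{k,i}^{(n)}\}$; since each $F_k$ is uniformly continuously differentiable on $(0,1)$, these pieces assemble into one admissible Fagin distribution, and Fagin's theorem \cite{Fagin1977} guarantees that the CT limit computed above coincides with the exact asymptotic LRU miss probability, completing the argument.

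The main obstacle is rigor at two related points. The first is verifying that the merged distribution satisfies Fagin's hypotheses—so that the CT approximation is asymptotically exact for the shared cache—which requires splicing $K$ differentiable profiles into a single admissible distribution while allowing each $F_k'(0)$ to be infinite. The second is justifying the Riemann-sum limits uniformly near $x=0$, where $F_k'$ may blow up for Zipf-like popularities. The saving grace is that the relevant integrands, $1-e^{-cF_k'(x)}$ and $F_k'(x)e^{-cF_k'(x)}$, remain bounded even where $F_k'(x)=\infty$ because $ue^{-cu}\to0$ as $u\to\infty$; dominated convergence then legitimizes the limits, and the uniform differentiability hypothesis is exactly what makes the approximation $b_k n\,p_{k,i}^{(n)}\approx F_k'(i/(b_k n))$ hold uniformly in $i$.
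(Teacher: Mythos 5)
Your limit computations and the uniqueness argument for $\tau_0$ are sound and match the paper's, but the final step --- transferring from the CT fixed point to the true asymptotic LRU miss probability --- contains a genuine gap, and it is exactly the point the paper's proof is careful about. You assert that the merged popularity profile with increments $\{a_k p_{k,i}^{(n)}\}$ ``assembles into one admissible Fagin distribution.'' In general it does not. The spliced CDF is $F(x) = A_{k-1} + a_k F_k\bigl(\tfrac{B}{b_k}x - \tfrac{B_{k-1}}{b_k}\bigr)$ on $[B_{k-1}/B,\, B_k/B]$, and at each interior junction $B_k/B$ its left derivative is $a_k B F_k'(1)/b_k$ while its right derivative is $a_{k+1} B F_{k+1}'(0)/b_{k+1}$; these differ for generic rates and shapes, and the latter may even be infinite. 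Fagin's framework requires $F$ to be continuously differentiable throughout $(0,1)$ (infinite derivative allowed only at $0$), so his theorem cannot be invoked wholesale for the merged distribution --- the paper says this explicitly (``we cannot apply the result of \cite[Theorem 1]{Fagin1977} directly to our problem''), and the remark following the theorem advertises the result precisely as an \emph{extension} of Fagin to CDFs differentiable except at countably many points. Appealing to ``Fagin's theorem'' for the merged cache is therefore circular: the admissibility you claim is the very content of the theorem being proved.

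The repair is the paper's decomposition, which your Riemann-sum computation already performs in all but name: write $\beta^{(s)}(n,\tau_0) = 1 - \tfrac{1}{Bn}\sum_{k=1}^K\sum_{i=1}^{b_k n}\bigl(1 - a_k p^{(n)}_{k,i}\bigr)^{n\tau_0}$ and apply Fagin's Theorem 1 \emph{separately to each provider's block}, which needs only the smoothness of the individual $F_k$; do the same for the miss-probability sum, then invoke Fagin's Theorems 2 and 4 to identify the resulting limit $\mu^{(s)}$ with the asymptotic LRU miss probability. Note also that the paper works directly with Fagin's window form $(1-p_i^{(n)})^{n\tau_0}$, for which those theorems are stated, whereas you start from the exponential CT form $1-e^{-\lambda_k p^{(n)}_{k,i}T_c^{(n)}}$; the two are asymptotically equivalent, but arguing via the CT form obliges you to add the (easy, but currently missing) step replacing $e^{-pT}$ by $(1-p)^T$ before citing \cite{Fagin1977}. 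Your dominated-convergence treatment of the blow-up of $F_k'$ near $x=0$ is fine and is indeed why the per-block limits hold.
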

 \begin{proof}
See Appendix \ref{sec:shared_Fagin}.
\end{proof}
{\em Remark.} This extends Fagin's results to include any asymptotic popularity CDF $F$ that is continuously differentiable in~$(0,1)$ except at a countable number of points.

To help the reader with notation, a glossary of the main symbols used in this paper is given in Table~\ref{tbl:notation}.

\begin{table}[]
\centering
\caption{Glossary of notations.}
\hspace*{-0.5cm}
\begin{tabular}{ | c | l | }
\hline
$S_k$ & set of files served by content provider $k$ \\
$p_i$ & probability that file $i$ is requested \\
$\lambda_{k,i}$ & request rate for file $i$ of CP $k$ \\
$\lambda_k$ & total request rate for CP $k$ contents \\
$F(\cdot)$ & file popularity CDF \\
$h_k$ & hit rate of CP $k$ \\
$C_p$ & capacity of partition $p$ \\
$n$ & number of files \\
$\beta$ & normalized capacity \\
$\mu$ & limiting miss probability \\
\hline
\end{tabular}
\label{tbl:notation}
\end{table}

\section{Cache Resource Allocation among Content Providers}
\label{sec:problem}
In this section, we formulate cache management as a utility maximization problem. We introduce two formulations, one for the case where content providers serve distinct contents, and another one for the case where some contents are served by multiple providers.

\subsection{Content Providers with Distinct Objects}
Consider the case of $K$ providers with $n_k = b_kn$ contents each where $b_k, n \in \N$ and ${k=1,\ldots ,K}$. Also, let $B=\sum_{k=1}^K{b_k}$ Assume that requests to CP $k$ is characterized by a Poisson process with rate $\lambda_k$.

We ask the question whether the cache should be shared or partitioned between CPs under the LRU policy.  It is easy to construct cases where sharing the cache is beneficial.  However these arise when the cache size and the number of contents per CP are small.  Evidence suggests that partitioning provides a larger aggregate utility than sharing as cache size and number of contents grow. In fact, the following  theorem shows that asymptotically, under the assumptions of Theorem 1, in the limit as $n\rightarrow \infty$, the sum of utilities under LRU when the cache is partitioned, is at least as large as it is under LRU when the CPs share the cache. To do this, we formulate the following optimization problem: namely to partition the cache among the providers so as to maximize the sum of utilities:
\begin{align}
\label{eq:Up}
\max_{\beta_k} \,& U^{(p)}:= \sum_{k=1}^{K} U_k(\lambda_k (1-\mu_k(\beta_k))) \\
\text{s.t.} \hspace*{0.05in}  & \beta =  \sum_{k=1}^K \frac{b_k}{B}\beta_k, \notag\\
 & \beta_k  \ge 0, \quad k=1, 2, \ldots, K. \notag
\end{align}
Observe that  $\lambda_k (1-\mu_k(\beta_k))$ in (\ref{eq:Up}) is the hit rate of documents of CP $k$, where $\mu_k(\beta_k)$ is given by (\ref{def-mup}).

Here, $\mu_k$ is the asymptotic miss probability for content served by CP $k$, $\beta$ is the cache size constraint expressed in terms of the fraction of the aggregate content that can be cached, $b_k/B$ is the fraction of content belonging to provider~$k$, and $\beta_k$ is the fraction of CP $k$ content that is permitted in CP $k$'s partition. The direct dependence of $\beta_k$ on $\mu_k$ is difficult to capture.  Hence, we use \eqref{eq:beta} and \eqref{eq:mu}, to transform the above problem into:
\begin{align}
\max_{\tau_k} \, & U^{(p)}:= \sum_{k=1}^{K} U_k\Bigl(\lambda_k\bigl(1- \int_0^1 e^{-F'(x) \tau_k}dx \bigr)\Bigr) \\
\text{s.t.} \, & \beta =  1 - \sum_{k=1}^K \frac{b_k}{B}\int_0^1 e^{-F'_k(x)\tau_k}dx, \notag\\
 & \tau_k  \ge 0, \quad k=1, 2, \ldots, K. \notag
\end{align}

\begin{theorem}
\label{thm:distinct_partition}
Assume $K$ providers with popularity distributions constructed from 
distributions $F_1, \dotsc ,F_K$ using \eqref{eq:popularity-k} with number of contents $b_k n$ and request rates $\lambda_k$, ${k=1,\dotsc ,K}$ sharing a cache of size $C^{(n)}$ such that ${C^{(n)}/n=\beta}$. Then, as $n \rightarrow \infty$ the sum of utilities under partitioning is at least as large as that under sharing.
\end{theorem}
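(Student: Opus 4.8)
The plan is to show that \emph{sharing is a feasible point of the partitioning problem} \eqref{eq:Up}, so that the optimal partitioned utility, being a maximum over a configuration set that contains this point, cannot fall below the sharing utility. The entire argument is carried out at the level of the limiting ($n\to\infty$) functionals; the passage from the finite-$n$ LRU quantities to these functionals is supplied by Theorem~\ref{th:shared_Fagin} on the sharing side and by Fagin's single-cache result \eqref{eq:beta}--\eqref{eq:mu}, applied partition by partition, on the partitioning side.

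First I would read off the per-provider hit rates under sharing. Since $a_k=\lambda_k/\lambda$ is the fraction of requests addressed to CP~$k$, the decomposition in \eqref{eq:mu-shared} identifies the limiting miss probability \emph{seen by} CP~$k$ as $\mu_k^{(s)}=\int_0^1 F'_k(x)\,e^{-a_k F'_k(x)\tau_0 B/b_k}\,dx$, where $\tau_0$ is the unique root of \eqref{eq:beta-shared} for the prescribed cache size $\beta^{(s)}=\beta$. Accordingly the limiting sharing utility is $U^{(s)}=\sum_{k=1}^K U_k\bigl(\lambda_k(1-\mu_k^{(s)})\bigr)$.

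Next I would produce the matching partition. Define the induced characteristic times $\widehat\tau_k:=a_k\tau_0 B/b_k$ and the induced occupancies $\widehat\beta_k:=\int_0^1\bigl(1-e^{-F'_k(x)\widehat\tau_k}\bigr)\,dx$. Two facts must be checked. \textbf{(i) Feasibility:} summing the occupancies and invoking \eqref{eq:beta-shared} gives $\sum_k (b_k/B)\widehat\beta_k = 1-\sum_k (b_k/B)\int_0^1 e^{-a_k F'_k(x)\tau_0 B/b_k}dx=\beta^{(s)}=\beta$, using $\sum_k b_k/B=1$, so $\{\widehat\beta_k\}$ satisfies exactly the capacity constraint of \eqref{eq:Up}. \textbf{(ii) Utility match:} a dedicated partition of normalized size $\widehat\beta_k$ has, by \eqref{eq:beta}, a characteristic time $\tau_k$ solving $\widehat\beta_k=\int_0^1(1-e^{-F'_k(x)\tau_k})dx$; since $\tau\mapsto\int_0^1(1-e^{-F'_k(x)\tau})dx$ is strictly increasing this forces $\tau_k=\widehat\tau_k$, whence by \eqref{def-mup} the partition's miss probability equals $\mu_k^{(s)}$ and its hit rate equals $\lambda_k(1-\mu_k^{(s)})$. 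Thus the feasible point $\beta_k=\widehat\beta_k$ attains objective value exactly $U^{(s)}$ in \eqref{eq:Up}.

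Combining these, the optimal partitioned utility $U^{(p),*}=\max U^{(p)}$ is at least the value at this feasible point, i.e. $U^{(p),*}\ge U^{(s)}$; letting $n\to\infty$ and invoking the two convergence results above yields the theorem. The only real content lies in the two checks of the previous paragraph -- the capacity bookkeeping and, above all, the identification of the per-provider characteristic time $\widehat\tau_k$ hidden inside the aggregate $\mu^{(s)}$. I would note that neither monotonicity nor concavity of the $U_k$ is needed for this direction: the inequality is purely a consequence of partitioning optimizing over a larger configuration set that contains the sharing configuration as a special case. The one step demanding genuine care is the interchange of the $n\to\infty$ limit with the maximization over $\{\beta_k\}$ on the partitioning side, which I would justify by continuity of the limiting functional in $(\tau_1,\dots,\tau_K)$ together with the uniform convergence guaranteed by Fagin's theorem.
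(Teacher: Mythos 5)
Your proposal is correct and takes essentially the same route as the paper: the paper's proof also sets the per-partition characteristic times to $\tau_k = a_k B \tau_0 / b_k$, which makes the sharing configuration a feasible point of the partitioning problem \eqref{eq:Up} with $U^{(p)} = U^{(s)}$, whence the optimum over partitions is at least $U^{(s)}$. Your extra bookkeeping --- verifying feasibility through \eqref{eq:beta-shared}, pinning down the per-provider characteristic time hidden in \eqref{eq:mu-shared}, and noting that no assumptions on the $U_k$ are needed for this direction --- simply makes explicit what the paper's terser proof leaves implicit.
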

\begin{proof}
The sum of utilites for the shared case is
\[
U^{(s)} = \sum_{k=1}^{K} U_k\Bigl(\lambda_k \bigl(1-\int_0^1 e^{-F'(x) \tau_0}dx \bigr)\Bigr)
\]
where $\tau_0$ is the unique solution to 
\[
  \beta  = 1 - \sum_{k=1}^K \frac{b_k}{B}\int_0^1 e^{-F'_k(x)\tau_0}dx.
\]
When we set $\tau_k = a_k B \tau_0 / b_k$ in $U^{(p)}$ then $U^{(p)} = U^{(s)}$, proving the theorem.
\end{proof}

Based on the above theorem, we focus solely on partitioned caches and use the CT approximation to formulate  the utility--maximizing resource allocation problem for content providers with distinct files as follows:
\begin{align}
\label{eq:opt}
\text{maximize} \quad &\sum_{k=1}^{K}{U_k\Big(h_k(C_k)\Big)} \\
\text{such that} \quad &\sum_{k=1}^{K}{C_k} \le C, \notag\\
& C_k \ge 0, \quad k=1, 2, \ldots, K. \notag
\end{align}

In our formulation, we assume that each partition employs LRU for managing the cache content.  Therefore, we can compute the hit rate for content provider $k$ as
\begin{equation}
\label{hkCk}
h_k(C_k) = \lambda_k \sum_{i=1}^{n_k} p_{k,i}\bigl(1-e^{-\lambda_k p_{k,i} T_k(C_k)}\bigr),
\end{equation}
where $T_k(C_k)$ denotes the characteristic time of the partition with size $C_k$ dedicated to content provider $k$. $T_k(C_k)$ is the unique solution to the equation
\begin{equation}
\label{Ck}
C_k = \sum_{i=1}^{n_k} (1 - e^{-\lambda_k p_{k,i} T_k}).
\end{equation}
The following theorem establishes that resource allocation problem~\eqref{eq:opt} has a unique optimal solution:
\begin{theorem}
\label{thrm:opt_exists}
Given strictly concave utility functions, resource allocation problem~\eqref{eq:opt} has a unique optimal solution.
\end{theorem}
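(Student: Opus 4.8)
The plan is to establish that the objective in \eqref{eq:opt} is a strictly concave function of $\mathbf{C}=(C_1,\dots,C_K)$ over a compact convex feasible set. Existence of a maximizer then follows from the Weierstrass theorem, since the feasible set $\{\mathbf{C}:\sum_k C_k\le C,\ C_k\ge 0\}$ is closed, bounded and convex and the objective is continuous; uniqueness follows from strict concavity over a convex set. Because the objective is separable across providers, it suffices to show that each term $U_k(h_k(C_k))$ is strictly concave in the scalar $C_k$.

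The core of the argument is to show that $h_k(C_k)$ defined in \eqref{hkCk} is strictly increasing and concave in $C_k$. I would work through the characteristic-time relation \eqref{Ck}, which implicitly defines $T_k(C_k)$. Dropping the subscript $k$ and writing $\lambda,p_i,n$ for the provider's parameters, differentiating \eqref{Ck} gives $C'(T)=\lambda\sum_i p_i e^{-\lambda p_i T}>0$, so $T(C)$ is well defined and strictly increasing; likewise $h'(T)=\lambda^2\sum_i p_i^2 e^{-\lambda p_i T}>0$. By the chain rule $dh/dC = h'(T)/C'(T) = \lambda\,\phi(T)$ with
\[
\phi(T)=\frac{\sum_i p_i^2 e^{-\lambda p_i T}}{\sum_i p_i e^{-\lambda p_i T}}>0,
\]
so $h$ is strictly increasing in $C$. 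Differentiating once more, $d^2h/dC^2 = \lambda\,\phi'(T)/C'(T)$, so concavity reduces to showing $\phi'(T)\le 0$.

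The main obstacle is this last inequality. Writing $A=\sum_i p_i^2 e^{-\lambda p_i T}$ and $B=\sum_i p_i e^{-\lambda p_i T}$, a direct computation yields $\phi'(T)=\lambda\,[A^2-(\sum_i p_i^3 e^{-\lambda p_i T})B]/B^2$. The bracketed quantity is nonpositive by the Cauchy--Schwarz inequality applied to the vectors $(p_i^{3/2}e^{-\lambda p_i T/2})_i$ and $(p_i^{1/2}e^{-\lambda p_i T/2})_i$, which gives $A^2\le(\sum_i p_i^3 e^{-\lambda p_i T})B$. Hence $\phi'(T)\le 0$ and $h$ is concave in $C$; equality (and thus linearity of $h$) occurs only in the degenerate uniform-popularity case, which does not harm the conclusion below.

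Finally, since $U_k$ is increasing and strictly concave and $h_k$ is strictly increasing and concave, the composition satisfies $(U_k\circ h_k)''=U_k''(h_k)\,(h_k')^2+U_k'(h_k)\,h_k''<0$: the first term is strictly negative because $U_k''<0$ and $h_k'>0$, while the second is nonpositive because $U_k'>0$ and $h_k''\le 0$. Summing over $k$, the objective is strictly concave in $\mathbf{C}$, since a sum of functions each strictly concave in its own coordinate is jointly strictly concave. This yields uniqueness of the optimal solution and completes the proof.
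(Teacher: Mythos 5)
Your proposal is correct, and its overall skeleton matches the paper's: reduce the problem to showing that each $h_k(C_k)$ in \eqref{hkCk} is increasing and concave in $C_k$ via the characteristic-time equation \eqref{Ck}, then compose with the increasing, strictly concave $U_k$ and sum over $k$. Where you genuinely diverge is in the proof of the concavity lemma itself. The paper (Appendix~\ref{sec:hitrate_concavity}) differentiates \eqref{hkCk} and \eqref{Ck} twice with respect to $C_k$, introduces $g_{k,i}=T_k''-\lambda_k p_{k,i}(T_k')^2$, orders the popularities so that the zero-sum identity $\sum_i p_{k,i}e^{-\lambda_k p_{k,i}T_k}g_{k,i}=0$ forces a single sign change in $(g_{k,i})_i$, and then bounds $h_k''$ using $p_{k,i}^2\le p_{k,i}$ --- a rearrangement-style argument that relies on the $p_{k,i}$ being at most $1$. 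You instead write $dh/dC=\lambda\,\phi(T)$ with $\phi=A/B$, $A=\sum_i p_i^2e^{-\lambda p_iT}$, $B=\sum_i p_ie^{-\lambda p_iT}$, and dispose of $\phi'\le 0$ with one application of Cauchy--Schwarz, $A^2\le\bigl(\sum_i p_i^3e^{-\lambda p_iT}\bigr)B$; this is shorter, needs no ordering of the $p_i$ and no bound $p_i\le 1$ (so it applies verbatim to unnormalized weights), and as a bonus pins down the equality case (uniform popularities, where $h$ is linear), which the paper's argument does not surface. You are also slightly more careful about existence: the paper asserts a maximizer exists from convexity of the feasible set alone, whereas you correctly invoke compactness plus continuity (Weierstrass) before using strict concavity for uniqueness, and you justify the step the paper leaves implicit, namely that a separable sum of coordinatewise strictly concave terms is jointly strictly concave. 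The only cosmetic caveat is that your second-derivative computation for $U_k\circ h_k$ assumes $U_k$ twice differentiable; the composition of an increasing strictly concave $U_k$ with an increasing concave $h_k$ is strictly concave without any differentiability, but since the paper itself manipulates $U_k'$ throughout, this is harmless.
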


\begin{proof}
In Appendix~\ref{sec:hitrate_concavity}, we show that $h_k(C_k)$ is an increasing concave function of $C_k$. Since $U_k$ is assumed to be an increasing and strictly concave function of the cache hit rate $h_k$, it follows that $U_k$ is an increasing and strictly concave function of $C_k$. The objective function in~\eqref{eq:opt} is a linear combination of strictly concave functions, and hence is concave. Since the feasible solution set is convex, a unique maximizer called the optimal solution exists.
\end{proof}

Last, it is straightforward to show that partitioning is at least as good as sharing, for finite size systems using the CT approximation.
\subsection{Content Providers with Common Objects}
\begin{figure}[t]
\centering
  	\includegraphics[scale=0.24]{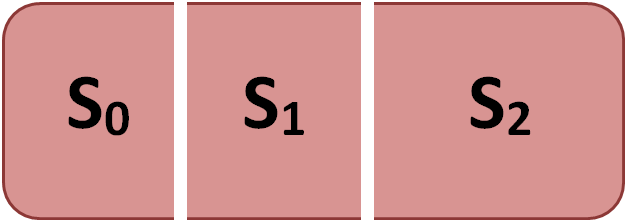}
\caption{Partitioning cache into three slices. One partition for the set of common files, $S_0$, and two other partitions, one for the remaining files from each content provider, $S_k$.}
    \label{fig:partition_shared}
\end{figure}
Here, we first assume there are only two content providers in the network and then consider the general case. There are three sets of content, $S_0$ of size $n_0$ served by both providers, and $S_1$ and $S_2$, sizes $n_1$ and $n_2$ served separately by each of the providers.  Requests are made to $S_0$ at rate $\lambda_{0,k}$ from provider $k$ and to $S_k$ at rate $\lambda_k$.  Given a request is made to $S_0$ from provider $k$, it is for content $i$ with probability $p_{0,k,i}$.  Similarly, if the request is for content in $S_k$, $k=1,2$, it is for content $i$ with probability $p_{k,i}$.

We have two conceivable cases for the files in $S_0$: 1)~each content provider needs to maintain its own copy of the content, \eg\ due to security reasons, or 2)~one copy can be kept in cache to serve requests to either of the content providers.
The first case can be treated as if there is no common content between the two content providers, and hence can be cast as problem~\eqref{eq:opt}.
For the second case, we consider three strategies for managing the cache:
\begin{itemize}
\item {\bf Strategy~1 (S1):} sharing the whole cache as one large partition,
\item {\bf Strategy~2 (S2):} partitioning into two dedicated slices, one for each CP,
\item {\bf Strategy~3 (S3):} partitioning into three slices, one shared partition for the set of common contents, and two other partitions for the remaining files of each CP, as shown in Figure~\ref{fig:partition_shared}.
\end{itemize}
The following theorem states that \stra{3} performs at least as well as \stra{1} in an asymptotic sense.
\begin{theorem}  \label{th:3over1} \label{THM:3OVER1}
Assume that we have two providers with a set of shared files $S_0$, and sets of non-shared files , $S_1,S_2$ with numbers of files  $n_k = b_k n$, $b_k,n \in \N$, and $k=0,1,2$. Assume that requests to these sets occur with rates $\lambda_{0,k}$ and $\lambda_k$ and content popularities are described by asymptotic popularity distributions $F_{0,k}$, and $F_k$ . Construct the sequence of popularity probabilities $\{p_{k,i}^{(n)}\}$, $n=1, \dotsc$ similar to \eqref{eq:popularity-k} and cache sizes $C^{(n)}$ such that $C^{(n)}/n = \beta$.  Then, the asymptotic aggregate LRU miss probability is at least as small under \stra{3} as under \stra{1}.
\end{theorem}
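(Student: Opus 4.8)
The plan is to reduce the three-set configuration to the distinct-object setting of Theorem~\ref{thm:distinct_partition} by treating the common content $S_0$ as a single effective provider, and then to reuse that theorem's mimicking argument. The key observation is that LRU, and hence the characteristic-time recursion \eqref{Ck}, depends only on which file is requested and when, not on which provider issued the request. Therefore the two Poisson streams reaching a file $i\in S_0$ may be merged into one Poisson stream of rate $\lambda_{0,1}p_{0,1,i}+\lambda_{0,2}p_{0,2,i}$. Setting $\lambda_0:=\lambda_{0,1}+\lambda_{0,2}$ and $p_{0,i}:=(\lambda_{0,1}p_{0,1,i}+\lambda_{0,2}p_{0,2,i})/\lambda_0$ makes the merged stream a bona fide provider with rate $\lambda_0$ and a proper popularity vector, whose asymptotic CDF is the rate-weighted mixture $F_0:=(\lambda_{0,1}F_{0,1}+\lambda_{0,2}F_{0,2})/\lambda_0$. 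Being a mixture of two uniformly differentiable CDFs, $F_0$ is itself continuously differentiable off a countable set, so it meets the hypotheses of Theorem~\ref{th:shared_Fagin}. This merging is an exact identity for the LRU dynamics, and it describes the common slice of \stra{3} just as well as the common content living inside the shared cache of \stra{1}.

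With this reduction, \stra{1} is precisely a cache shared among the three effective providers $\{0,1,2\}$ with distributions $F_0,F_1,F_2$, rates $\lambda_0,\lambda_1,\lambda_2$, and content counts $b_0n,b_1n,b_2n$, while \stra{3} is precisely the partitioned cache for the same three providers. First I would invoke Theorem~\ref{th:shared_Fagin} to express the asymptotic aggregate LRU miss probability under \stra{1} through a single characteristic time $\tau_0$ that solves \eqref{eq:beta-shared}, with the per-provider contributions given by \eqref{eq:mu-shared}. I would then write the corresponding partitioned miss probability for \stra{3}, in which the three slices carry independent characteristic times and the slice sizes are free to be chosen subject to the overall budget $\beta$.

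The heart of the argument is the mimicking step from the proof of Theorem~\ref{thm:distinct_partition}. At the level of the finite-$n$ approximation the idea is transparent: if $T^\ast$ is the characteristic time of the shared cache under \stra{1}, then (writing $r_i$ for the aggregate request rate of file $i$) assigning slice $p\in\{0,1,2\}$ the size $C_p:=\sum_{i\in S_p}(1-e^{-r_iT^\ast})$ makes each slice inherit the same characteristic time $T^\ast$, and these sizes sum to $C$ exactly by the definition of $T^\ast$. Every file then keeps the per-file hit probability $1-e^{-r_iT^\ast}$ it had under sharing, so this particular \stra{3} allocation reproduces the \stra{1} miss rate term by term. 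In the continuum this corresponds to the scaling $\tau_k=a_kB\tau_0/b_k$ (with $a_k=\lambda_k/\lambda$ and $\lambda=\lambda_0+\lambda_1+\lambda_2$) used in Theorem~\ref{thm:distinct_partition}, under which the exponents $F'_k(x)\tau_k$ coincide with those in \eqref{eq:mu-shared} and the allocation satisfies \eqref{eq:beta-shared}; Theorem~\ref{th:shared_Fagin} then certifies that these continuum values are the true asymptotic LRU miss probabilities. Since \stra{3} is free to optimize its slice sizes whereas \stra{1} is not, the minimal \stra{3} miss probability is at most the value attained at this feasible allocation, which equals that of \stra{1}; this is the claimed inequality.

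The step I expect to be the main obstacle is not the mimicking---that is essentially Theorem~\ref{thm:distinct_partition}---but justifying the reduction rigorously. One must verify that the extended Fagin limit of Theorem~\ref{th:shared_Fagin} genuinely applies to the mixture popularity $F_0$, in particular that $F_0$ inherits the uniform differentiability (off a countable set) required there, and that the aggregate miss probability, which weights a miss on common content $i$ by the combined rate $\lambda_0 p_{0,i}$, is exactly the quantity produced by the reduced three-provider formula. Once the effective provider $0$ is shown to satisfy the hypotheses of Theorem~\ref{th:shared_Fagin}, the remainder proceeds exactly as in the distinct-object case.
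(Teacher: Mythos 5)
Your proposal is correct and follows essentially the same route as the paper: the paper proves this theorem by the same mimicking argument as Theorem~\ref{thm:distinct_partition} (choosing slice sizes so that each of the three partitions inherits the shared cache's characteristic time, via $\tau_k = a_k B \tau_0/b_k$, so that the optimal \stra{3} allocation can only do at least as well as \stra{1}). Your explicit reduction of the common content to a single effective provider with merged Poisson rate $\lambda_0=\lambda_{0,1}+\lambda_{0,2}$ and mixture popularity $F_0=(\lambda_{0,1}F_{0,1}+\lambda_{0,2}F_{0,2})/\lambda_0$ is exactly the step the paper leaves implicit in saying the proof is ``similar,'' and your verification that $F_0$ satisfies the hypotheses of Theorem~\ref{th:shared_Fagin} makes that step rigorous.
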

The proof is similar to the proof of Theorem~\ref{thm:distinct_partition}.

Neither \stra{2} nor \stra{3} outperforms the other for all problem instances, even asymptotically.  However, we present a class of workloads for which asymptotically \stra{3} outperforms \stra{2} and then follow it with an example where \stra{2} outperforms~\stra{3}.

Consider the following workload where the asymptotic popularity distributions of requests to the two providers for the shared content are identical, $F_{0,1} = F_{0,2}$.
\begin{theorem}\label{th:3over2} \label{THM:3OVER2}
Assume that we have two providers with a set of shared files $S_0$ and sets of non-shared files, $S_1,S_2$ with numbers of files $n_k = b_k n$, $b_k \in \N$, $n=1,\dotsc$, and $k=1,2,3$. Assume that threquests are described by Poisson processes with rates $\lambda_{0,k}$ and $\lambda_k$, $k=1,2$, and content popularities are described by asymptotic popularity distributions $F_{0,1} = F_{0,2}$, and $F_1, F_2$. Construct the sequence of popularity probabilities $\{p_{k,i}^{(n)}\}$, $n=1, \dotsc$ similar to \eqref{eq:popularity-k} and cache sizes $C^{(n)}$ such that $C^{(n)}/n = \beta$.  Then the asymptotic aggregate hit probability under LRU is at least as large under \stra{3} as under~\stra{2}.
\end{theorem}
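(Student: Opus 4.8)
The plan is to reduce the comparison to the asymptotic per-group miss rates supplied by Theorem~\ref{th:shared_Fagin} and then, in the spirit of the proof of Theorem~\ref{thm:distinct_partition}, to exhibit for every sizing under \stra{2} a feasible sizing under \stra{3} whose aggregate miss rate is no larger. Since the total request rate $\lambda_{0,1}+\lambda_{0,2}+\lambda_1+\lambda_2$ is the same under both strategies, comparing hit probabilities is equivalent to comparing total miss rates. First I would write down the two configurations. Under \stra{2}, provider~$1$'s slice is a cache shared by two content groups---the common files $S_0$ requested at rate $\lambda_{0,1}$ with popularity $F_{0,1}=F_0$, and the private files $S_1$ at rate $\lambda_1$ with popularity $F_1$---and likewise for provider~$2$. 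By Theorem~\ref{th:shared_Fagin}, as $n\to\infty$ each slice's occupancy and miss rate decompose additively over its groups, each term evaluated at that slice's common normalized characteristic time $\phi_1$ (resp.\ $\phi_2$). Under \stra{3}, because $F_{0,1}=F_{0,2}=F_0$, the common files form a single group accessed at the combined rate $\lambda_{0,1}+\lambda_{0,2}$ with popularity $F_0$ in its own slice, while $S_1$ and $S_2$ occupy two further slices.

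The key structural observation concerns the $S_0$ content. For a slice holding $S_0$ (with $b_0 n$ files and popularity $F_0$) accessed at rate $\lambda$ and run with normalized characteristic time $\theta=T/n$, the asymptotic occupancy $b_0\int_0^1(1-e^{-\lambda F_0'(x)\theta/b_0})\,dx$ and the asymptotic miss rate $\lambda\int_0^1 F_0'(x)e^{-\lambda F_0'(x)\theta/b_0}\,dx$ depend on $\lambda$ and $\theta$ only through the product $s:=\lambda\theta$. Writing $\Gamma(s)$ for the occupancy and $\lambda\,\Phi(s)$ for the miss rate, both $\Gamma$ and $\Phi$ are functions of $s$ alone (they involve $F_0,b_0$ but neither $\lambda$ nor the strategy), with $\Gamma$ increasing from $0$ to $b_0$ and $\Phi$ decreasing from $1$ to $0$. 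Hence the per-rate miss factor as a function of occupancy, $\psi:=\Phi\circ\Gamma^{-1}$, is a single decreasing curve shared by every slice that stores $S_0$, regardless of the access rate charged to it.

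Now I would perform the domination. Take any sizing under \stra{2}; let $c_1=\Gamma(\lambda_{0,1}\phi_1)$ and $c_2=\Gamma(\lambda_{0,2}\phi_2)$ be the occupancies that $S_0$ receives inside the two slices, so the \stra{2} contribution of the common content is $\lambda_{0,1}\psi(c_1)+\lambda_{0,2}\psi(c_2)$. Construct the \stra{3} sizing that keeps the $S_1$ and $S_2$ slices exactly as in \stra{2} (identical occupancy and miss rate, via $\theta_1=\phi_1,\ \theta_2=\phi_2$) and assigns the shared slice occupancy $c_0=\min(c_1+c_2,\,b_0)$. The total cache used is unchanged (any slack when $c_1+c_2>b_0$ is handed to the private slices, which only lowers their miss rate), so the configuration is feasible. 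For the common content the \stra{3} miss rate is $(\lambda_{0,1}+\lambda_{0,2})\psi(c_0)$, and since $\psi$ is decreasing and $c_0\ge\max(c_1,c_2)$,
\[
(\lambda_{0,1}+\lambda_{0,2})\,\psi(c_0)\;\le\;\lambda_{0,1}\psi(c_1)+\lambda_{0,2}\psi(c_2),
\]
so \stra{3}'s aggregate miss rate does not exceed \stra{2}'s. Applying this to the miss-minimizing \stra{2} sizing shows the optimal \stra{3} value is at least as good, which is the claim.

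I expect the main obstacle to be the rigorous justification of the additive, per-group asymptotics and of the universal curve $\psi$ in the $n\to\infty$ limit; both rest on Theorem~\ref{th:shared_Fagin}, so the work is in verifying its hypotheses apply slice by slice---in particular that decomposing a shared slice into its $S_0$-part and $S_k$-part is legitimate in the limit. It is worth emphasizing that the argument needs only monotonicity of $\psi$, not convexity, which keeps it short; and that the hypothesis $F_{0,1}=F_{0,2}$ is exactly what lets the common content in \stra{3} be a single group with a well-defined popularity $F_0$ and combined rate. Without it the two rate-components see different popularity profiles, the curve $\psi$ is no longer shared, and indeed the example presented next shows that \stra{3} can then be beaten by \stra{2}.
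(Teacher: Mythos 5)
Your proposal is correct and follows essentially the same route as the paper's proof in Appendix~\ref{sec:3over2}: both keep the non-shared slices exactly as under \stra{2} and then exploit the fact that, because $F_{0,1}=F_{0,2}$, the shared content's asymptotic miss probability is a single decreasing function of the product of request rate and characteristic time (equivalently, of its occupancy), so that serving the combined shared-content request stream from one slice whose occupancy dominates each of the two \stra{2} occupancies can only reduce the aggregate miss rate. The differences are cosmetic: you parameterize by occupancy via $\psi=\Phi\circ\Gamma^{-1}$ and use $c_0\ge\max(c_1,c_2)$ directly, whereas the paper works with a rescaled characteristic time and bounds the weighted average of the two \stra{2} shared-content miss probabilities by its smaller term before noting that the extra storage $\beta_-$ only helps; you also make explicit the cap $c_0=\min(c_1+c_2,b_0)$ that the paper leaves implicit.
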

 The proof is found in Appendix~\ref{sec:3over2}

Below is an example where \stra{2} outperforms \stra{3}.  The asymptotic popularity distributions for the shared content are given by
\[
F_{0,1}(x) = 
\begin{cases}
2x/11 & 0<x\le 1/2 \\
(20x -9)/11 & 1/2<x<1
\end{cases}
\]
and 
\[
F_{0,2}(x) = 
\begin{cases}
300x/151 & 0<x\le 1/2 \\
(2x +149)/151 & 1/2<x<1
\end{cases}
\]
with request rates $\lambda_{0,1} = 1.1$ and $\lambda_{0,2} = 15.1$. The asymptotic popularities of the non-shared contents are ${F_1(x) =F_2(x) = x}$ with request rates $\lambda_1 = 20$ and $\lambda_2 = 30$.  Last, there are equal numbers of content in each of these sets, $n_0=n_1=n_2$.  If we set $\beta = 2/3$, then the aggregate hit probability under \stra{3} with optimal partitioning is .804, which is slightly lower than the aggregate hit probability, .816, under \stra{2} with optimal partitioning. 

The above examples show that the workloads of the content providers can affect which strategy is optimal. However, we argue that partitioning into three slices should provide the best performance in most practical situations, where content providers have similar popularity patterns for the contents they commonly serve. This is unlike the second example where the two content providers have disparate rates for the common contents they serve. In Section~\ref{sec:simulation}, we will show that even if two content providers have dissimilar request rates for their common contents, partitioning into three slices does better.



Based on the above argument for the performance of partitioning into three slices in the case of two content providers, for $K$ content providers with common files, one should create a partition for each set of files that are served by a number of content providers. A procedure for creating the optimal set of partitions $\mathcal{P}$ with the files routed to each partition is given in Algorithm~\ref{alg:partition}. Algorithm~\ref{alg:partition} runs in $O(|S|^2)$ where $S$ denotes the set of all files served by all content providers. Note that the number of partitions can grow exponentially with the number of content providers.

\begin{algorithm}[t]
\caption{Partitioning a Cache serving $K$ content providers with possibility of common files among some content providers.}
\label{alg:partition}
\begin{algorithmic}[1]
	\Statex
		\State $S\gets S_1 \cup S_2 \cup \ldots \cup S_K$.
		\State $\mathcal{P}\gets \O$.
		\For{$f\in S$}
			\State $M_f\gets \{k:$ Content provider $k$ serves files $f\}$.
			\If{Exists $(V,M)\in P$ such that $M = M_f$}
				\State $V\gets V\cup \{f\}$.
			\Else
				\State $\mathcal{P}\gets \mathcal{P}\cup \{(\{f\}, M_f)\}$.
			\EndIf
		\EndFor
\end{algorithmic}
\end{algorithm}

Once the set of partitions $\mathcal{P}$ and the set of files corresponding to each partition is determined, the optimal partition sizes can be computed through the following optimization problem:
\begin{align}
\label{eq:opt3}
\text{maximize} \quad &\sum_k U_k(h_k) \\
\text{such that} \quad &\sum_{p=1}^{|\mathcal{P}|}{C_p} \le C \notag\\
& C_p \ge 0,  \quad p=1, 2, \ldots, |\mathcal{P}|, \notag
\end{align}
where hit rate for CP $k$ is computed as
\[h_k = \sum_{p=1}^{|\mathcal{P}|}\lambda_k \sum_{i\in V_p} p_{k,i}\bigl(1-e^{-\lambda_i p_{k,i} T_p(C_p)}\bigr)
\]
where  $V_p$ denotes the set of files requested from partition $p$, and $\lambda_i\triangleq \sum_k{\lambda_{k,i}}$ denotes the aggregate request rate for content $i$ through all content providers, and $T_p$ denotes the characteristic time of partition $p$.

\begin{theorem}
\label{THM:UNIQUE}
Given strictly concave utility functions, resource allocation problem~\eqref{eq:opt3} has a unique optimal solution.
\end{theorem}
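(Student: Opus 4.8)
The plan is to follow the template of the proof of Theorem~\ref{thrm:opt_exists}: reduce the claim to showing that the objective $\sum_k U_k(h_k(\mathbf{C}))$ is a strictly concave function of the vector $\mathbf{C}=(C_1,\dots,C_{|\mathcal{P}|})$ over the feasible set, and then invoke convexity of that set. The constraint $\sum_p C_p\le C$ together with $C_p\ge 0$ is linear, so the feasible region is convex and compact; a maximizer therefore exists, and uniqueness will follow once strict concavity of the objective is established. Hence all of the work is in the concavity of the objective. The genuinely new difficulty relative to Theorem~\ref{thrm:opt_exists} is that a single partition may now serve several providers, so the objective is \emph{not} separable across the decision variables $C_p$.

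First I would write $h_k(\mathbf{C})=\sum_{p} g_{k,p}(C_p)$, where $g_{k,p}(C_p)=\sum_{i\in V_p}\lambda_{k,i}\bigl(1-e^{-\lambda_i T_p(C_p)}\bigr)$ is the hit rate that provider $k$ draws from partition $p$ (the rate in the exponent being the aggregate $\lambda_i=\sum_k\lambda_{k,i}$ that drives the characteristic time), and $T_p(C_p)$ is the unique, strictly increasing solution of the characteristic-time equation $C_p=\sum_{i\in V_p}(1-e^{-\lambda_i T_p})$ for partition $p$. Each $g_{k,p}$ depends on the single coordinate $C_p$. Thus, provided every $g_{k,p}$ is increasing and concave in $C_p$, each $h_k$ is an increasing concave function of $\mathbf{C}$ (a sum of concave functions of disjoint coordinates); composing with the increasing, strictly concave $U_k$ preserves concavity, and summing over $k$ keeps the objective concave. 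This is exactly the chain of implications used for Theorem~\ref{thrm:opt_exists}.

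The main obstacle is the concavity of the per-provider contribution $g_{k,p}(C_p)$ through a \emph{shared} partition. For a partition dedicated to a single provider, concavity is precisely the statement proved in Appendix~\ref{sec:hitrate_concavity}, because there the rate $\lambda_i$ governing $T_p$ coincides with the weight $\lambda_{k,i}$ appearing in the hit rate. For a shared partition this coincidence fails: $T_p$ is fixed by the aggregate rates $\lambda_i$, whereas $g_{k,p}$ weights file $i$ by provider $k$'s own rate $\lambda_{k,i}\le\lambda_i$. Differentiating $g_{k,p}$ as a function of $C_p$ through $T_p$, using $dC_p/dT_p=\sum_{i\in V_p}\lambda_i e^{-\lambda_i T_p}$, one finds after symmetrizing that the sign of $g_{k,p}''$ is that of $\sum_{i,j\in V_p}\lambda_i\lambda_j e^{-(\lambda_i+\lambda_j)T_p}(\lambda_j-\lambda_i)(\lambda_{k,i}-\lambda_{k,j})$. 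This is nonpositive --- so $g_{k,p}$ is concave --- exactly when a provider's per-file rate is comonotone with the aggregate per-file rate over $V_p$. The cleanest sufficient condition is that the providers sharing a partition request the common files in fixed proportions, $\lambda_{k,i}=\alpha_k\lambda_i$, which is the rate analogue of the aligned-popularity hypothesis $F_{0,1}=F_{0,2}$ of Theorem~\ref{th:3over2}; under it $g_{k,p}$ is increasing and concave and the argument above closes.

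Two remarks complete the plan. The aggregate hit rate $\sum_{k}g_{k,p}(C_p)=\sum_{i\in V_p}\lambda_i\bigl(1-e^{-\lambda_i T_p(C_p)}\bigr)$ through any partition is \emph{always} concave in $C_p$, since its weights then equal the aggregate rates; non-concavity can enter only through the curvature of the individual utilities acting on misaligned per-provider shares, which is consistent with the earlier example where \stra{2} beats \stra{3} under highly disparate common-content rates. Finally, for uniqueness I would note that strict concavity of each $U_k$, together with the strict monotonicity and concavity of $h_k$, makes $U_k(h_k)$ strictly concave along every feasible direction that changes some hit rate $h_k$, while the budget constraint is active and excludes directions changing no hit rate; the maximizer is therefore unique, as claimed.
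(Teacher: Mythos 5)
Your overall route is the paper's own: decompose $h_k(\mathbf{C})=\sum_p h_{kp}(C_p)$ into per-partition hit rates over disjoint coordinates, argue each summand is increasing and concave in $C_p$, compose with the increasing strictly concave $U_k$, and invoke convexity of the feasible set --- this is exactly the argument of Appendix~\ref{sec:proof_obj_shared}. The instructive difference is that the paper never proves the key step: its appendix literally says ``Since $h_{kp}$ is \emph{assumed} to be a concave increasing function of $C_p$,'' and the dedicated-partition proof of Appendix~\ref{sec:hitrate_concavity} does not transfer to shared partitions, because there the characteristic time $T_p$ is driven by the aggregate rates $\lambda_i$ while provider $k$'s hit rate weights file $i$ by $\lambda_{k,i}$. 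You saw this mismatch and did the computation, and it is correct: writing $g_{k,p}'(C_p)=\bigl(\sum_{i\in V_p}\lambda_{k,i}\lambda_i e^{-\lambda_i T_p}\bigr)/\bigl(\sum_{i\in V_p}\lambda_i e^{-\lambda_i T_p}\bigr)$ and differentiating once more, the sign of $g_{k,p}''$ is that of the symmetrized form $\sum_{i,j}\lambda_i\lambda_j e^{-(\lambda_i+\lambda_j)T_p}(\lambda_j-\lambda_i)(\lambda_{k,i}-\lambda_{k,j})$, which is nonpositive precisely when $\lambda_{k,\cdot}$ is comonotone with $\lambda_\cdot$ on $V_p$, e.g.\ under proportional rates $\lambda_{k,i}=\alpha_k\lambda_i$ --- the rate analogue of the hypothesis $F_{0,1}=F_{0,2}$ in Theorem~\ref{th:3over2}.

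Be explicit about the consequence, though: what you have proved is a \emph{conditional} version of the theorem, with an alignment hypothesis the statement does not contain. When per-provider rates over common files are anti-comonotone (exactly the regime of the paper's example where \stra{2} beats \stra{3}), $h_{kp}$ can be locally convex and this proof route collapses; concavity of the objective is simply not available in general. That is a genuine gap relative to the statement as printed --- but it is the paper's gap as much as yours, and your analysis has the merit of surfacing the hidden assumption under which the published one-line appendix argument is actually valid. One further nit on the endgame: the paper's ``concave objective plus convex feasible set implies a unique maximizer'' is itself too quick, since concavity yields existence but not uniqueness, and your repair --- strict concavity of $U_k$ forces every $h_k$ to be constant along any segment of maximizers, then exclude feasible directions that change no hit rate --- has the right shape but the exclusion step is not automatic: if all aggregate rates within a partition coincide, then $h_{kp}$ is \emph{linear} in $C_p$, and flat directions on the budget face can exist in degenerate instances. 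You should either add a genericity assumption or rule such directions out using the structure of the partitions produced by Algorithm~\ref{alg:partition}.
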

\begin{proof}
In Appendix~\ref{sec:proof_obj_shared}, we show that the optimization problem~\eqref{eq:opt3} has a concave objective function. Since the feasible solution set is convex, a unique maximizer exists.
\end{proof}

\subsection{Implications}
\vspace*{1pt}
\noindent
\noindent
{\bf Cache Partitioning: Number of Slices, Management Complexity and
Static Caching.}
In our utility maximization formulations~\eqref{eq:opt} and~\eqref{eq:opt3} and their solution, the cache is only partitioned and allocated per CP for a set of {\em distinct} content objects owned by the CP; a cache slice is allocated and shared among several CPs  only for a set of {\em common} content objects belonging to these CPs. This is justified by cache management complexity considerations, as further partitioning of a slice allocated to a CP to be exclusively utilized by the same CP simply incurs additional management complexity. In addition, we show in Appendix~\ref{sec:proof_partition} that partitioning a cache slice into smaller slices and {\em probabilistically routing} requests to content objects of a CP is sub-optimal.
  
As an alternative to CP-oriented cache allocation and partitioning approach, one could adopt a per-object cache allocation and partitioning approach (regardless of the CP or CPs which own the objects). Under such an approach, it is not hard to show that the optimal {\em per-object} cache allocation strategy that maximizes the overall cache hit rate is
equivalent to  the {\em static caching} policy~\cite{Liu98}:  the cache is only allocated to the $C$ most popular objects among all content providers. Alternatively, such a solution can also be obtained using the same CP-oriented, utility maximization cache
allocation framework where only the most popular content from each provider is cached.

\vspace*{2pt}
\noindent
{\bf Utility Functions and Fairness.}
Different utility functions in problems~\eqref{eq:opt} and~\eqref{eq:opt3} yield different partition sizes for content providers. In this sense, each utility function defines a notion of fairness in allocating storage resources to different content providers. The family of $\alpha$-fair utility functions expressed as
\[U(x) = \left\{ \begin{array}{ll}
 \frac{x^{1-\alpha}-1}{1-\alpha} & \alpha \ge 0, \alpha \neq 1; \\
 & \\
 \log{x} & \alpha = 1,
 \end{array} \right. \]
unifies different notions of fairness in resource allocation~\cite{srikant13}. Some choices of $\alpha$ lead to especially interesting utility functions. Table~\ref{tbl:util} gives a brief summary of these functions. We will use these utilities in Section~\ref{sec:simulation} to understand the effect of particular choices for utility functions, and in evaluating our proposed algorithms.

\begin{table}[]
\centering
\caption{$\alpha$-fair utility functions}
\hspace*{-0.5cm}
\begin{tabular}{ | c | c | c | c |}
\hline
$\alpha$ & $U_k(h_k)$ & implication \\
\hline
  0 & $h_k$ & hit rate \\
  1 & $\log{h_k}$ & proportional fairness \\
  2 & $-1/ h_k$ & potential delay \\
  $\infty$ & $\lim_{\alpha \rightarrow \infty} \frac{h_k^{1-\alpha}-1}{1-\alpha}$ & max-min fairness \\
\hline
\end{tabular}
\label{tbl:util}
\end{table}

\section{Online Algorithms}
\label{sec:online}
In the previous section, we formulated cache partitioning as a convex optimization problem. However, it is not feasible to solve the optimization problem offline and then implement the optimal strategy. Moreover, system parameters can change over time. Therefore, we need algorithms that can implement the optimal strategy and adapt to changes in the system by collecting limited information. In this section, we develop such algorithms.

\subsection{Content Providers with Distinct Contents}
The formulation in~\eqref{eq:opt} assumes a hard constraint on the cache capacity. In some circumstances it may be appropriate for the cache manager to increase the available storage at some cost to provide additional resources for the content providers. One way of doing this is to turn cache storage disks on and off based on demand~\cite{Sundarrajan16}. In this case, the cache capacity constraint can be replaced with a penalty function $P(\cdot)$ denoting the cost for the extra cache storage. Here, $P(\cdot)$ is assumed to be convex and increasing. We can now write the utility and cost driven caching formulation as
\begin{align}
\label{eq:hard}
\text{maximize} \quad &\sum_k{U_k(h_k(C_k))} - P(\sum_k{C_k} - C) \notag\\
\text{such that} \quad &C_k \ge 0, \quad k=1,\ldots, K.
\end{align}

Let $W(\mathbf{C})$ denote the objective function in~\eqref{eq:hard} defined as
\[W(\mathbf{C}) = \sum_k{U_k(h_k(C_k))} - P(\sum_k{C_k} - C).\]
A natural approach to obtaining the maximum value for $W(\mathbf{C})$ is to use a gradient ascent algorithm. The basic idea behind a gradient ascent algorithm is to move the variables $C_k$ in the direction of the gradient,
\begin{align*}
\frac{\partial W}{\partial C_k} &= \frac{\partial U_k}{\partial C_k} - P'(\sum_k{C_k} - C), \\
&= U'_k(h_k)\frac{\partial h_k}{\partial C_k} - P'(\sum_k{C_k} - C).
\end{align*}
Note that since $h_k$ is an increasing function of $C_k$, moving $C_k$ in the direction of the gradient also moves $h_k$ in that direction.

By gradient ascent, partition sizes should be updated according to
\[C_k \gets \max{\Big\{0, C_k + \gamma_k\Big[ U'_k(h_k)\frac{\partial h_k}{\partial C_k} - P'(\sum_k{C_k} - C) \Big]\Big\}},\]
where $\gamma_k$ is a step-size parameter.
\begin{theorem}
\label{THM:GA}
The above gradient ascent algorithm converges to the optimal solution.
\end{theorem}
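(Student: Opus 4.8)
The plan is to recognize the update as a projected gradient ascent on a concave objective over a convex feasible set, and then invoke the standard monotone-improvement argument. First I would establish that $W(\mathbf{C})$ is jointly concave on the feasible region $\{\mathbf{C} : C_k \ge 0\}$. By the argument used for Theorem~\ref{thrm:opt_exists}, each term $U_k(h_k(C_k))$ is increasing and concave in $C_k$, being the composition of the increasing concave $U_k$ with the increasing concave $h_k$ established in Appendix~\ref{sec:hitrate_concavity}. Since $P$ is convex and increasing and $\mathbf{C} \mapsto \sum_k C_k - C$ is affine, the composition $P(\sum_k C_k - C)$ is convex, so its negation is concave. Hence $W$ is a sum of concave functions and is concave, and the nonnegative orthant is closed and convex.

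Second, I would rewrite the update compactly as $\mathbf{C}^{(t+1)} = \Pi_{\ge 0}\big(\mathbf{C}^{(t)} + \gamma\, \nabla W(\mathbf{C}^{(t)})\big)$, where $\Pi_{\ge 0}$ is the coordinatewise projection onto $[0,\infty)$ (the $\max\{0,\cdot\}$ operation) and $\gamma$ is taken common across $k$ for simplicity. The fixed points of this map are exactly the points satisfying the KKT conditions of~\eqref{eq:hard}: for each $k$, either $C_k > 0$ and $\partial W/\partial C_k = 0$, or $C_k = 0$ and $\partial W/\partial C_k \le 0$. Because $W$ is concave, any such KKT point is a global maximizer.

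Third, which is the technical heart, I would show that $\nabla W$ is Lipschitz continuous with some constant $L$ on the region explored by the iterates. This reduces to bounding the second derivatives $\partial^2 h_k/\partial C_k^2$, obtained by implicitly differentiating the characteristic-time relation~\eqref{Ck} twice, together with the bounded second derivatives of $U_k$ and the convex penalty $P$. With $\gamma \le 1/L$, the descent lemma for $L$-smooth functions gives $W(\mathbf{C}^{(t+1)}) \ge W(\mathbf{C}^{(t)}) + c\,\|\mathbf{C}^{(t+1)} - \mathbf{C}^{(t)}\|^2$ for some $c > 0$. Since $W$ is bounded above (hit rates satisfy $h_k \le \lambda_k$, so $\sum_k U_k(h_k) \le \sum_k U_k(\lambda_k)$, while $-P(\sum_k C_k - C) \le -P(-C)$ because $P$ is increasing and $\sum_k C_k - C \ge -C$), the nondecreasing sequence $\{W(\mathbf{C}^{(t)})\}$ converges and the increments satisfy $\|\mathbf{C}^{(t+1)} - \mathbf{C}^{(t)}\| \to 0$.

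Finally, I would pass to the limit: the iterates remain bounded because the penalty eventually dominates once $\sum_k C_k$ grows large, so a convergent subsequence exists, and any limit point is a fixed point of the projected-gradient map, hence a KKT point, hence the global maximizer by concavity. Uniqueness of the maximizer under strict concavity, as in Theorem~\ref{thrm:opt_exists}, then forces the whole sequence to converge to the optimal solution. The main obstacle I anticipate is the third step: obtaining a clean Lipschitz bound on $\nabla W$ given that $h_k$ is only defined implicitly through $T_k(C_k)$, so care is needed to bound $\partial T_k/\partial C_k$ and $\partial^2 T_k/\partial C_k^2$ uniformly. Should a single global constant prove elusive, the fallback is a diminishing step-size schedule with $\sum_t \gamma^{(t)} = \infty$ and $\sum_t (\gamma^{(t)})^2 < \infty$, combined with the a priori bound on the iterates supplied by the penalty term.
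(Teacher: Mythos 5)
Your proposal is correct in substance but takes a genuinely different route from the paper's. The paper (Appendix~\ref{sec:lyapunov}) analyzes the \emph{continuous-time} dynamics $\dot{C_k} = \gamma_k\bigl(U_k'(h_k)\,\partial h_k/\partial C_k - P'(\sum_k C_k - C)\bigr)$ and exhibits $V(\mathbf{C}) = W(\mathbf{C}^*) - W(\mathbf{C})$ as a Lyapunov function: a direct computation gives $\dot{V} \le 0$ as a negative weighted sum of squares, whence convergence of the flow to the unique maximizer. That argument is shorter and needs no smoothness constants, but it idealizes the algorithm as an ODE and silently ignores both the discretization (step size) and the projection $\max\{0,\cdot\}$ at the boundary. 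Your discrete-time projected-gradient analysis is closer to the algorithm as actually stated: you handle the nonnegativity constraint explicitly (fixed points of the projected map are exactly the KKT points), and you obtain convergence via the descent lemma, monotone improvement, boundedness of the iterates through the penalty term, and uniqueness from strict concavity as in Theorem~\ref{thrm:opt_exists}. The price is exactly the obstacle you identify: the Lipschitz bound on $\nabla W$. Be aware that this is not merely bookkeeping over the implicit function $T_k(C_k)$ --- for utilities such as $U_k(h_k) = \log h_k$ or $-1/h_k$, one has $U_k'(h_k) \to \infty$ as $C_k \to 0$ (since $C_k = 0$ forces $T_k = 0$ in \eqref{Ck} and hence $h_k = 0$ in \eqref{hkCk}), so no global Lipschitz constant exists on the nonnegative orthant. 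You must either argue that monotone improvement confines the iterates to the superlevel set $\{\mathbf{C} : W(\mathbf{C}) \ge W(\mathbf{C}^{(0)})\}$, which is compact and bounded away from the problematic boundary because $W \to -\infty$ both as $h_k \to 0$ and as $\sum_k C_k \to \infty$, and take $L$ as the Lipschitz constant on that set, or invoke your diminishing step-size fallback. With that caveat discharged, your argument is complete, and in fact establishes something the paper's Lyapunov computation does not literally prove: convergence of the discrete iteration itself, rather than asymptotic stability of its continuous-time limit.
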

\begin{proof}
Let $\mathbf{C}^*$ denote the optimal solution to (4). We show in Appendix~\ref{sec:lyapunov} that $W(\mathbf{C}^*) - W(\mathbf{C})$ is a Lyapunov function, and the above algorithm converges to the optimal solution.
\end{proof}

\subsubsection{Algorithm Implementation}
\label{sec:impl}
In implementing the gradient ascent algorithm, we restrict ourselves to the case where the total cache size is $C$. Defining ${\eta \triangleq P'(0)}$, we can re-write the gradient ascent algorithm as
\[C_k \gets \max{\Big\{0, C_k + \gamma_k\Big[ U'_k(h_k)\frac{\partial h_k}{\partial C_k} - \eta \Big]\Big\}}.\]
In order to update $C_k$ then, the cache manager needs to estimate $\frac{\partial U_k}{\partial C_k} = U'_k(h_k)\frac{\partial h_k}{\partial C_k}$ by gathering hit rate information for each content provider. Instead of computing $U'_k(h_k)$ and $\partial h_k/\partial C_k$ separately, however, we suggest using
\[\frac{\partial U_k}{\partial C_k} \approx \frac{\Delta U_k}{\Delta C_k} = \frac{U_k(h_k^t) - U_k(h_k^{t-1})}{C_k^t - C_k^{t-1}},\]
where the superscripts $t$ and $t-1$ denote the iteration steps. We then use $\frac{\Delta U_k}{\Delta C_k}$ as an estimate of $U'_k(h_k)\frac{\partial h_k}{\partial C_k}$ to determine the value of $C_k$ at the next iteration.

Moreover, since we impose the constraint that $\sum_k{C_k} = C$, we let $\eta$ take the mean of the $\frac{\Delta U_k}{\Delta C_k}$ values. The algorithm reaches a stable point once the $\frac{\Delta U_k}{\Delta C_k}$s are equal or very close to each other. Algorithm~\ref{alg:online} shows the rules for updating the partition sizes.

\begin{algorithm}[t]
\caption{Online algorithm for updating the partition sizes.}
\label{alg:online}
\begin{algorithmic}[1]
	\Statex
		\State Start with an initial partitioning ${\mathbf{C}^0\gets (C_1 \cup C_2 \cup \ldots \cup C_K)}$.
		\State Estimate hit rates for each partition by counting the number of hit requests ${\mathbf{h}^0\gets (h_1, \ldots, h_K)}$.
		\State Make arbitrary changes to the partition sizes $\mathbf{\Delta}^0$, such that $\sum_{k}{\mathbf{\Delta}^0_k}=0$, ${\mathbf{C}^1\gets \mathbf{C}^0 + \mathbf{\Delta}^0}$.
		\State Estimate the hit rates $\mathbf{h}^1$ for the new partition sizes.
		\State $t\gets 1$.
		\State $\delta^t_k\gets \Big(U_k(h_k^t) - U_k(h_k^{t-1})\Big)/\Big(C_k^t - C_k^{t-1}\Big)$.
		\State $\eta^t \gets \Big(\sum_{k}{\delta^t_k}\Big)/K$.
		\If{$\max_{k}{\{\delta^t_k - \eta^t\}} > \epsilon$}
			\State $\mathbf{\Delta}_k^t = \gamma(\delta^t_k - \eta^t)$.
			\State $\mathbf{C}^{t+1}\gets \mathbf{C}^t + \mathbf{\Delta}^t$.
			\State Estimate hit rates $\mathbf{h}^{t+1}$.
			\State $t\gets t+1$.
			\State \textbf{goto} 6.
		\EndIf
\end{algorithmic}
\end{algorithm}

\subsection{Content Providers with Common Content}
We now focus on the case where some contents can be served by multiple content providers.
Algorithm~\ref{alg:partition} computes the optimal number of partitions for this case. Let $\mathcal{P}$ and ${\mathbf{C} = (C_1,\ldots, C_{|\mathcal{P}|})}$ denote the set of partitions and the vector of partition sizes, respectively. The hit rate for content provider $k$ can be written as
\[h_k(\mathbf{C}) = \sum_{p=1}^{|\mathcal{P}|}{\sum_{i\in V_p}{\lambda_{ik}(1 - e^{-\lambda_iT_p})}},\]
where $V_p$ denotes the set of files requested from partition $p$, and $\lambda_i$ denotes the aggregate request rate at partition $p$ for file $i$.

Similar to~\eqref{eq:hard}, we consider a penalty function for violating the cache size constraint and rewrite the optimization problem in~\eqref{eq:opt3} as
\begin{align}
\label{eq:hard2}
\text{maximize} \quad &\sum_k{U_k(h_k(\mathbf{C}))} - P(\sum_p{C_p} - C) \notag\\
\text{such that} \quad &C_p \ge 0, \quad p=1,\ldots, |\mathcal{P}|.
\end{align}
Let $W(\mathbf{C})$ denote the objective function in the above problem. Taking the derivative of $W$ with respect to $C_p$ yields
\[\frac{\partial W}{\partial C_p} = \sum_k{U'_k(h_k)\frac{\partial h_k}{\partial C_p}} - P'(\sum_p{C_p} - C).\]

Following a similar argument as in the previous section, we can show that a gradient ascent algorithm converges to the optimal solution.

\subsubsection{Algorithm Implementation}
The implementation of the gradient ascent algorithm in this case is similar to the one in Section~\ref{sec:impl}. However, we need to keep track of hit rates for content provider $k$ from all partitions that store its files. This can be done by counting the number of hit requests for each content provider and each partition through a $K\times |\mathcal{P}|$ matrix, as shown in Algorithm~\ref{alg:online_shared}. Also, we propose estimating $\partial W/\partial C_p$ as
\[\frac{\partial W}{\partial C_p} \approx \sum_k{U'_k(h_k)\frac{\Delta h_{kp}}{\Delta C_p}},\]
where $\Delta h_{kp}$ denotes the change in aggregate hit rate for content provider $k$ from partition $p$ resulted from changing the size of partition $p$ by $\Delta C_p$.

\begin{algorithm}[t]
\caption{Online algorithm for updating the partition sizes.}
\label{alg:online_shared}
\begin{algorithmic}[1]
	\Statex
		\State Compute the number of partitions $P$ using Algorithm~\ref{alg:partition}.
		\State Start with an initial partitioning ${\mathbf{C}^0\gets (C_1 \cup C_2 \cup \ldots \cup C_P)}$.
		\State Estimate hit rates for each provider/partition pair ${\mathbf{H}^0\gets \left( \begin{array}{ccc}
h_{11} & \ldots & h_{1P} \\
\vdots & \ddots & \vdots \\
h_{K1} & \ldots & h_{KP} \end{array} \right)}$.
		\State Make arbitrary changes to the partition sizes $\mathbf{\Delta}^0$, such that $\sum_{p}{\mathbf{\Delta}^0_p}=0$, ${\mathbf{C}^1\gets \mathbf{C}^0 + \mathbf{\Delta}^0}$.
		\State Estimate the hit rates $\mathbf{H}^1$ for the new partition sizes.
		\State $t\gets 1$.
		\State $\delta^t_p\gets \sum_k{U'_k(h_k^{t-1})(h_{kp}^{t} - h_{kp}^{t-1})/(C_p^t - C_p^{t-1})}$.
		\State $\eta^t \gets \Big(\sum_{p}{\delta^t_p}\Big)/P$.
		\If{$\max_{p}{\{\delta^t_p - \eta^t\}} > \epsilon$}
			\State $\mathbf{\Delta}_p^t = \gamma(\delta^t_p - \eta^t)$.
			\State $\mathbf{C}^{t+1}\gets \mathbf{C}^t + \mathbf{\Delta}^t$.
			\State Estimate hit rates $\mathbf{H}^{t+1}$.
			\State $t\gets t+1$.
			\State \textbf{goto} 6.
		\EndIf
\end{algorithmic}
\end{algorithm}
\section{Evaluation}
\label{sec:simulation}
In this section, we perform numerical simulations, first to understand the efficacy of cache partitioning on the utility observed by content providers, and second to evaluate the performance of our proposed online algorithms.

For our base case, we consider a cache with capacity ${C=10^4}$. Each partition uses LRU as the cache management policy. We consider two content providers that serve $n_1 = 10^4$ and $n_2 = 2\times 10^4$ contents. Content popularities for the two providers follow Zipf distributions, \ie\ $p_i \propto 1/i^z$, with parameters $z_1 = 0.6$ and $z_2 = 0.8$, respectively. Requests for the files from the two content providers arrive as Poisson processes with aggregate rates $\lambda_1 = 15$ and ${\lambda_2=10}$. The utilities of the two content providers are $U_1(h_1) = w_1\log{h_1}$ and $U_2(h_2) = h_2$. Unless otherwise specified, we let $w_1 = 1$ so that the two content providers are equally important to the 
service provider.

We consider two scenarios here. In the first scenario, the two content providers serve completely separate files. In the second scenario, files ${S_0=\{1,4,7,\ldots,10^4\}}$, are served by both providers. For each scenario the appropriate optimization formulation is chosen.

\subsection{Cache Partitioning}

\begin{figure}[]
\centering
\includegraphics[scale=0.4]{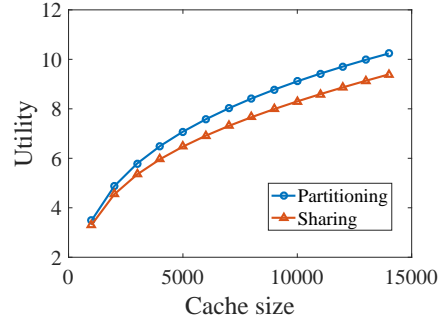} 
 \caption{Efficacy of cache partitioning when content providers serve distinct files.}
    \centering\label{fig:up}
\end{figure}

\begin{figure}[]
\centering
 \begin{subfigure}[b]{0.5\linewidth}
  	\centering\includegraphics[scale=0.3]{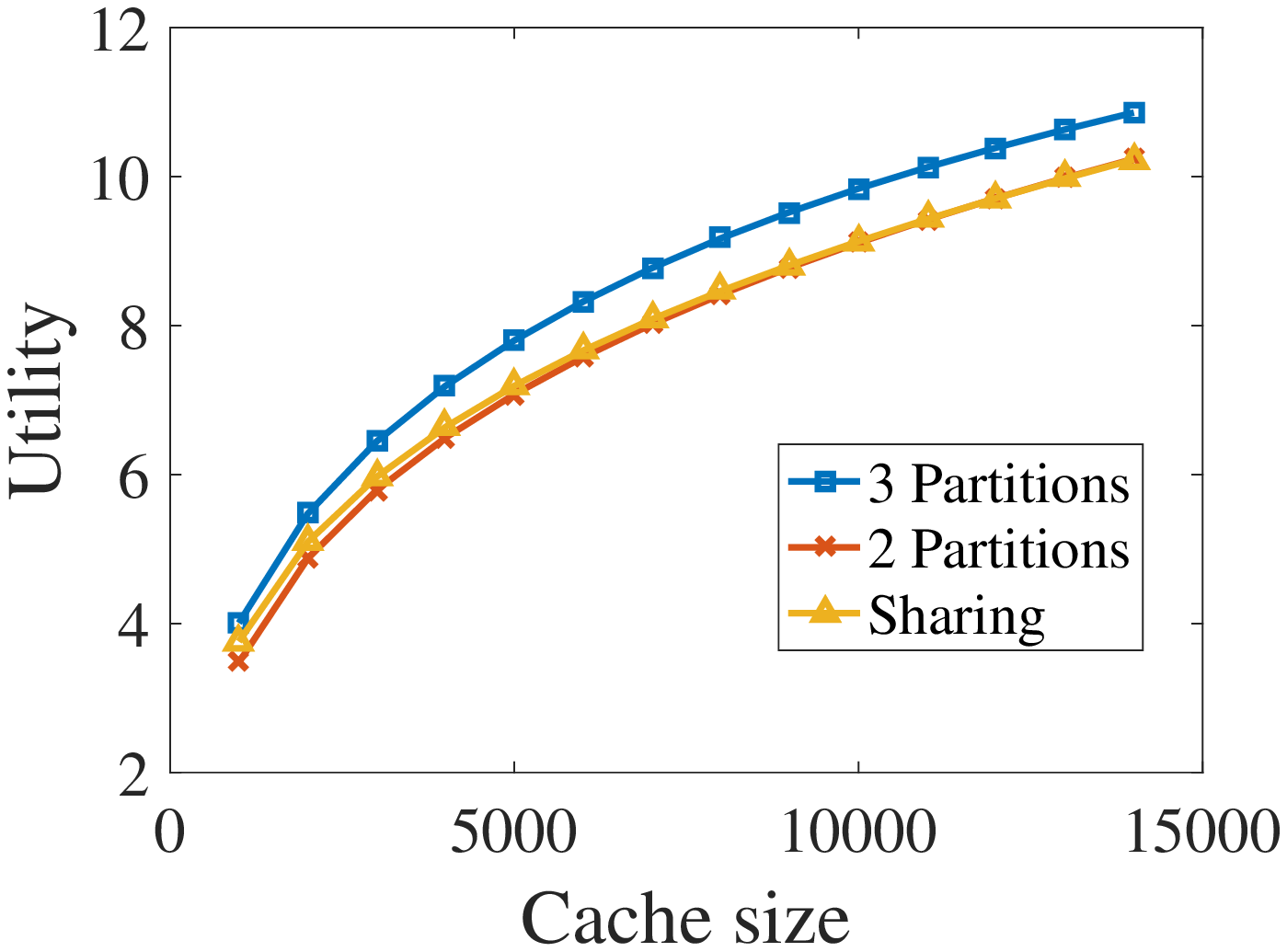} 
  	\caption{}
 \end{subfigure}%
 \begin{subfigure}[b]{0.5\linewidth}
  	\centering\includegraphics[scale=0.3]{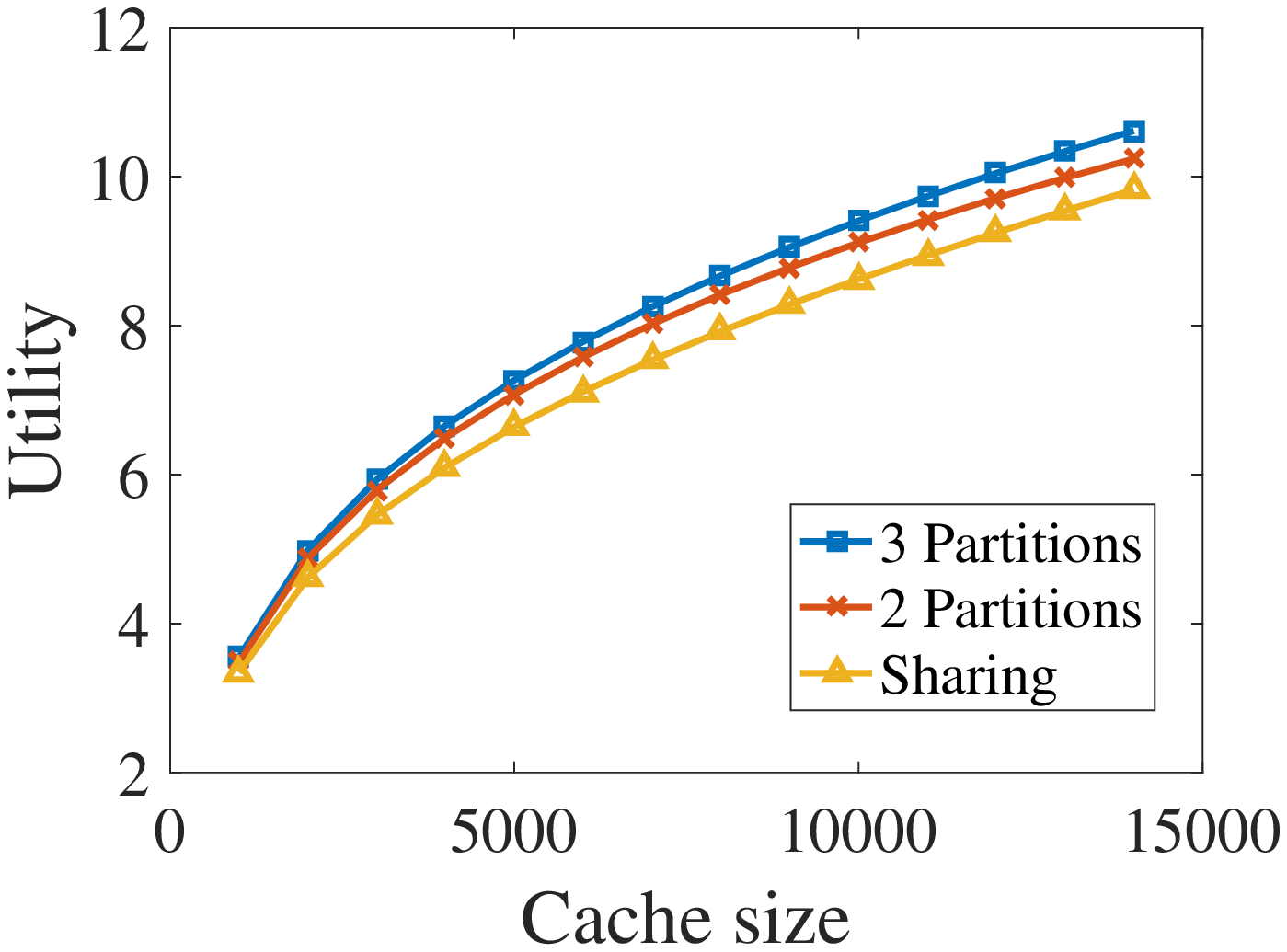}
  	\caption{}
 \end{subfigure}%
 \caption{Efficacy of cache partitioning when some content is served by both content providers. Request rates for the common contents from the two content providers are set to be (a) similar, and (b) dissimilar.}
    \centering\label{fig:up_shared}
\end{figure}

\begin{figure*}[t]
\centering
\begin{subfigure}[b]{\linewidth}
  	\centering\includegraphics[scale=0.3]{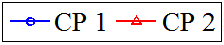}
 \end{subfigure}
 \begin{subfigure}[b]{0.25\linewidth}
  	\centering\includegraphics[scale=0.3]{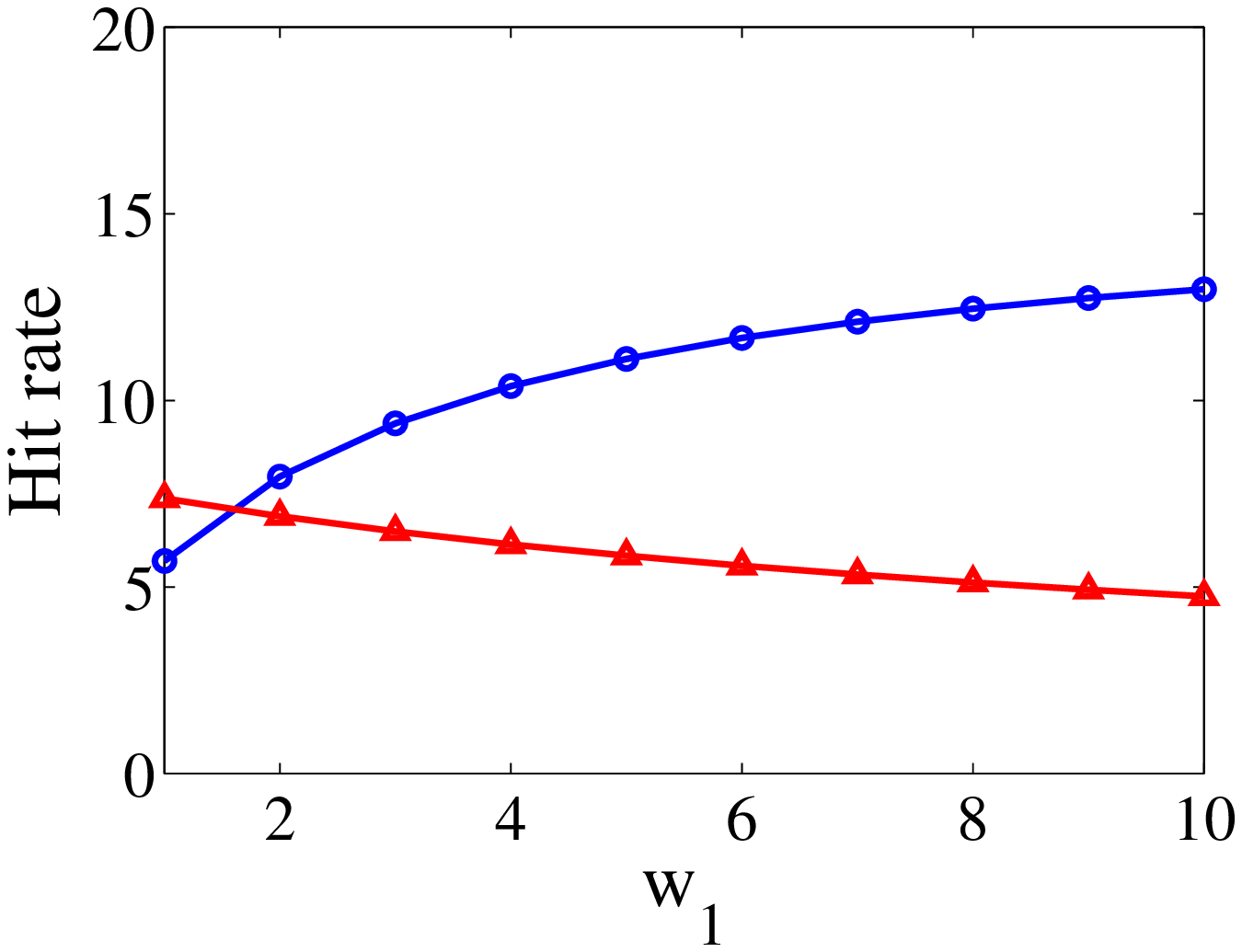}
 \end{subfigure}%
 \begin{subfigure}[b]{0.25\linewidth}
  	\centering\includegraphics[scale=0.3]{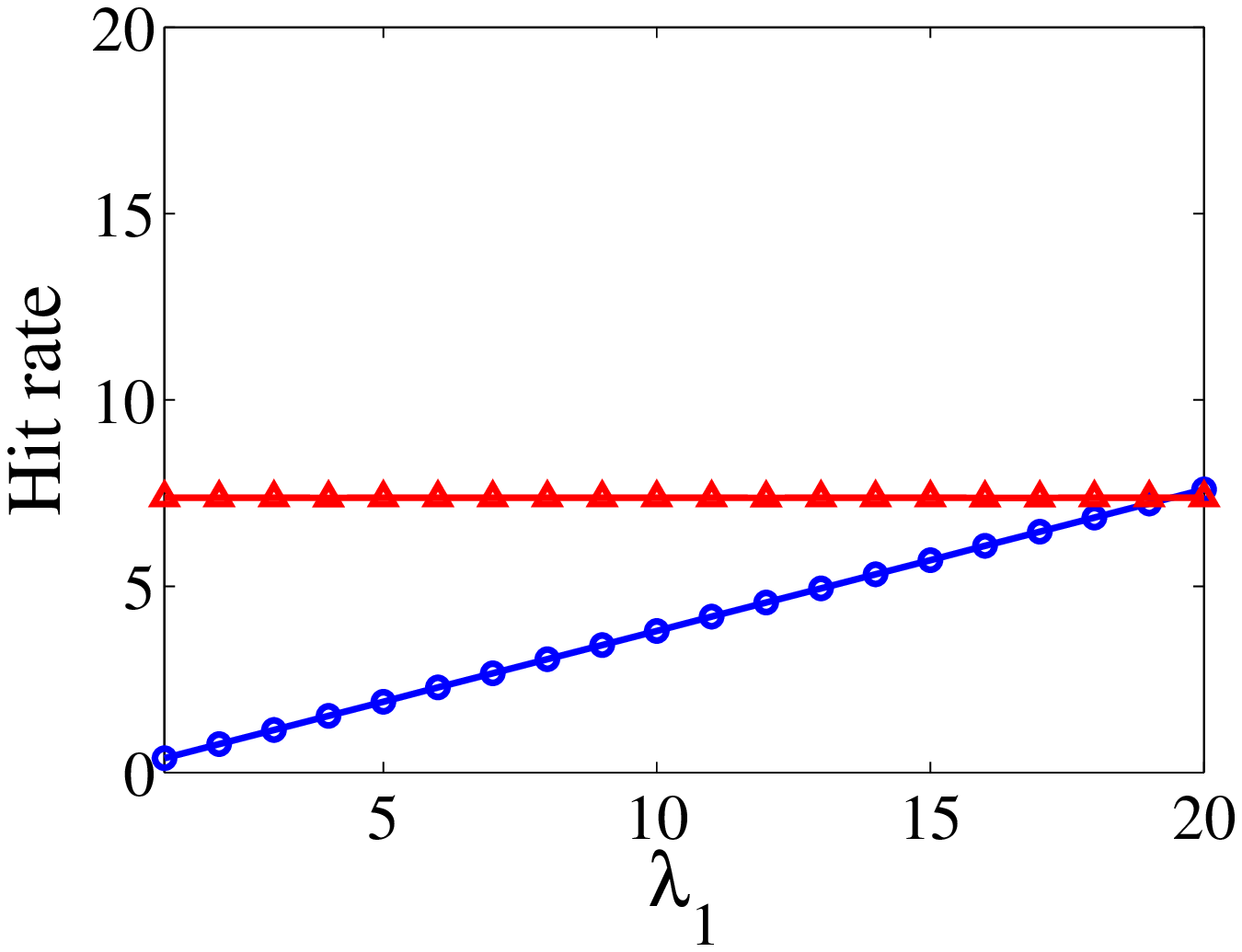}
 \end{subfigure}%
 \begin{subfigure}[b]{0.25\linewidth}
  	\centering\includegraphics[scale=0.3]{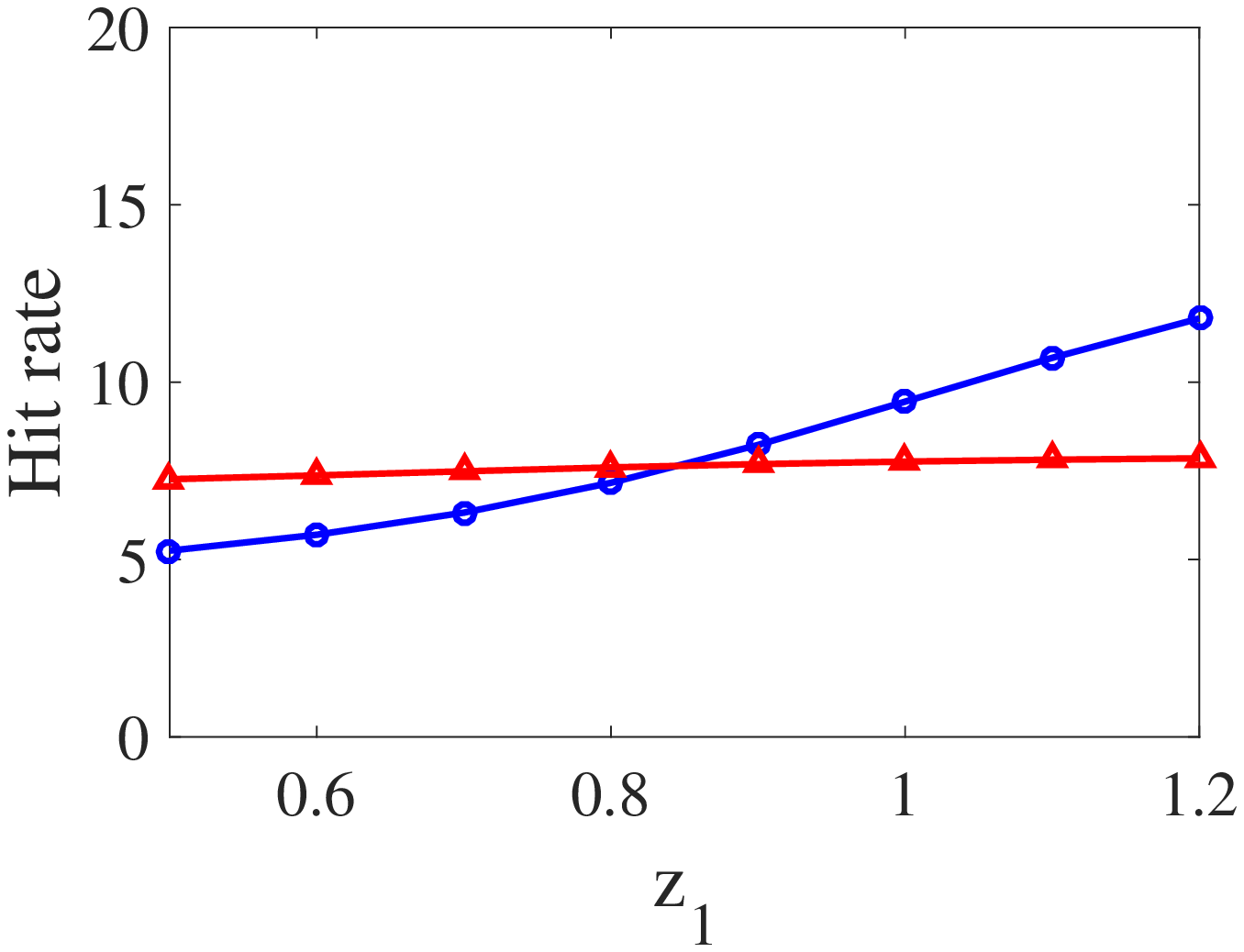}
 \end{subfigure}%
  \begin{subfigure}[b]{0.25\linewidth}
  	\centering\includegraphics[scale=0.3]{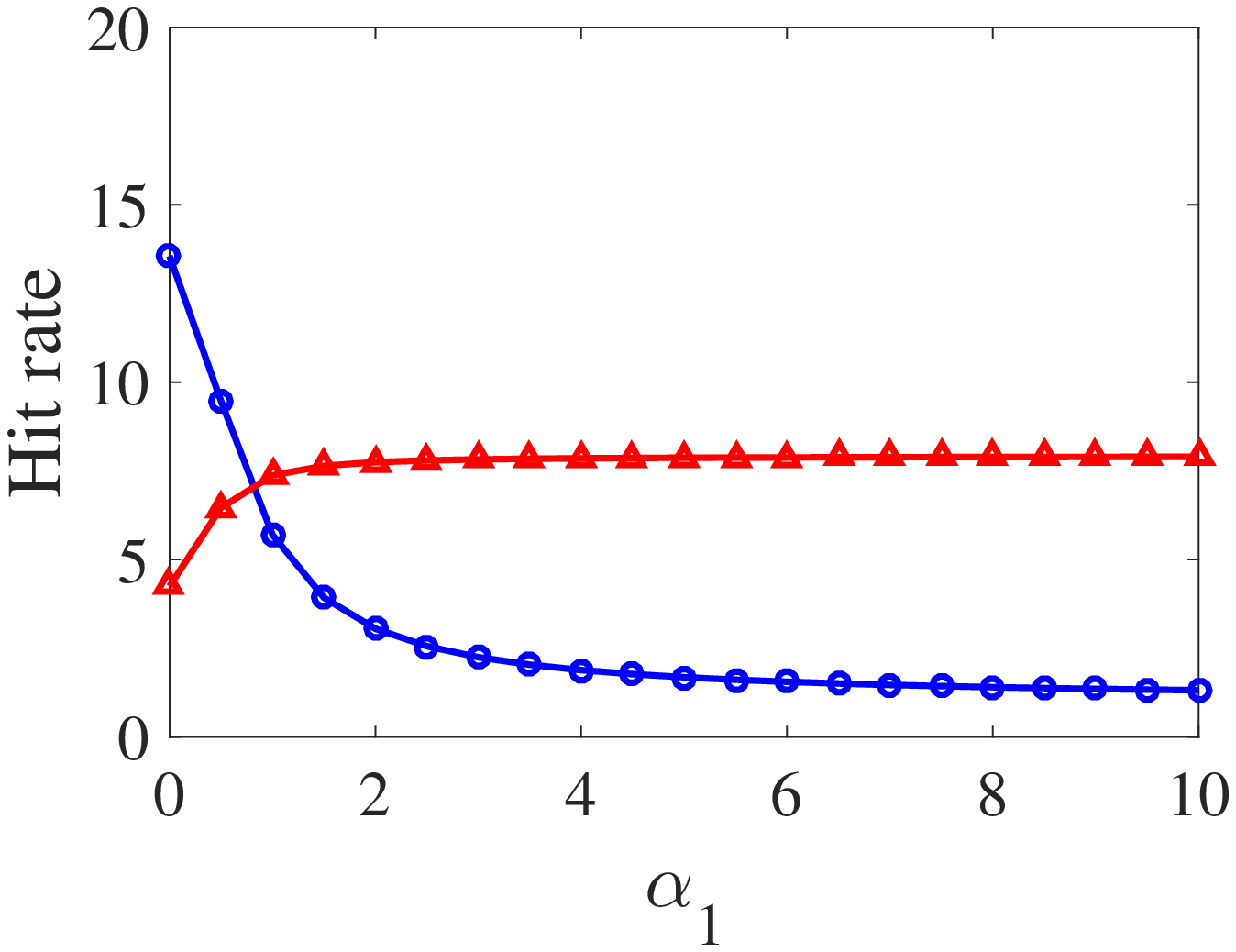}
 \end{subfigure}
  \begin{subfigure}[b]{0.25\linewidth}
  	\centering\includegraphics[scale=0.3]{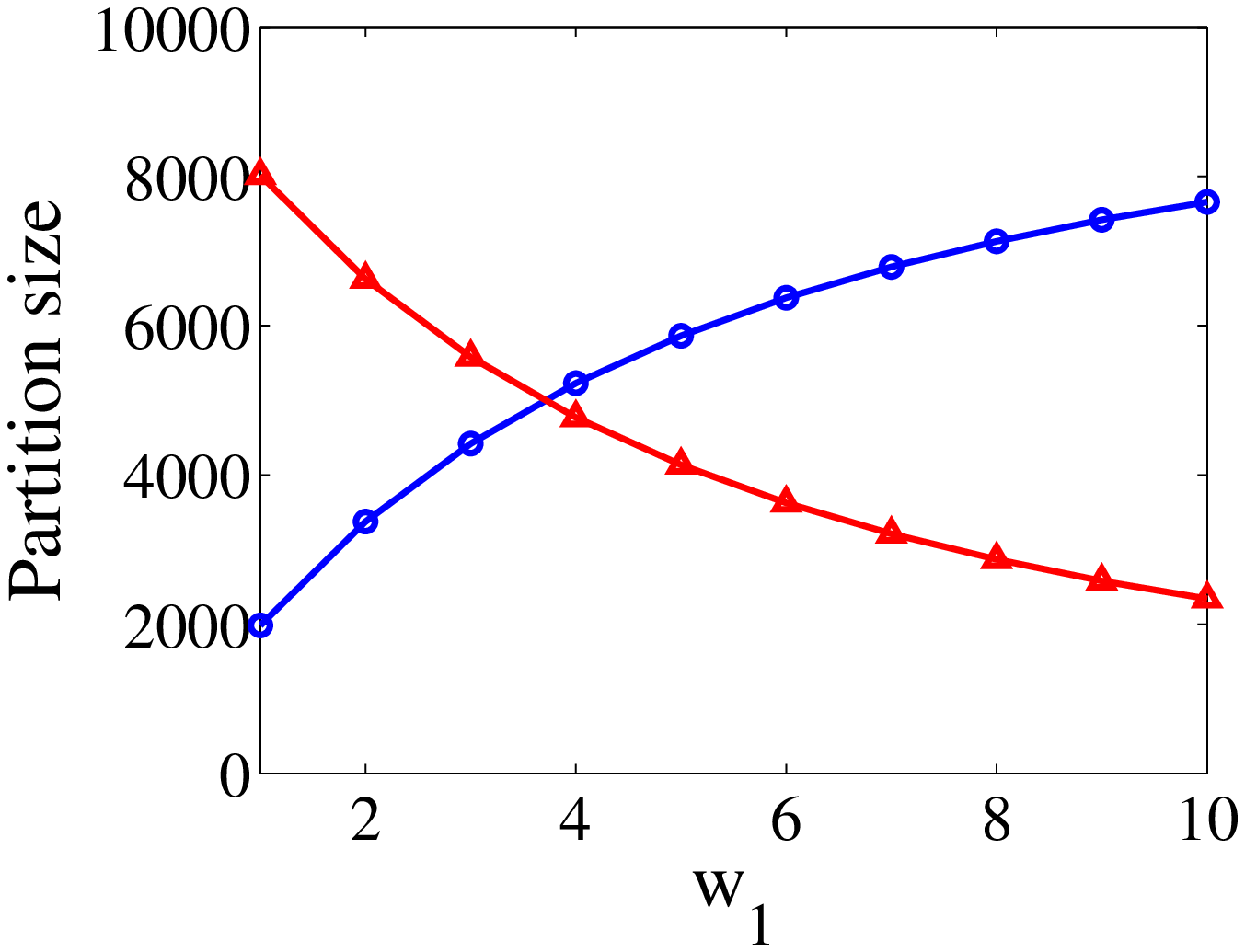}
 \end{subfigure}%
 \begin{subfigure}[b]{0.25\linewidth}
  	\centering\includegraphics[scale=0.3]{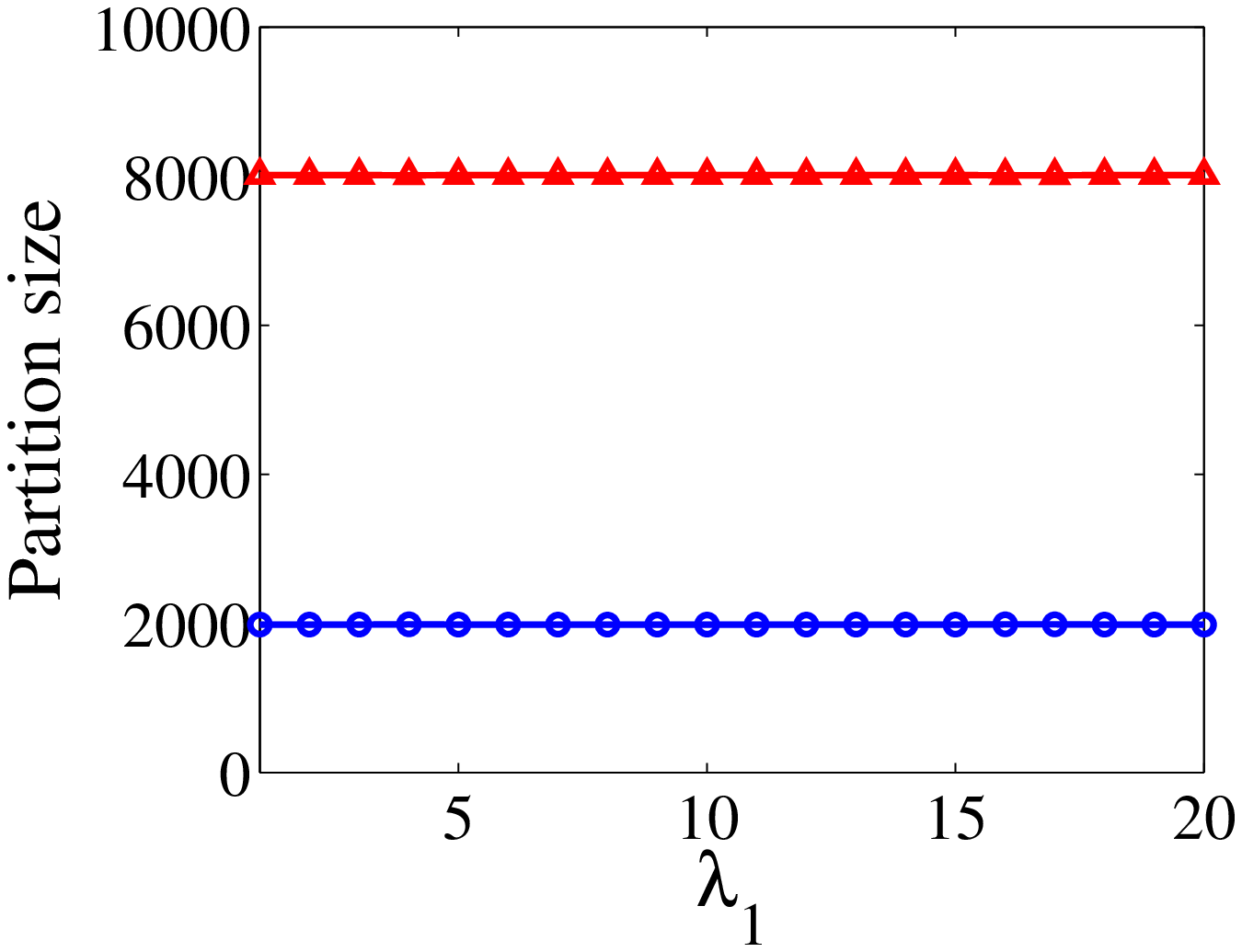}
 \end{subfigure}%
 \begin{subfigure}[b]{0.25\linewidth}
  	\centering\includegraphics[scale=0.3]{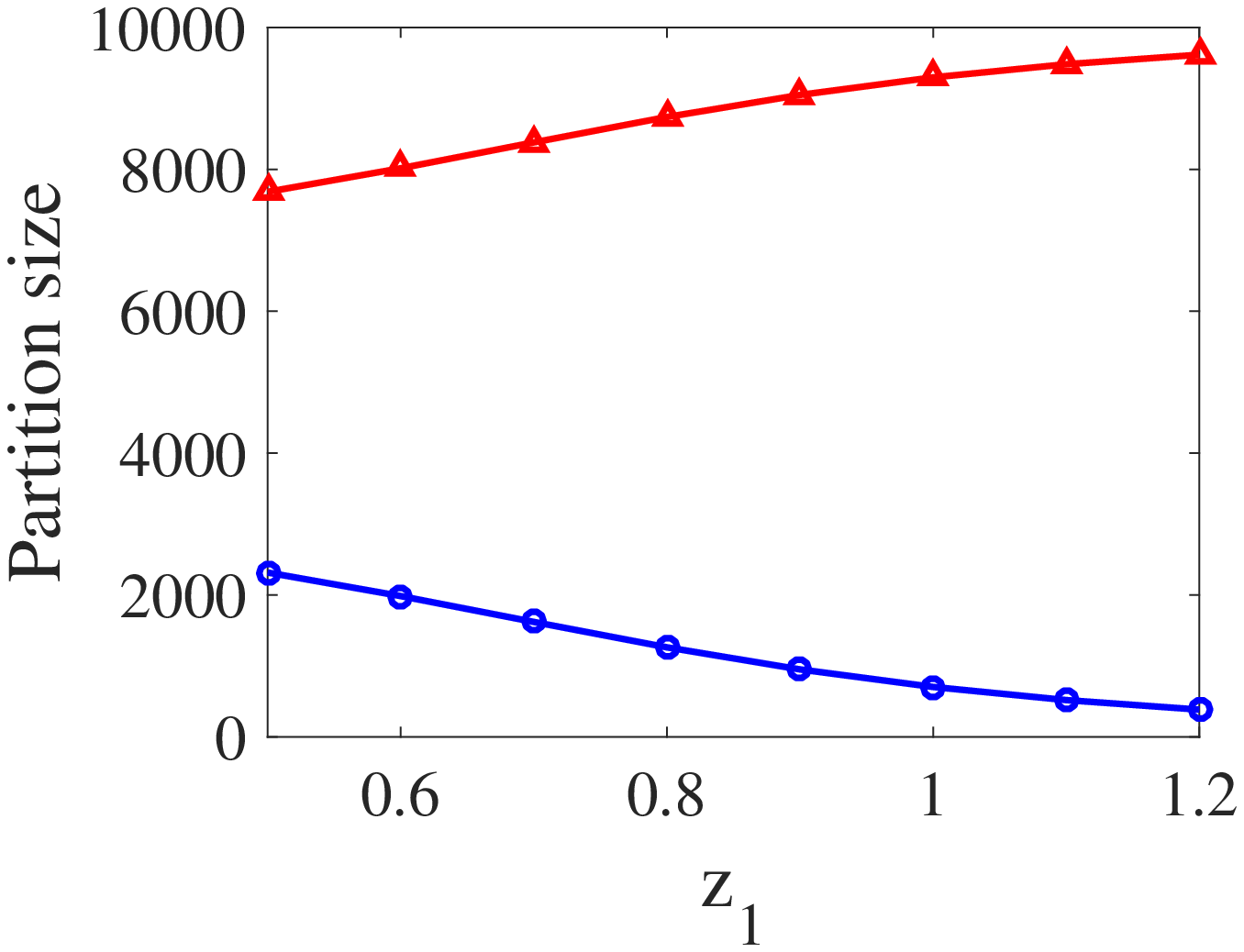}
 \end{subfigure}%
 \begin{subfigure}[b]{0.25\linewidth}
  	\centering\includegraphics[scale=0.3]{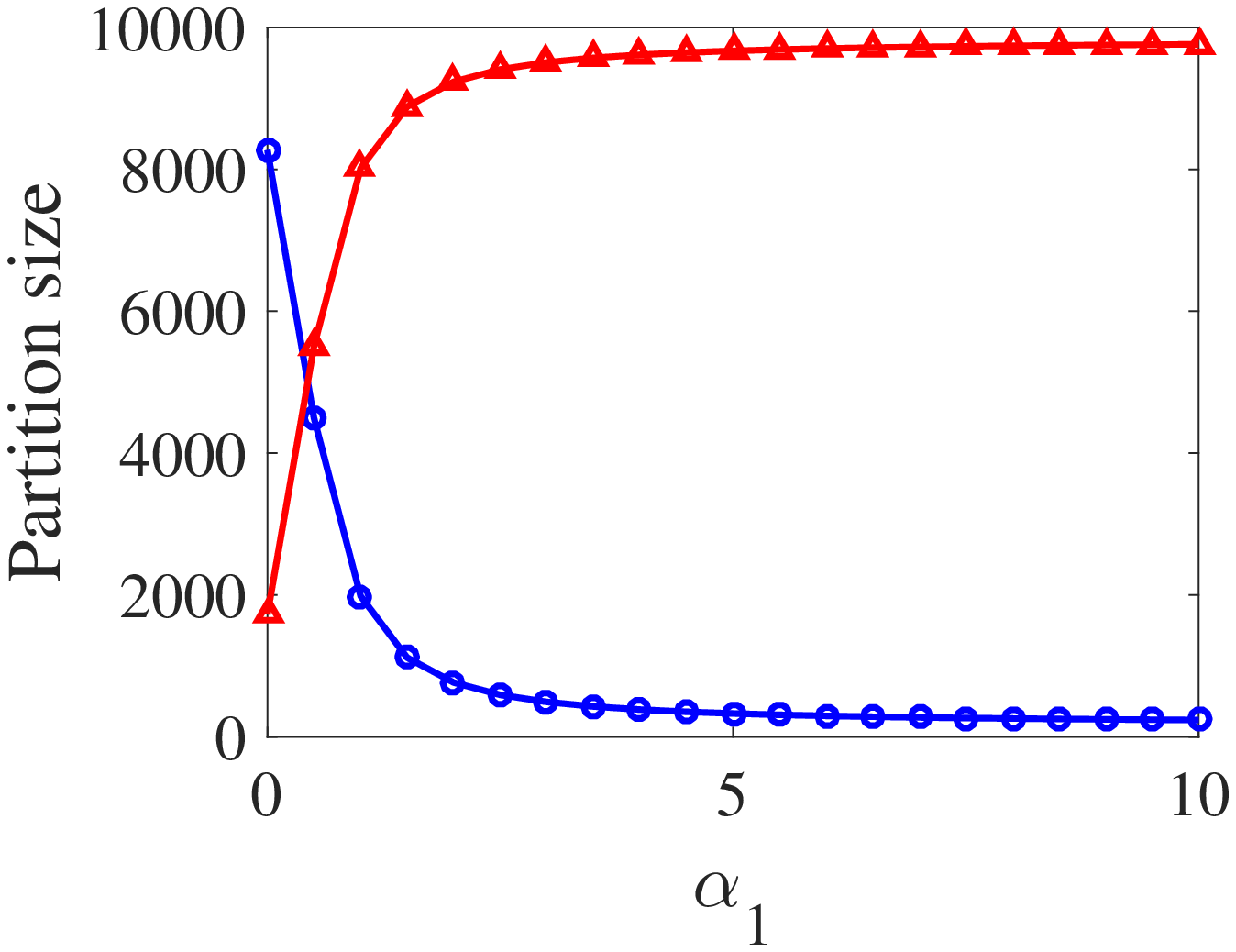}
 \end{subfigure}
 \caption{Effect of the parameters on hit rates and partition sizes when content providers serve distinct files.}
    \centering\label{fig:bwlama_hitrates}
\end{figure*}

\begin{figure*}[t]
\centering
\begin{subfigure}[b]{\linewidth}
  	\centering\includegraphics[scale=0.3]{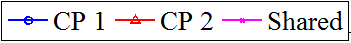}
 \end{subfigure}
 \begin{subfigure}[b]{0.25\linewidth}
  	\centering\includegraphics[scale=0.3]{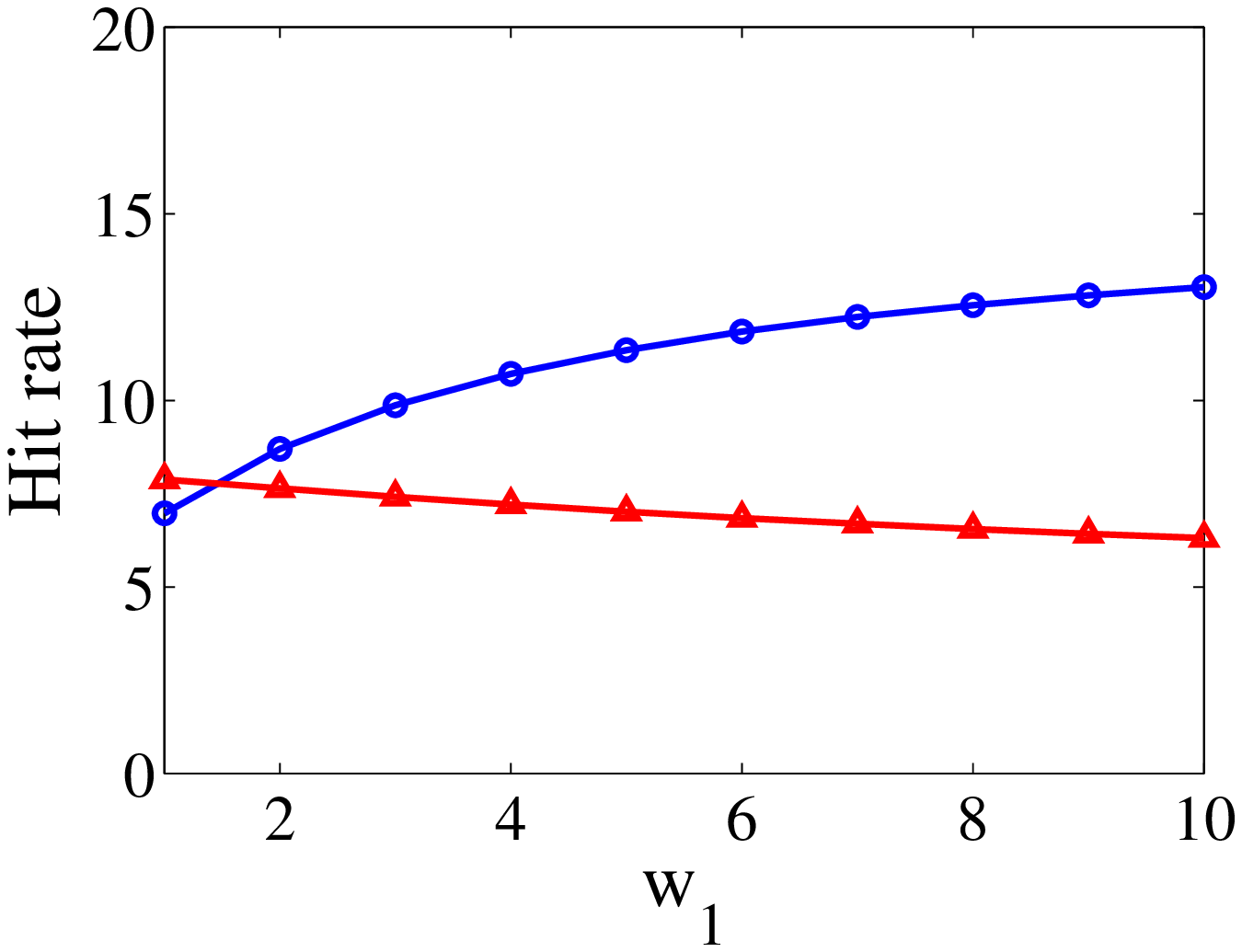}
 \end{subfigure}%
 \begin{subfigure}[b]{0.25\linewidth}
  	\centering\includegraphics[scale=0.3]{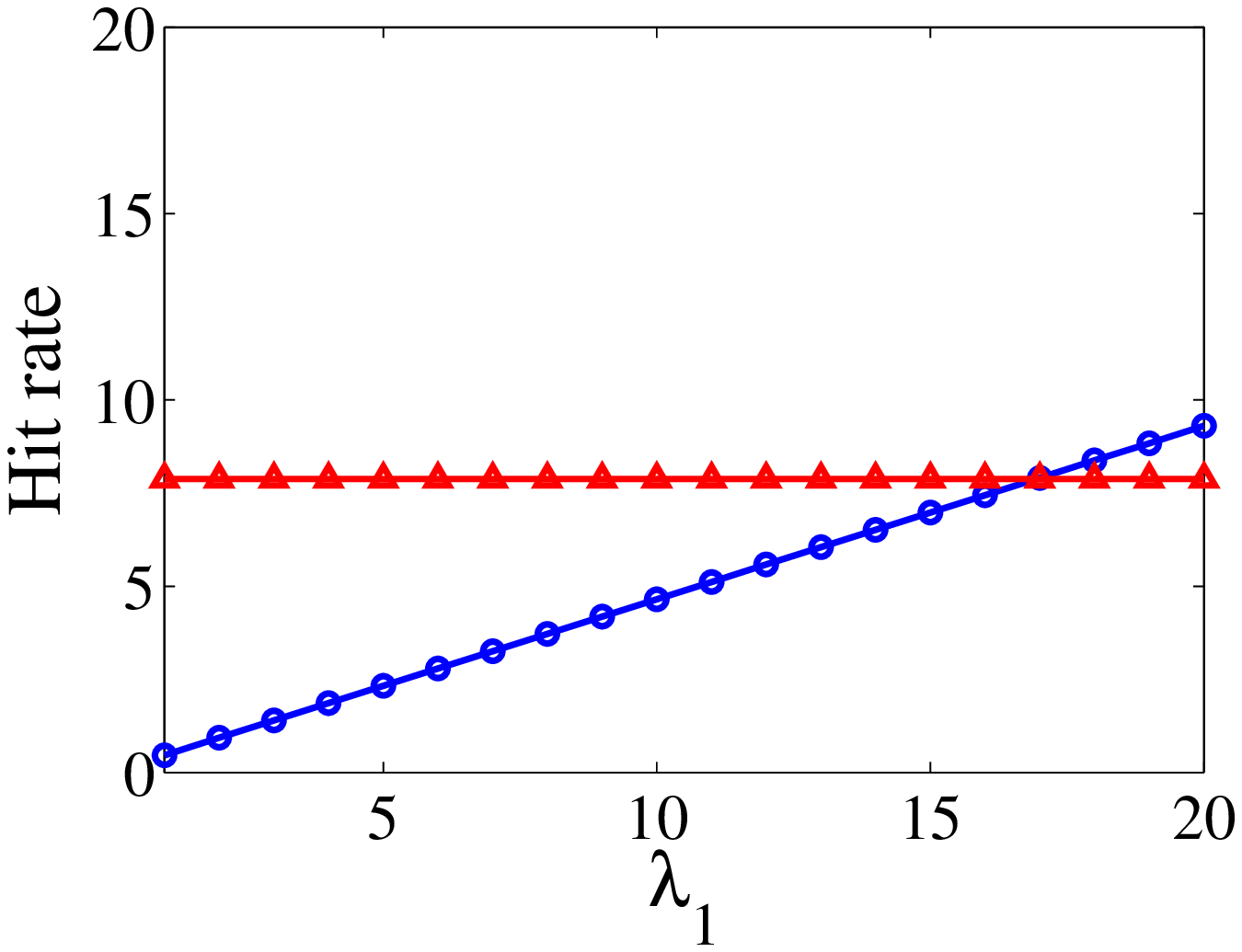}
 \end{subfigure}%
 \begin{subfigure}[b]{0.25\linewidth}
  	\centering\includegraphics[scale=0.3]{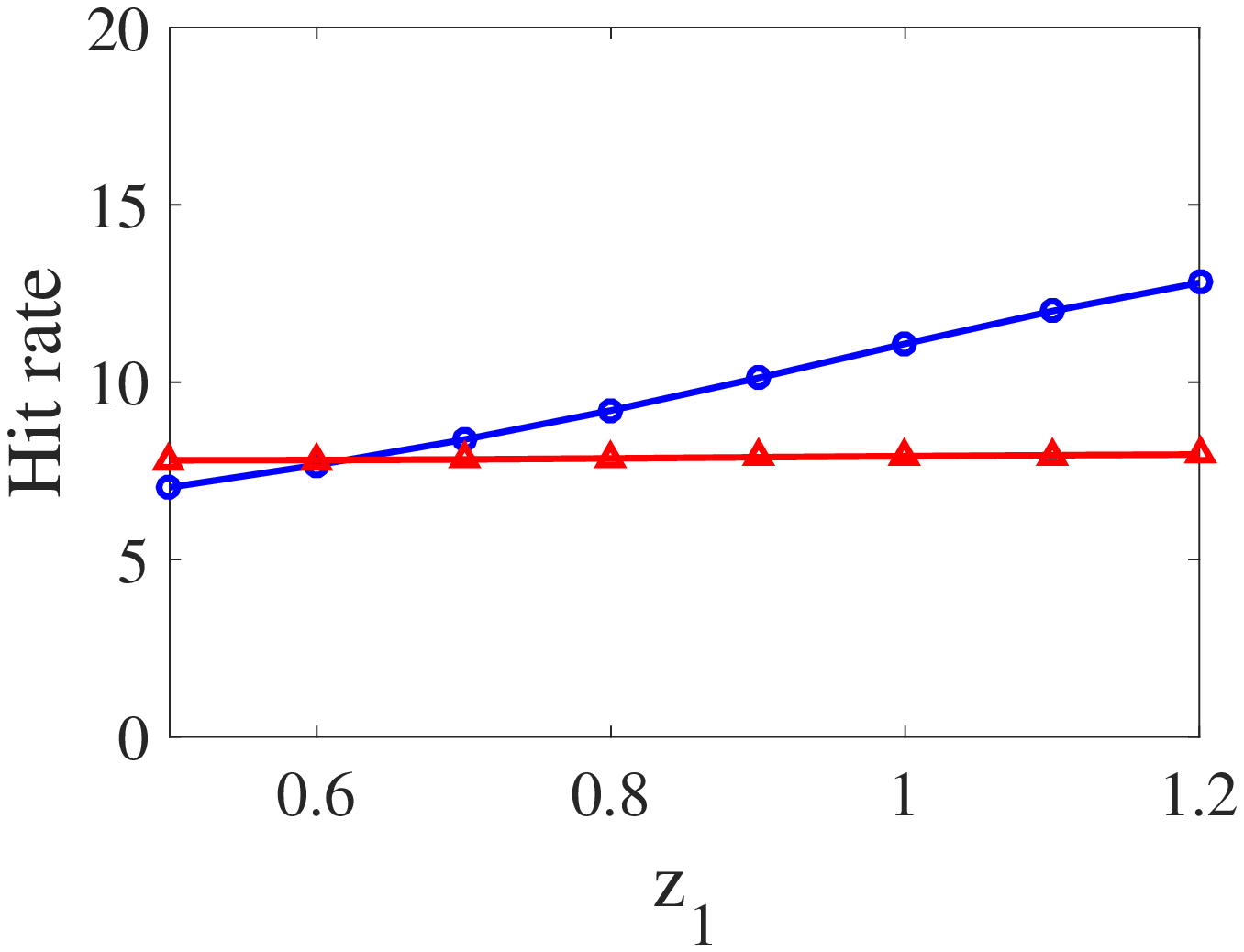}
 \end{subfigure}%
 \begin{subfigure}[b]{0.25\linewidth}
  	\centering\includegraphics[scale=0.3]{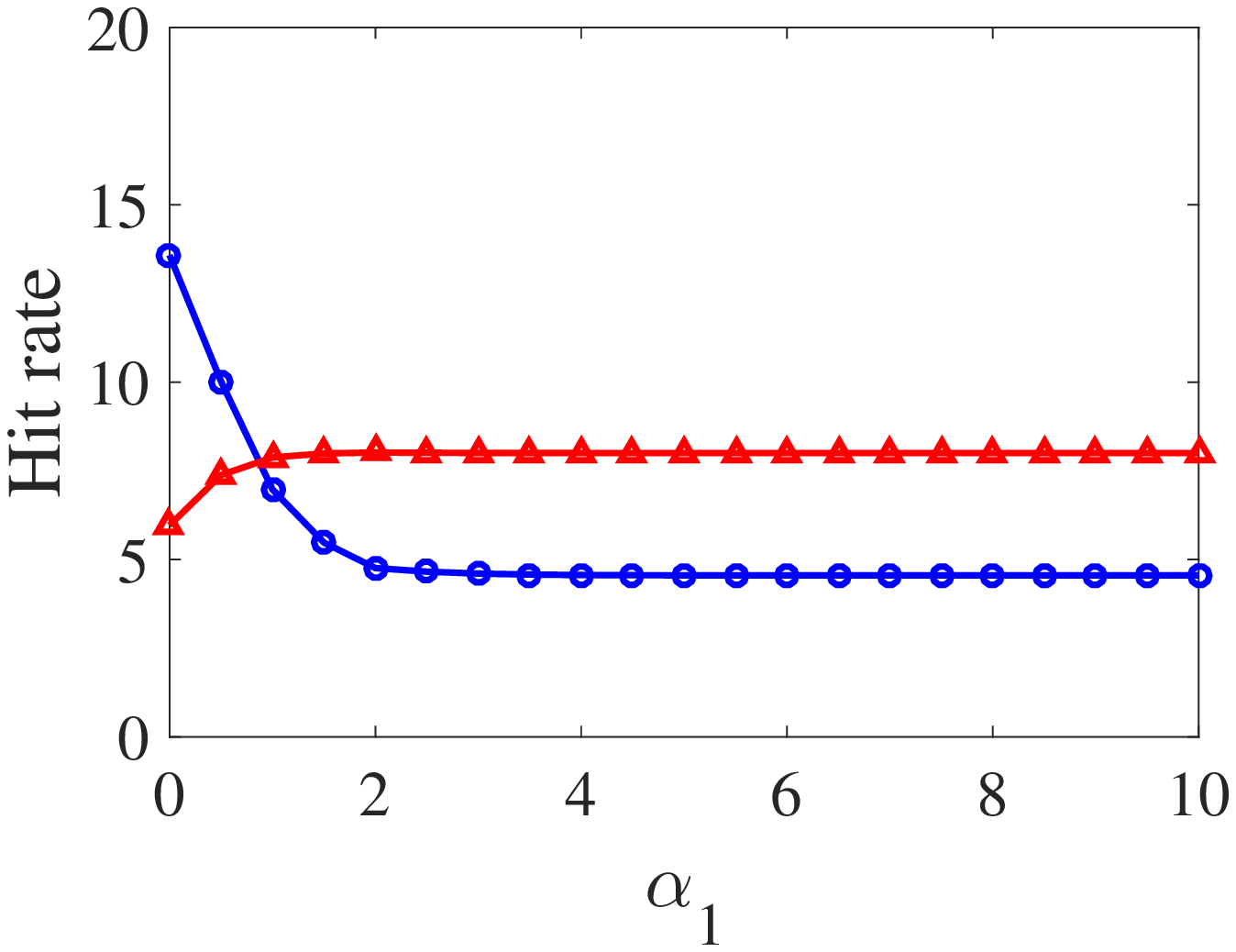}
 \end{subfigure}
  \begin{subfigure}[b]{0.25\linewidth}
  	\centering\includegraphics[scale=0.3]{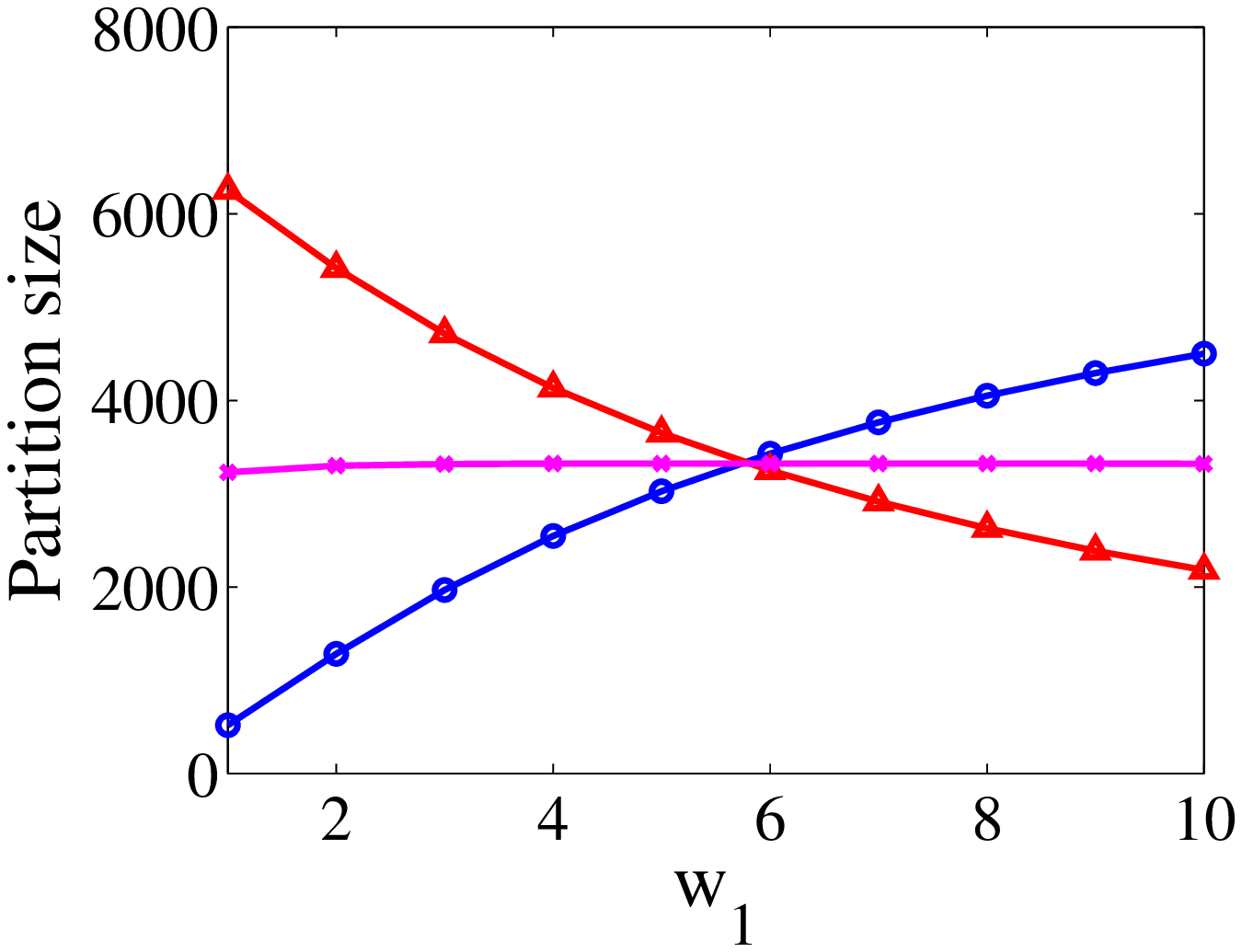}
 \end{subfigure}%
 \begin{subfigure}[b]{0.25\linewidth}
  	\centering\includegraphics[scale=0.3]{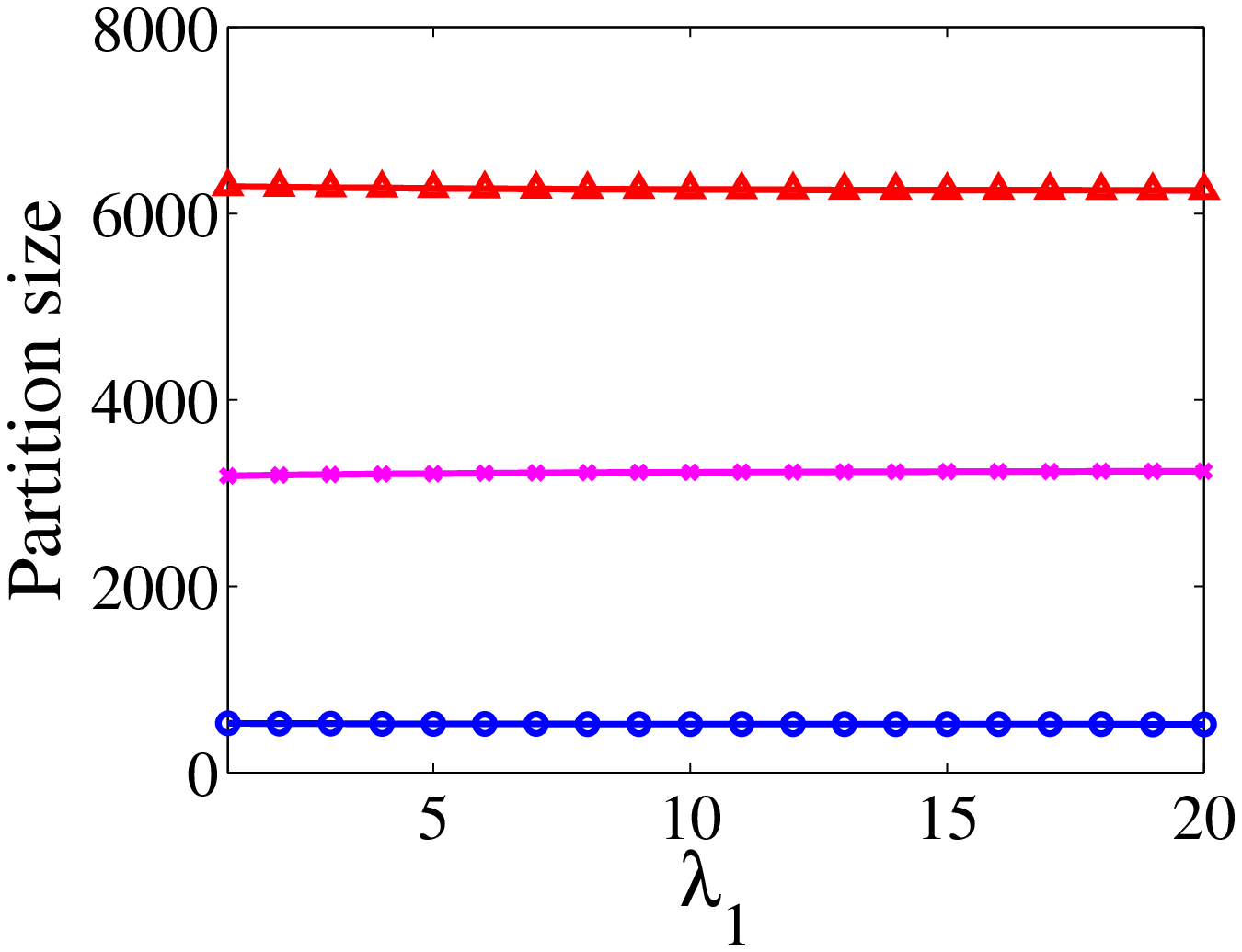}
 \end{subfigure}%
 \begin{subfigure}[b]{0.25\linewidth}
  	\centering\includegraphics[scale=0.3]{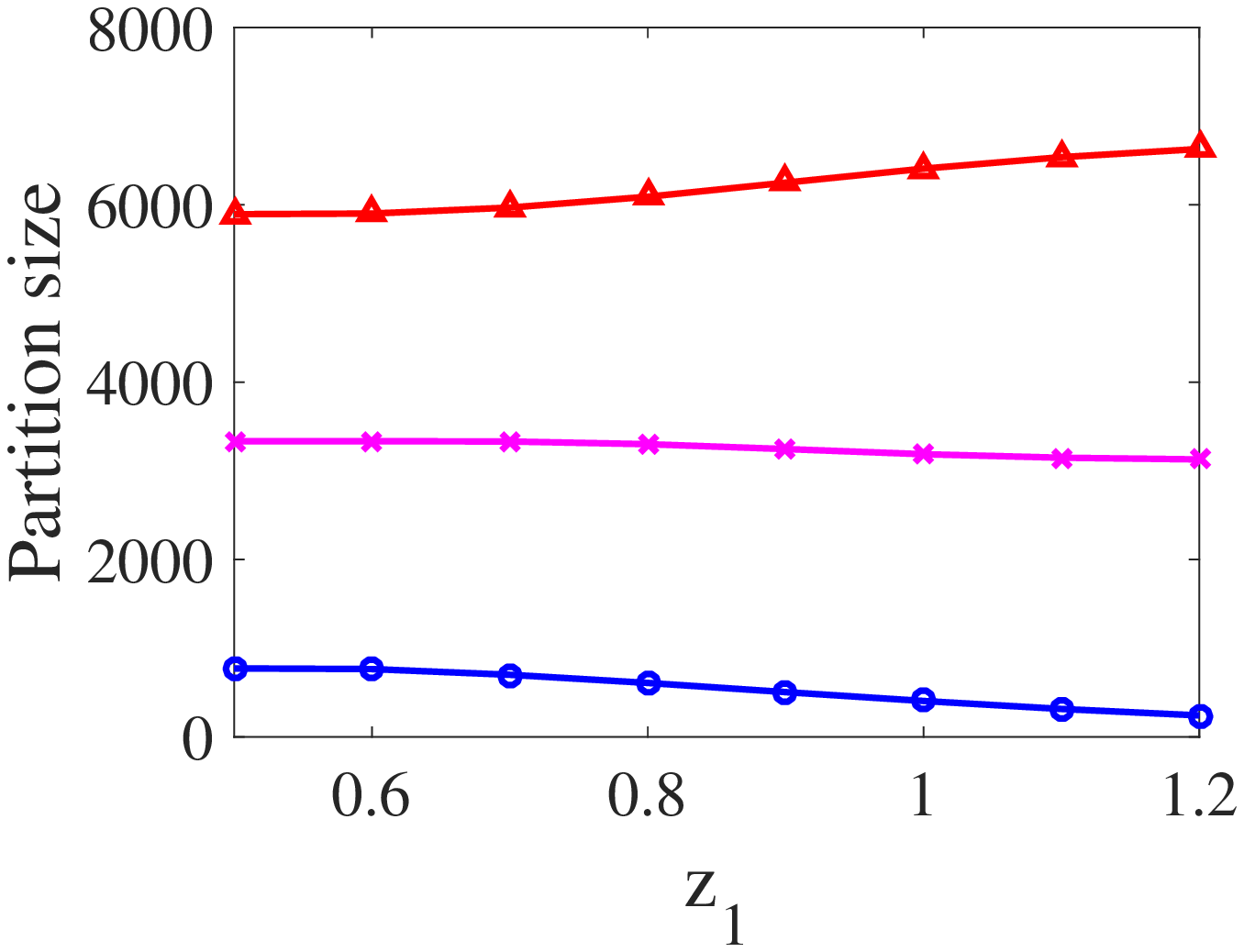}
 \end{subfigure}%
 \begin{subfigure}[b]{0.25\linewidth}
  	\centering\includegraphics[scale=0.3]{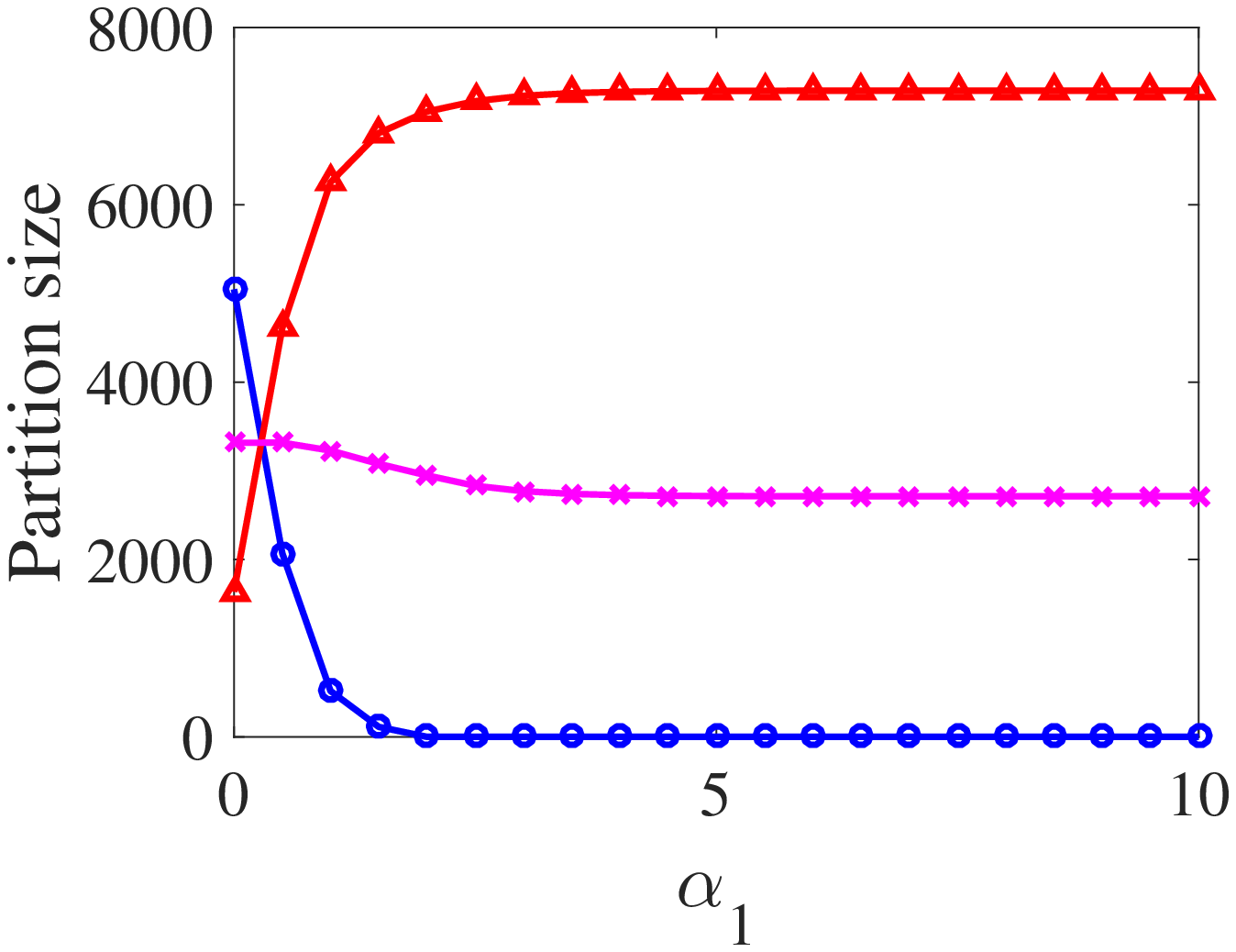}
 \end{subfigure}
 \caption{Effect of the parameters on hit rates and partition sizes when some content is served by both content providers.}
    \centering\label{fig:bwlama_hitrates_shared}
\end{figure*}

To understand the efficacy of cache partitioning, we first look at solutions of optimization problems~\eqref{eq:opt} and~\eqref{eq:opt3}. Here, we measure the gain in total utility through partitioning the cache by computing the utility obtained by sharing the cache between the content providers and the utility obtained by partitioning the cache. Figure~\ref{fig:up} shows the utility gain when content providers serve distinct files. In this example, the aggregate utility increases by $10\%$ from partitioning the cache.

Figure~\ref{fig:up_shared} shows the utilities for the case when files in ${S_0 = \{1, 4, 7, \ldots, 10^4\}}$ are served by both content providers. Two cases are considered here: a)~request rates for the common content are similar for two content providers. This is done by letting ${p_{k,1} > p_{k,2} > \ldots > p_{k, n_0}}$ for both providers. b)~Requests rates from the two content providers for the common files are set to be dissimilar. This is done by setting the file popularities for the second CP as ${p_{2,1} < p_{2,2} < \ldots < p_{2, n_0}}$. In both cases partitioning the cache into three slices shows the best performance.

We next look at the effect of various parameters on cache partitioning, when CPs serve distinct contents and when they serve some common content with similar popularities. We fix the parameters of the second content provider, and study the effect of changing weight parameter $w_1$ and aggregate request rate $\lambda_1$ of the first content provider. We also change the Zipfian file popularity distribution parameter $z_1$. To study the effect of the utility function, we take it to be the $\alpha$--fair utility function and vary $\alpha$ for the first content provider, $\alpha_1$.

Figure~\ref{fig:bwlama_hitrates} shows how hit rates and partition sizes of the two content providers vary as functions of $w_1$, $\lambda_1$, $z_1$ and $\alpha_1$. As expected, by increasing the weight $w_1$, content provider one gets a larger share of the cache, and hence a higher hit rate. Increasing $\lambda_1$ has no effect on the partition sizes. This is because the first content provider uses the $\log$ utility function, and it is easy to see that the derivative ${U'_1(h_1)\partial h_1/\partial C_p}$ does not depend on the aggregate rate. In our example, changing the aggregate request rate for the second content provider with $U_2(h_2) = h_2$ results in different partition sizes. As the popularity distribution for contents from the first content provider becomes more skewed, \ie\ as $z_1$ increases, the set of popular files decreases in size. Consequently, the dedicated partition size for content provider one decreases as $z_1$ increases. Increasing $\alpha_1$ changes the notion of fairness between the two content providers in favor of the second content provider, and the size of the partition allocated to the first content provider and its hit rate decreases as $\alpha_1$ increases.

Figure~\ref{fig:bwlama_hitrates_shared} repeats the same experiment for the case when some common content is served by both content providers. The cache is partitioned into three slices in this case, one of them storing common content. Very similar behavior as in Figure~\ref{fig:bwlama_hitrates} is observed here.

To understand the fairness notion of the $\alpha$-fair utility functions, we next use the same utility function for both of the content providers, and vary the value of $\alpha$ to see how the hit rates and partition sizes change. Figure~\ref{fig:ubeta} shows the effect of $\alpha$ on hit rates and partition sizes for the case when content providers serve distinct files. As $\alpha$ increases, partition sizes change so that hit rates become closer to each other. This is expected since the $\alpha$-fair utility function realizes the max-min notion of fairness as $\alpha\rightarrow\infty$.

\begin{figure}[t]
\centering
\begin{subfigure}[b]{\linewidth}
  	\centering\includegraphics[scale=0.3]{legend.png}
 \end{subfigure}
 \begin{subfigure}[b]{0.5\linewidth}
  	\centering\includegraphics[scale=0.3]{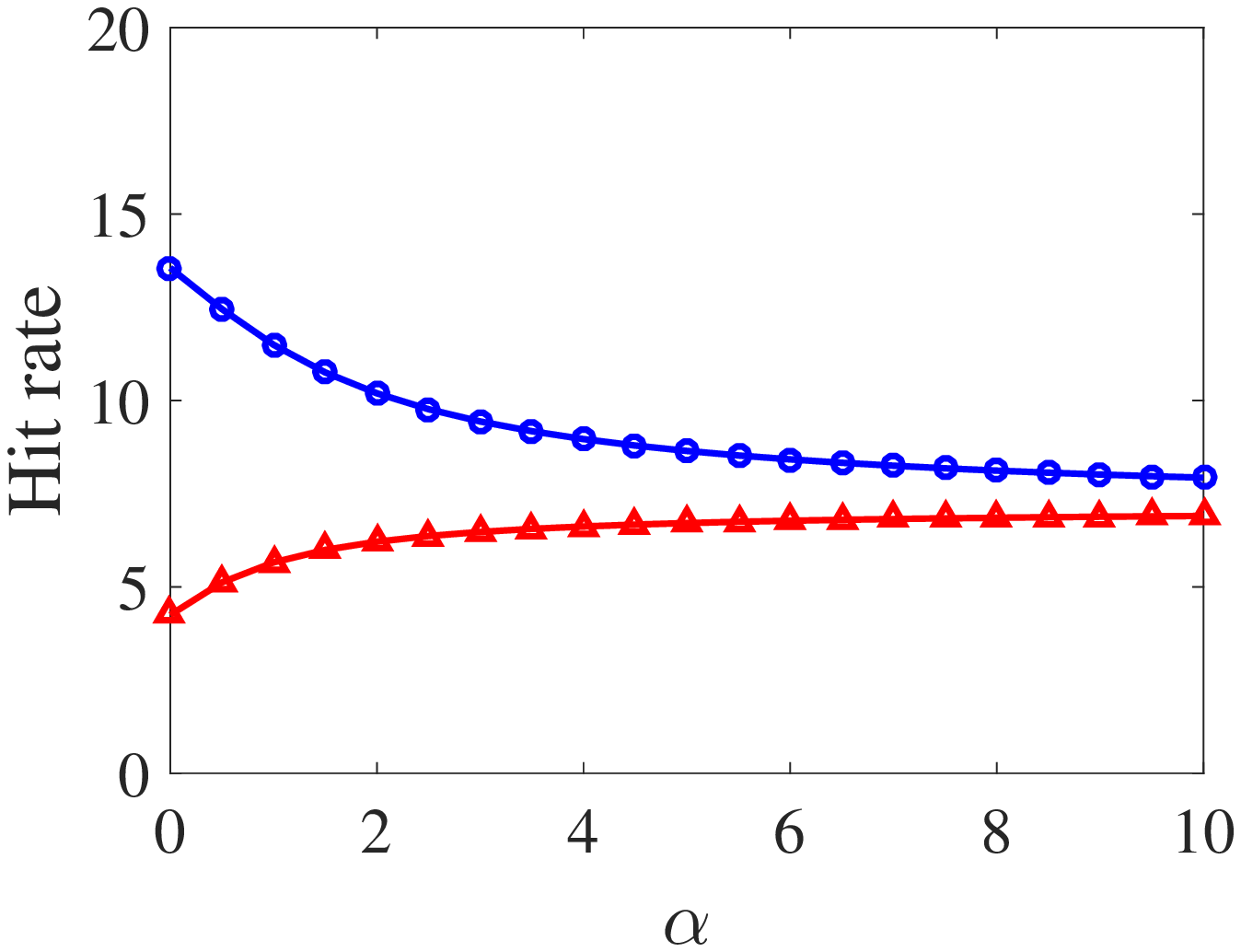}
 \end{subfigure}%
 \begin{subfigure}[b]{0.5\linewidth}
  	\centering\includegraphics[scale=0.3]{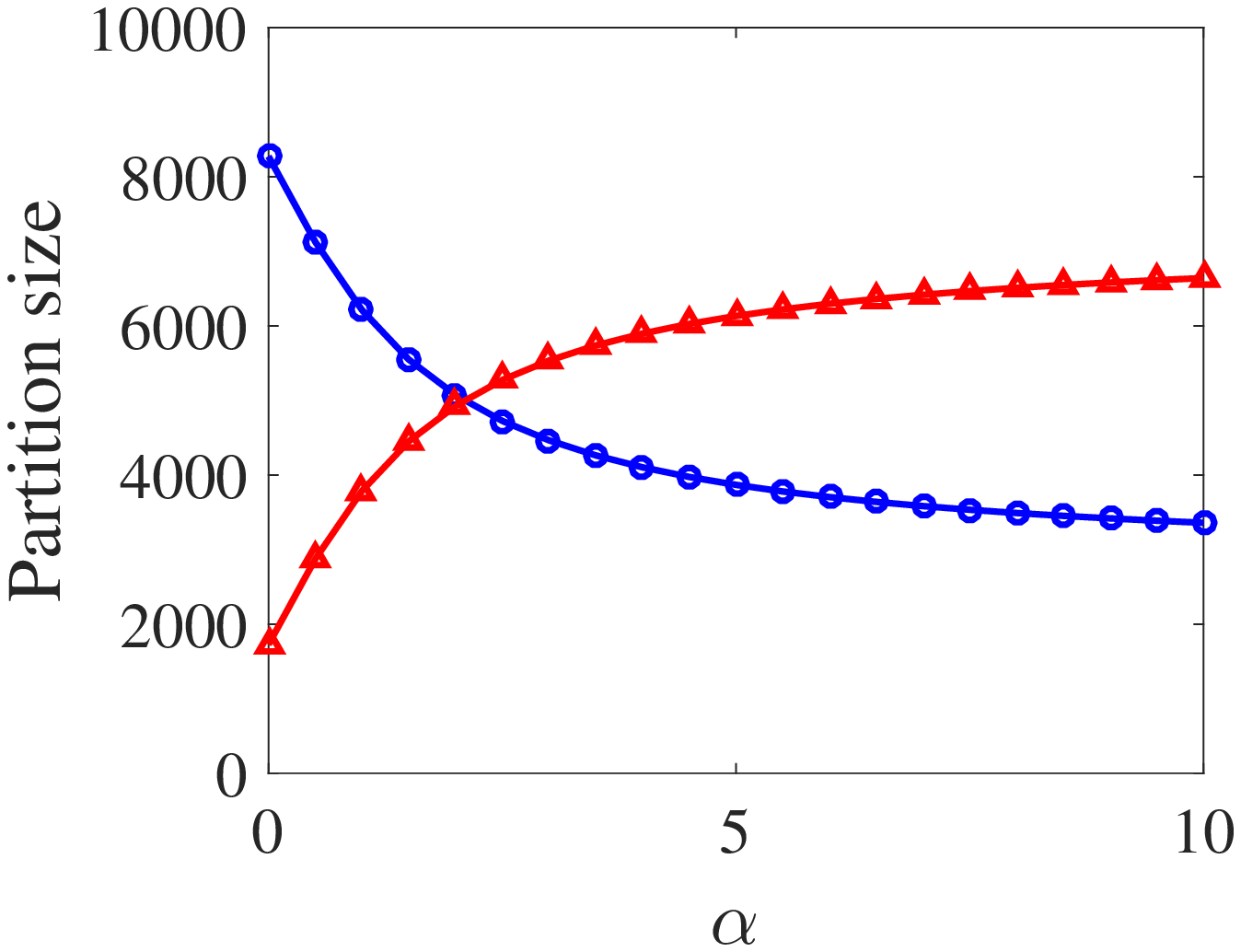}
 \end{subfigure}%
 \caption{$\alpha$-fair resource allocation for content providers serving distinct content. $U_k(h_k) = h_k^{1-\alpha}/(1-\alpha)$.}
    \centering\label{fig:ubeta}
\end{figure}

Figure~\ref{fig:ubeta_shared} shows the changes in resource allocation based on the $\alpha$-fair notion of fairness when common content is served by the content providers.

\begin{figure}[]
\centering
\begin{subfigure}[b]{\linewidth}
  	\centering\includegraphics[scale=0.3]{legend_shared.png}
 \end{subfigure}
 \begin{subfigure}[b]{0.5\linewidth}
  	\centering\includegraphics[scale=0.3]{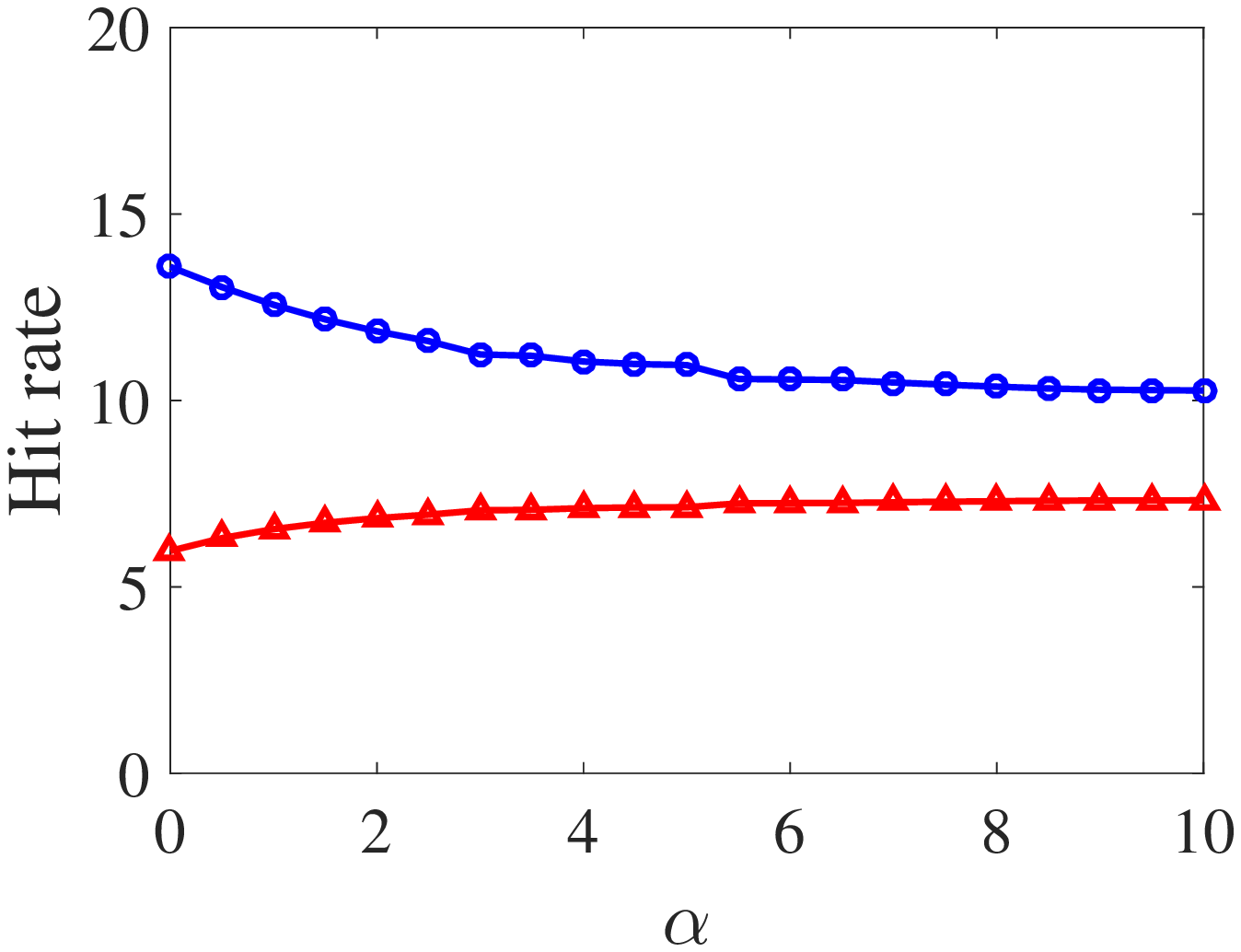}
 \end{subfigure}%
 \begin{subfigure}[b]{0.5\linewidth}
  	\centering\includegraphics[scale=0.3]{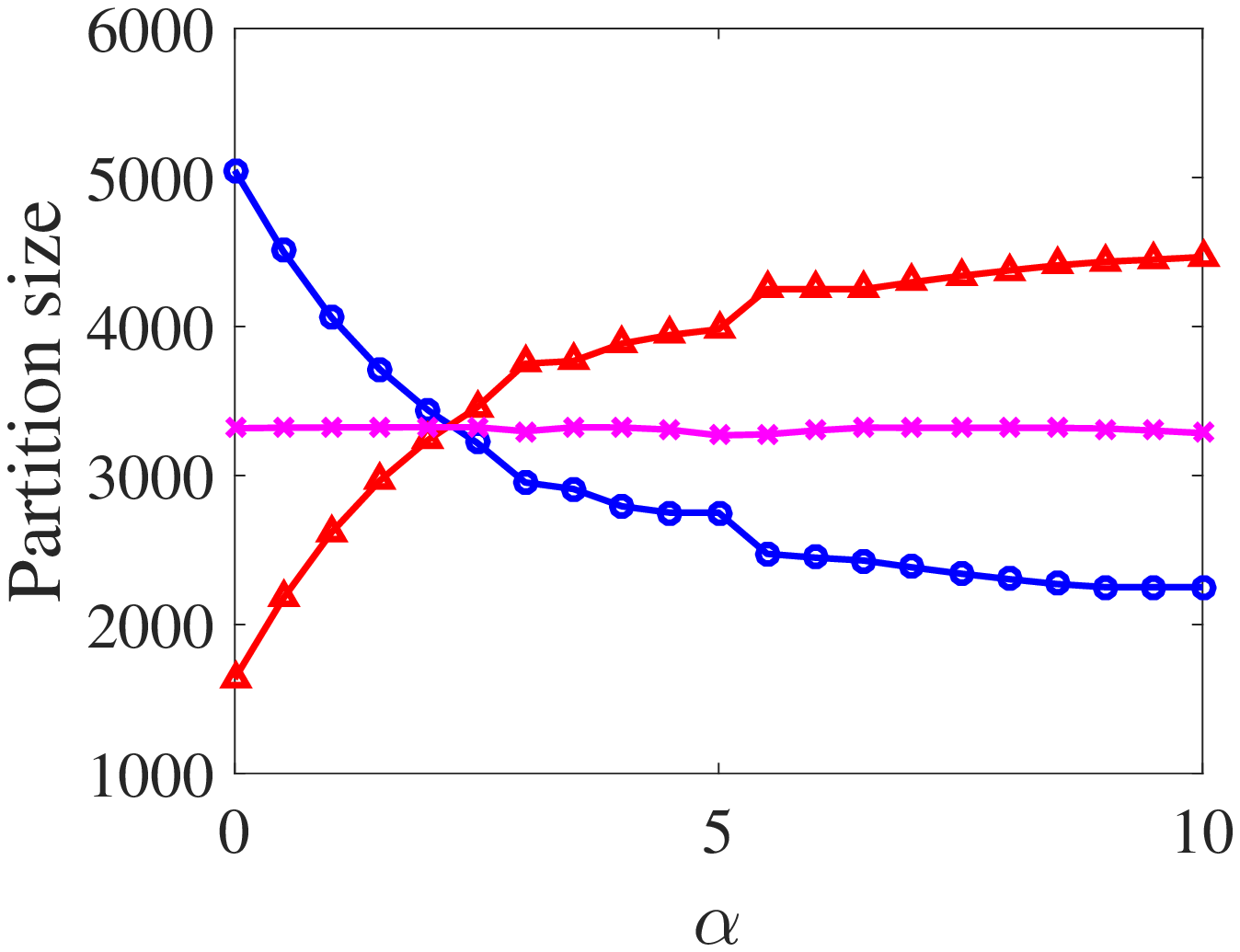}
 \end{subfigure}%
 \caption{$\alpha$-fair resource allocation when some content is served by both content providers. $U_k(h_k) = h_k^{1-\alpha}/(1-\alpha)$.}
    \centering\label{fig:ubeta_shared}
\end{figure}

\subsection{Online Algorithms}
Here, we evaluate the  online algorithms presented in Section~\ref{sec:online} through numerical simulations. Requests are generated according to the parameters presented in the beginning of the section, and the service provider adjusts partition sizes based on the number of hits between iterations. The service provider is assumed to know the utility functions of the content providers. The utility function of the first content provider is fixed to be $U_1(h_1) = \log{h_1}$. We consider three utility functions for the second content provider, namely $U_2(h_2) = h_2$, $U_2(h_2) = \log{h_2}$ and $U_2(h_2) = -1/h_2$.

\begin{figure*}[t]
\centering
 \begin{subfigure}[b]{0.30\linewidth}
  	\centering\includegraphics[scale=0.33]{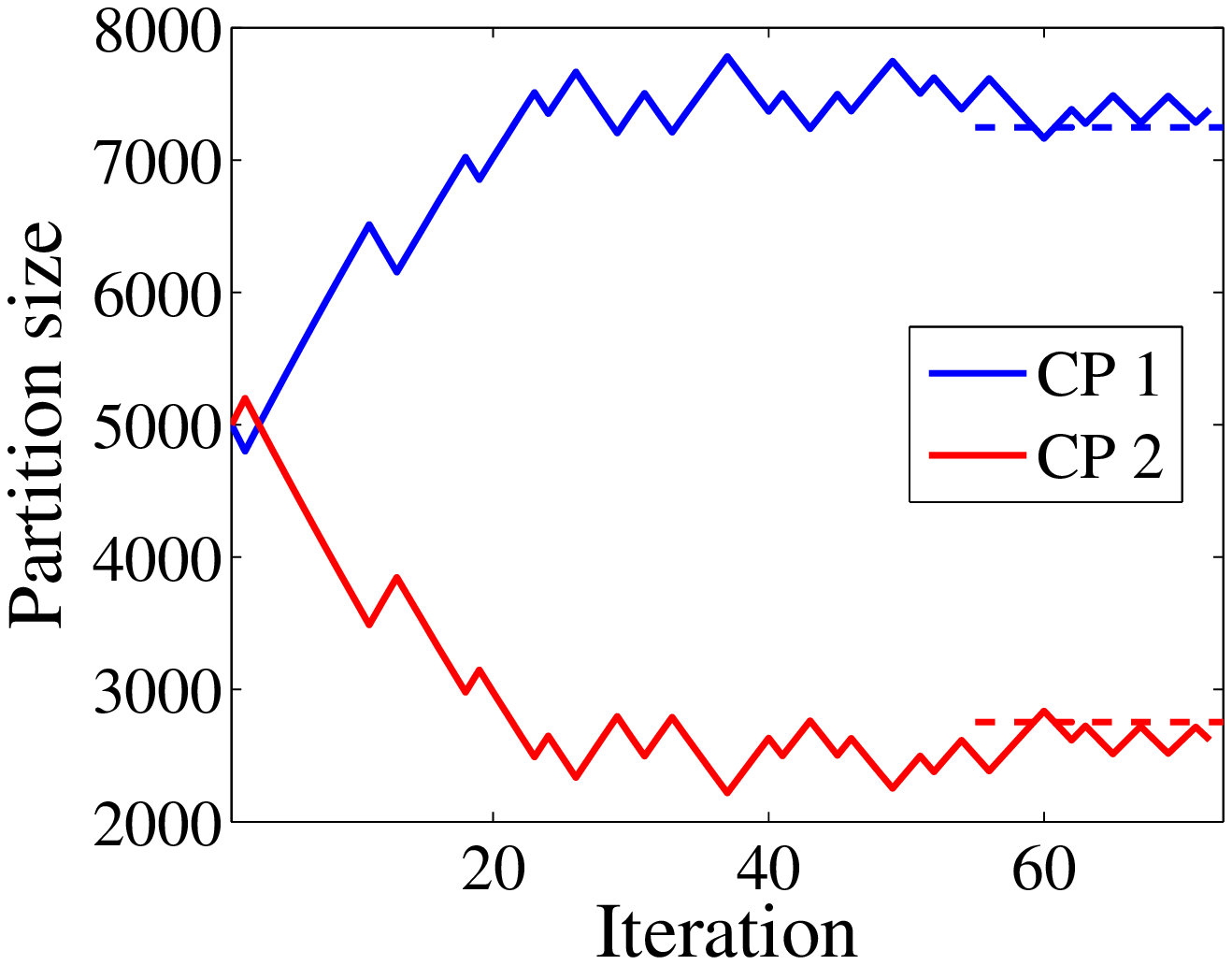}
  	\caption{$U_2(h_2) = h_2$.}
 \end{subfigure}%
 \begin{subfigure}[b]{0.30\linewidth}
  	\centering\includegraphics[scale=0.33]{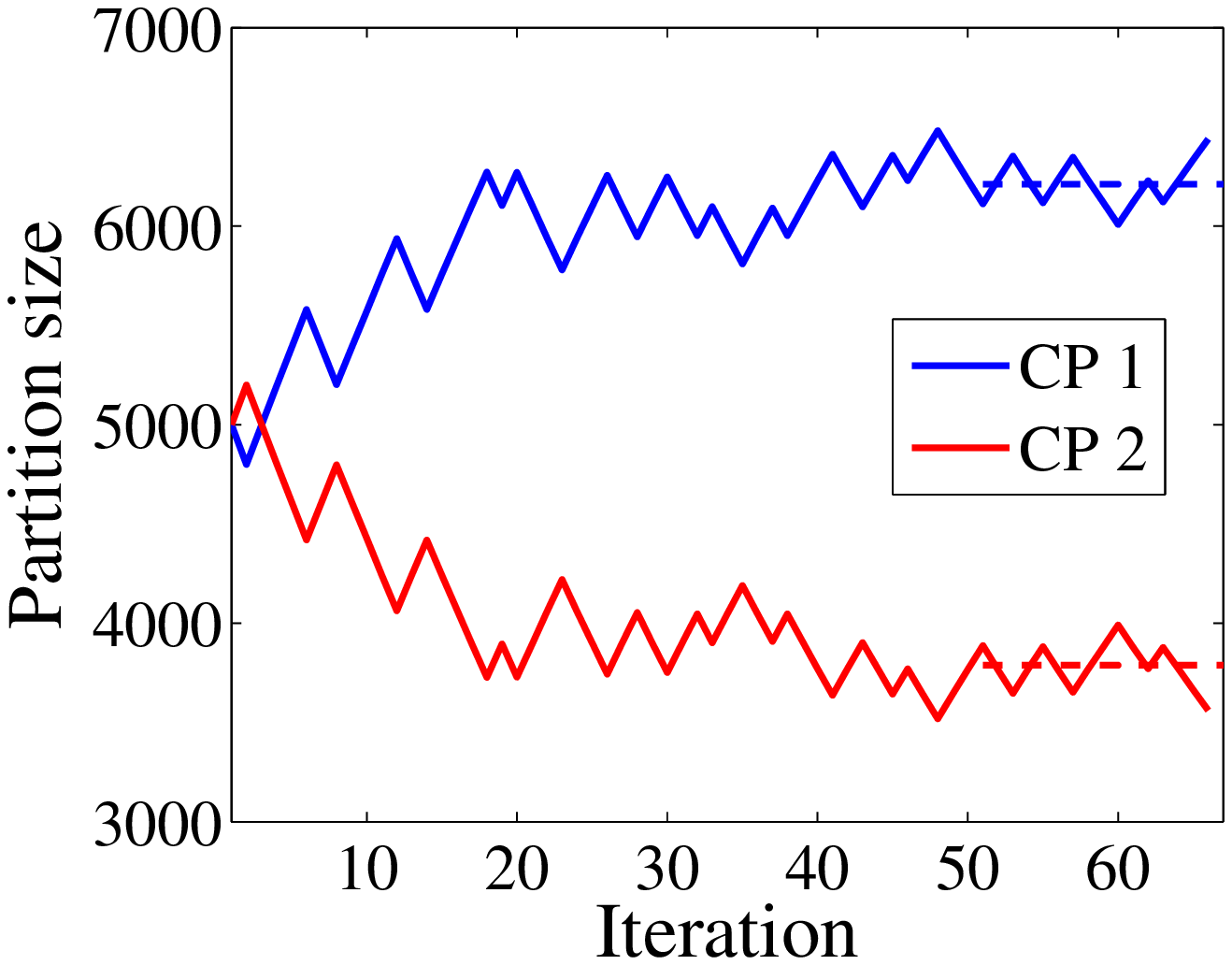}
  	\caption{$U_2(h_2) = \log{h_2}$.}
 \end{subfigure}%
 \begin{subfigure}[b]{0.30\linewidth}
  	\centering\includegraphics[scale=0.33]{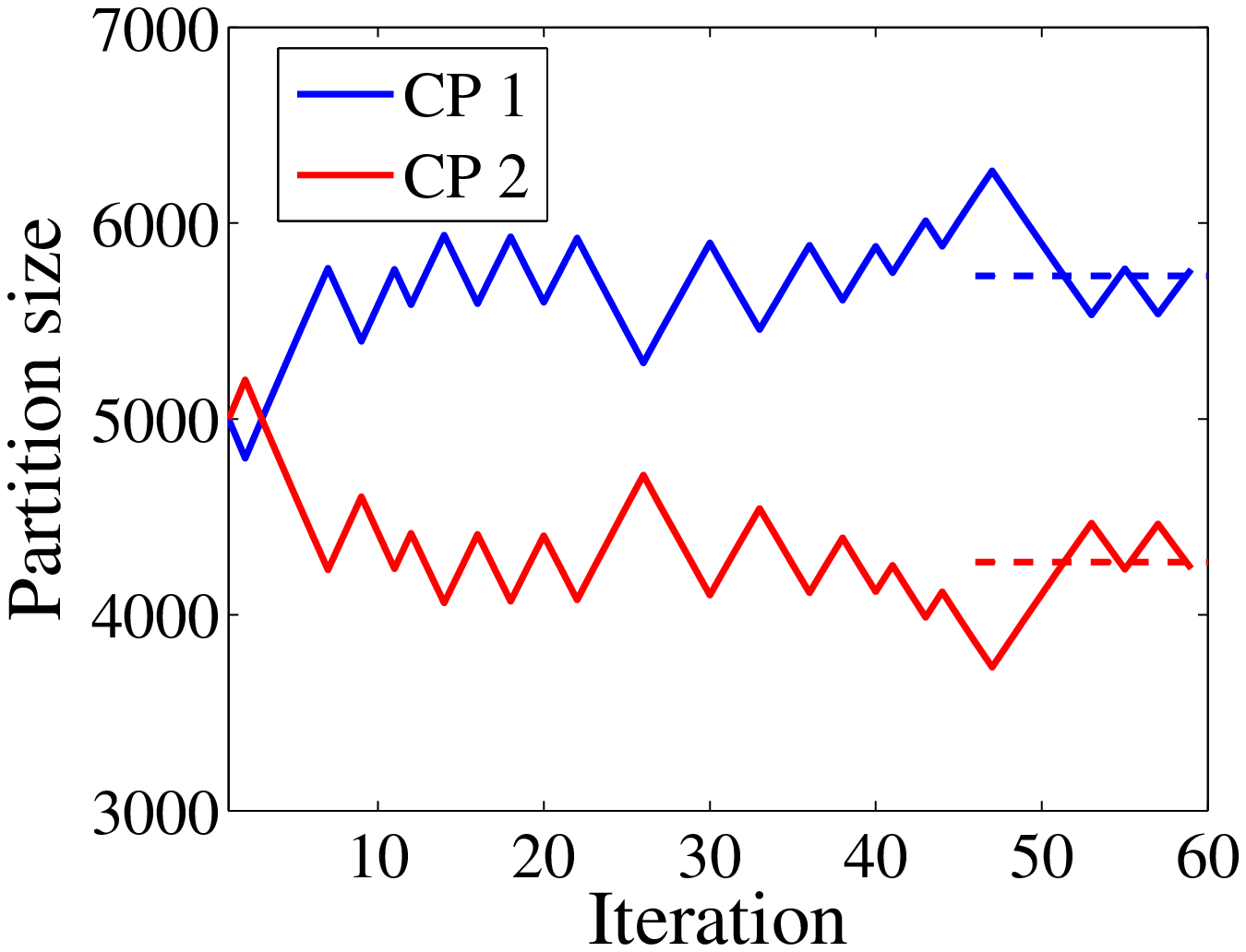}
  	\caption{$U_2(h_2) = -1/h_2$.}
 \end{subfigure}%
 \caption{Convergence of the online algorithm when content providers serve distinct files. $U_1(h_1) = \log{h_1}$.}
    \centering\label{fig:online}
\end{figure*}

We first consider the case where content providers serve distinct files. We initially partition the cache into two equal size slices $C_1 = C_2 = 5000$ and use Algorithm~\ref{alg:online} to obtain the optimal partition sizes. Figure~\ref{fig:online} shows how the partition sizes for the two content providers change at each iteration of the algorithm and that they converge to the optimal values computed from~\eqref{eq:opt}, marked with dashed lines.

Next, we consider the case where some content is served by both content providers. We first partition the cache into three slices of sizes ${C_1=C_2 = 4000}$ and $C_3=2000$, where slice 3 serves the common content, and use Algorithm~\ref{alg:online_shared} to obtain the optimal partitioning. Figure~\ref{fig:online_shared} shows the changes in the three partitions as the algorithm converges to a stable point. For each partition the optimal size computed by~\eqref{eq:opt3} is shown by dashed lines.

\begin{figure*}[t]
\centering
 \begin{subfigure}[b]{0.30\linewidth}
  	\centering\includegraphics[scale=0.33]{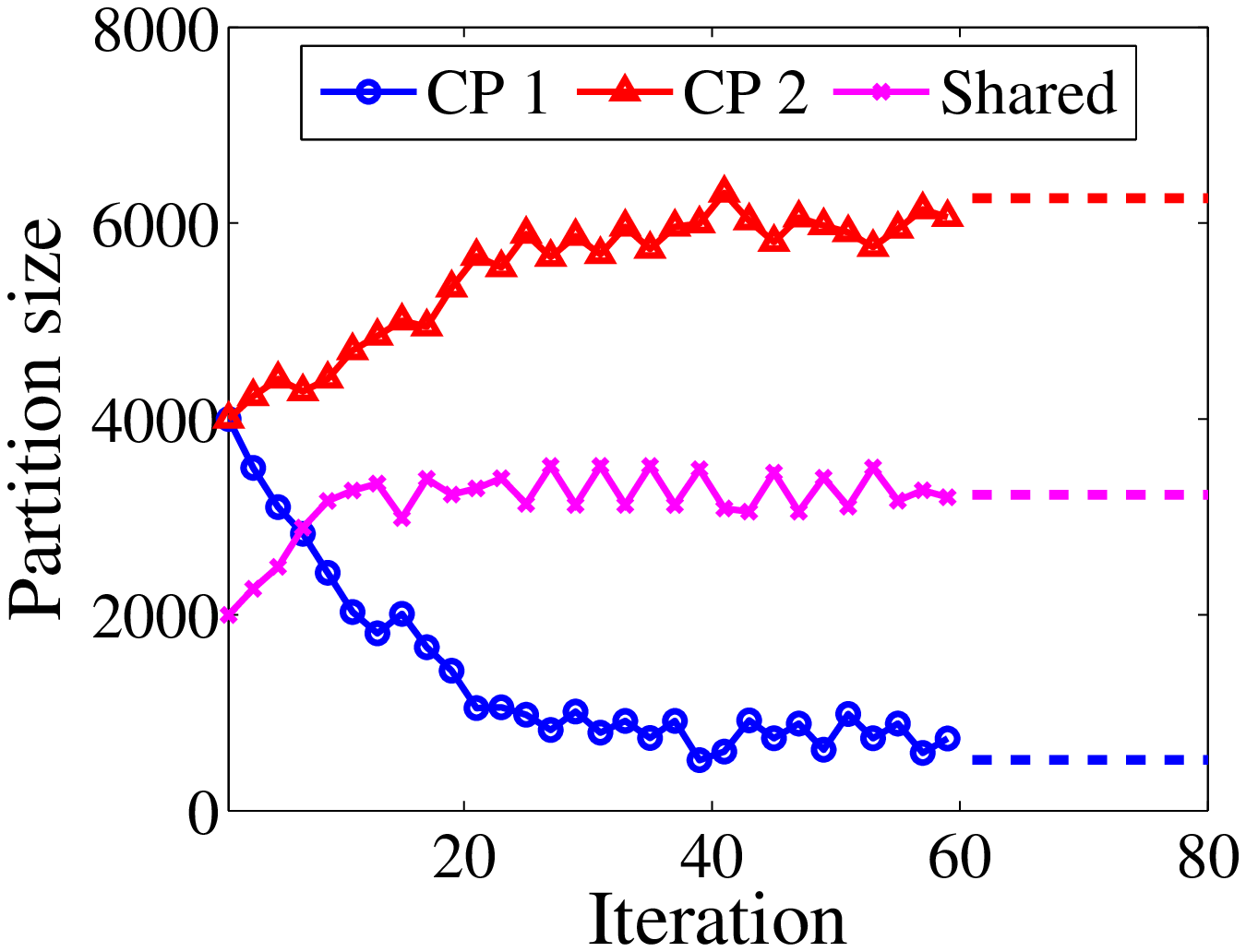}
  	\caption{$U_2(h_2) = h_2$.}
 \end{subfigure}%
 \begin{subfigure}[b]{0.30\linewidth}
  	\centering\includegraphics[scale=0.33]{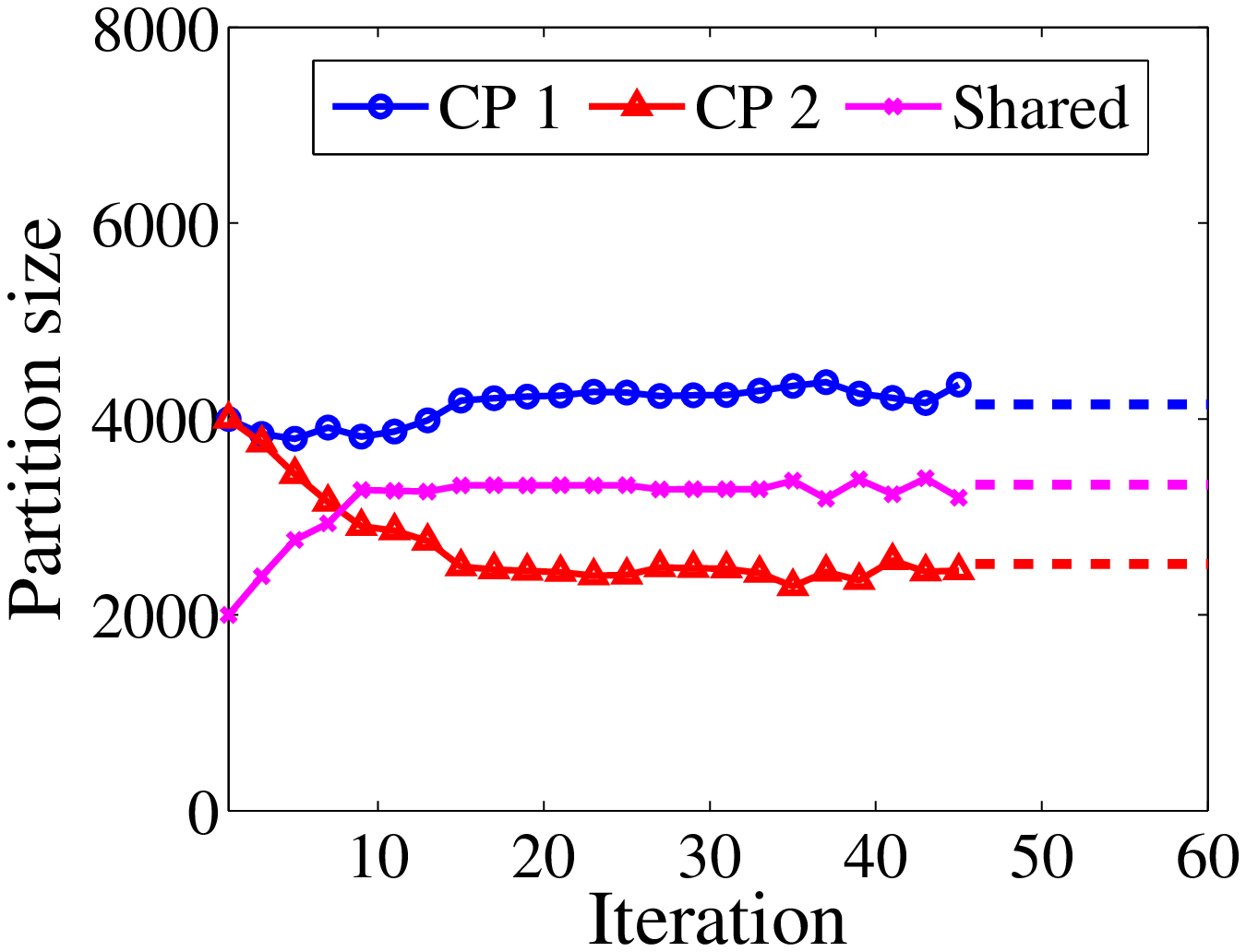}
  	\caption{$U_2(h_2) = \log{h_2}$.}
 \end{subfigure}%
 \begin{subfigure}[b]{0.30\linewidth}
  	\centering\includegraphics[scale=0.33]{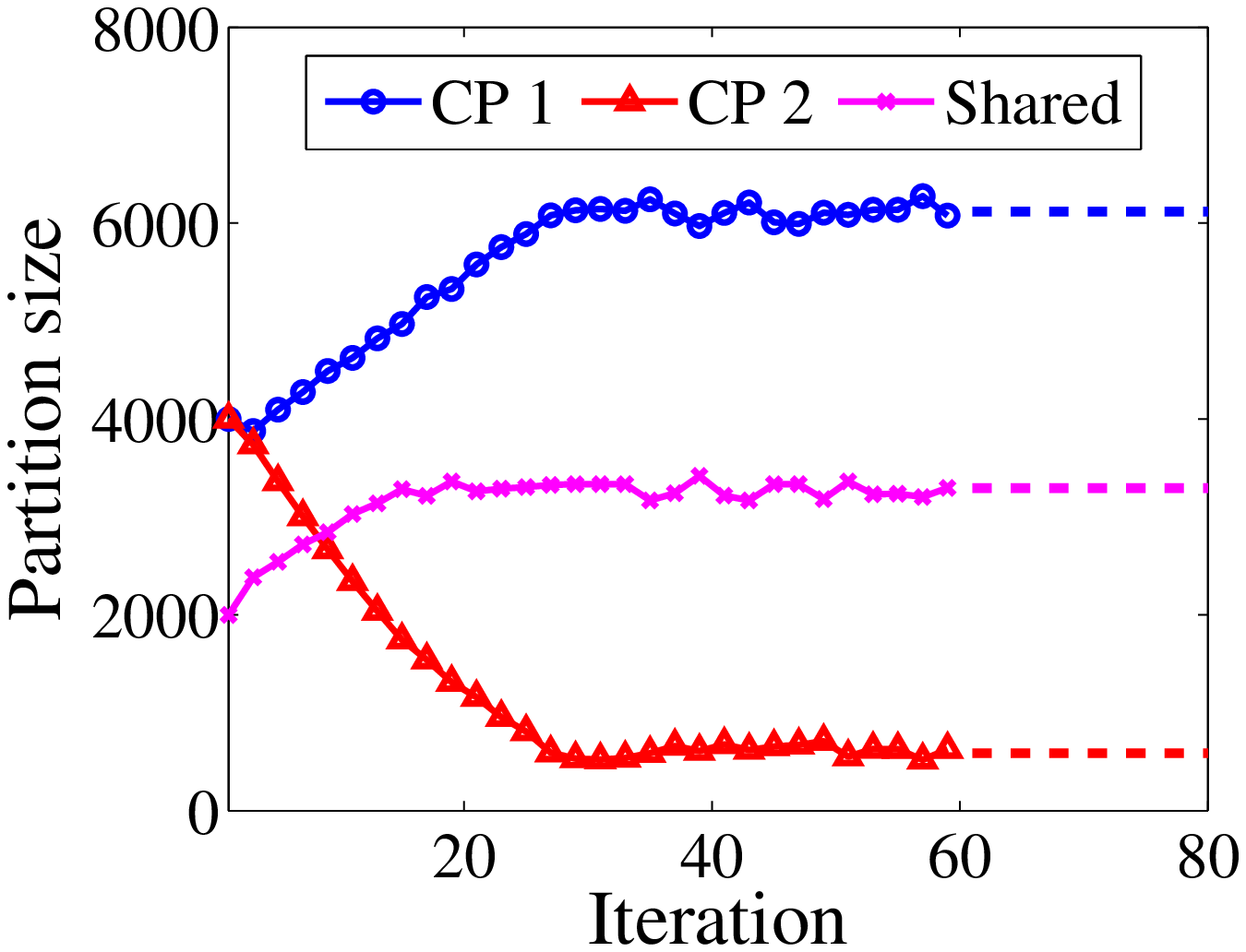}
  	\caption{$U_2(h_2) = -1/h_2$.}
 \end{subfigure}%
 \caption{Convergence of the online algorithm when some content is served by both content providers. $U_1(h_1) = \log{h_1}$.}
    \centering\label{fig:online_shared}
\end{figure*}
\section{Discussions and Related Work}
\label{sec:discussion}
In this section, we explore the implications of utility-driven cache partitioning on monetizing caching service and present some future research directions. We end with a brief discussion of the related work.

\vspace*{2pt}
\noindent
{\bf Decomposition.}
The formulation of the problem in Section~\ref{sec:problem} assumes that the utility functions $U_k(\cdot)$ are known to the system. In reality the content providers may not want to reveal their utility functions to the service provider. To handle this case, we decompose optimization problem~\eqref{eq:opt} into two simpler problems.

Suppose that cache storage is offered as a service and the service provider charges content providers at a constant rate $r$ for storage space. Hence, a content provider needs to pay an amount of $w_k = rh_k$ to obtain hit rate $h_k$. The utility maximization problem for content provider $k$ can be written as
\begin{align}
\label{eq:opt_user}
\text{maximize} \quad &U_k(\frac{w_k}{r}) - w_k \\
\text{such that} \quad &w_k \ge 0 \notag
\end{align}

Now, assuming that the service provider knows the vector $\mathbf{w}$, for a proportionally fair resource allocation, the hit rates should be set according to
\begin{align}
\label{eq:opt_network}
\text{maximize} \quad &\sum_{k=1}^{K}{w_k\log{(h_k)}} \\
\text{such that} \quad &\sum_{p}{C_p} = C. \notag
\end{align}

It was shown in~\cite{kelly97} that there always exist vectors $\mathbf{w}$ and $\mathbf{h}$,
such that $\mathbf{w}$ solves~\eqref{eq:opt_user} and $\mathbf{h}$ solves~\eqref{eq:opt_network}; furthermore, the vector $\mathbf{h}$ is the unique optimal solution.

\vspace*{2pt}
\noindent
{\bf Cost and Utility Functions.}
In Section~\ref{sec:online}, we defined a penalty function denoting the cost of using additional storage space. One might also define cost functions based on the consumed network bandwidth.
This is especially interesting in modeling in-network caches with network links that are likely to be congested.

Optimization problems~\eqref{eq:opt} and~\eqref{eq:opt3} use utility functions defined as functions of the hit rate. It is reasonable to define utility as a function of the hit \emph{probability}. Whether this significantly changes the problem, \eg\ in the notion of fairness, is a question that requires further investigation. One argument in support of utilities as functions of hit rates is that a service provider might prefer pricing based on request rate rather than the cache occupancy. Moreover, in designing hierarchical caches a service provider's objective could be to minimize the internal bandwidth cost. This can be achieved by defining the utility functions as $U_k = -P_k(m_k)$ where $P_k(m_k)$ denotes the cost associated with miss rate $m_k$ for content provider $k$.



\vspace*{2pt}
\noindent
{\bf Related Work.}
Internet cache management issues have been extensively studied in the context of web caching (e.g., see~\cite{Che01,feldman02} and references therein). In this context, biased replacement policies for different kinds of  content classes~\cite{kelly99} and differentiated caching services via cache partitioning~\cite{ko03,lu04} haven been proposed and studied. None of these studies explicitly deal with the cache allocation problem among multiple content providers. The emergence of content-oriented networking has renewed research interests in cache management issues for content delivery, especially in the design of cache replacement policies for  the content-oriented architecture~\cite{borst10,Carofiglio11,Zhang13,Michelle14}. The cache allocation problem among content providers has attracted relatively little attention. Perhaps most closely related to our work is the study in~\cite{Hoteit15} where a game-theoretic cache allocation approach is developed. This approach requires the content providers to report the true demands from their content access. 
In contrast, we develop a general utility maximization framework for studying the cache allocation problem. Since its first proposal by Kelly~\etal~\cite{kelly97}, the network utility maximization framework has been applied to a variety of networking problems  from stability analysis of queues~\cite{eryilmaz07} to the study of fairness in network resource allocation~\cite{neely10}. A utility maximization framework for caching policies was developed in~\cite{dehghan16} to provide differentiated services to content. This framework was adopted by~\cite{ICN16} to study the cache resource allocation problem in an informal and heuristic manner. We make precise statements to support the observations in~\cite{ICN16}. In this respect, our contribution lies in establishing the key properties of CP utilities as a function of cache sizes and in postulating cache partitioning as a basic principle for cache sharing among content providers. Furthermore, we develop decentralized algorithms to implement utility-driven cache partitioning, and prove that they converge to the optimal solution.

\section{Conclusion}
\label{sec:conclusion}
We proposed utility-based partitioning of a cache among content providers, and formulated it as an optimization problem with constraints on the service providers cache storage size. Utility-driven cache partitioning provides a general framework for managing a cache with considerations of fairness among different content providers, and has implications on market economy for service providers and content distributors. We considered two scenarios where 1) content providers served disjoint sets of files, or 2) some content was served by multiple content providers. We developed decentralized algorithms for each scenario to implement utility-driven cache partitioning in an online fashion. These algorithms adapt to changes in request rates of content providers by dynamically adjusting the partition sizes. We theoretically proved that these algorithms are globally stable and converge to the optimal solution, and through numerical evaluations illustrated their efficiency.


\appendix

\section{Hit rate is a concave and increasing function of cache size.}
\label{sec:hitrate_concavity}
\begin{lemma}
The hit rate $h_k$ is a concave and strictly increasing function of $C_k$.
\end{lemma}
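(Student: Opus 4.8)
The plan is to treat both $C_k$ and $h_k$ as functions of the characteristic time $T_k$ and to eliminate the awkward implicit dependence of $T_k$ on $C_k$ through the chain rule. Writing $\lambda_{k,i} = \lambda_k p_{k,i}$ for the per-file request rate, equations \eqref{hkCk} and \eqref{Ck} become $C_k = \sum_i (1-e^{-\lambda_{k,i}T_k})$ and $h_k = \sum_i \lambda_{k,i}(1-e^{-\lambda_{k,i}T_k})$. Since $dC_k/dT_k = \sum_i \lambda_{k,i} e^{-\lambda_{k,i}T_k} > 0$, the map $T_k \mapsto C_k$ is a smooth strictly increasing bijection from $[0,\infty)$ onto $[0,n_k)$, so $T_k(C_k)$ is well defined, smooth, and increasing; this also justifies the uniqueness asserted after \eqref{Ck}. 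It then suffices to determine the signs of the first two derivatives of $h_k$ with respect to $C_k$.

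For monotonicity I would apply the chain rule to obtain $\frac{dh_k}{dC_k} = \frac{dh_k/dT_k}{dC_k/dT_k}$. Introducing the shorthands $A = \sum_i \lambda_{k,i} e^{-\lambda_{k,i} T_k}$, $B = \sum_i \lambda_{k,i}^2 e^{-\lambda_{k,i} T_k}$, and $D = \sum_i \lambda_{k,i}^3 e^{-\lambda_{k,i} T_k}$, one has $dC_k/dT_k = A$ and $dh_k/dT_k = B$, both strictly positive, so $dh_k/dC_k = B/A > 0$ and $h_k$ is strictly increasing in $C_k$.

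For concavity I would differentiate once more, using $\frac{d^2 h_k}{dC_k^2} = \frac{d}{dT_k}\!\Big(\frac{B}{A}\Big)\frac{dT_k}{dC_k}$ together with the identities $A' = -B$ and $B' = -D$ and the relation $dT_k/dC_k = 1/A$. This yields $\frac{d^2 h_k}{dC_k^2} = \frac{B^2 - AD}{A^3}$, so since $A>0$ the sign is governed entirely by $B^2 - AD$. Splitting the summands symmetrically as $\lambda_{k,i}^2 e^{-\lambda_{k,i} T_k} = \big(\lambda_{k,i}^{1/2} e^{-\lambda_{k,i} T_k/2}\big)\big(\lambda_{k,i}^{3/2} e^{-\lambda_{k,i} T_k/2}\big)$ and applying the Cauchy--Schwarz inequality gives $B^2 \le A D$, hence $d^2 h_k/dC_k^2 \le 0$, which is the claimed concavity.

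The main obstacle here is organizational rather than deep: the implicit definition of $T_k(C_k)$ makes a head-on computation of $d^2 h_k/dC_k^2$ unpleasant, and the key move is to carry out all differentiation in the variable $T_k$ and convert back to $C_k$ only at the end through $dT_k/dC_k = 1/A$. The one genuine inequality required, $B^2 \le AD$, is precisely Cauchy--Schwarz once the summands are split as above; identifying that splitting is the only nonmechanical step in the argument.
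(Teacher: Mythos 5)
Your proof is correct, and its concavity step takes a genuinely different route from the paper's. The monotonicity half is essentially the paper's argument in different clothing: the appendix differentiates \eqref{hkCk} and \eqref{Ck} implicitly in $C_k$, obtaining (in your notation) $1 = A\,dT_k/dC_k$ and $dh_k/dC_k = B\,dT_k/dC_k$, hence $dh_k/dC_k = B/A > 0$. For concavity, however, the paper differentiates the implicit equations a second time, introduces $g_{k,i} = d^2T_k/dC_k^2 - \lambda_k p_{k,i}(dT_k/dC_k)^2$, sorts the popularities, deduces from the constraint $\sum_i p_{k,i}e^{-\lambda_k p_{k,i}T_k}g_{k,i}=0$ that the $g_{k,i}$ change sign exactly once, and then compares the weights $p_{k,i}^2$ against $p_{k,i}$ across the sign-change index --- a Chebyshev/rearrangement-type argument. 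You instead carry out all differentiation in $T_k$, never need $d^2T_k/dC_k^2$, and land on the closed form $d^2h_k/dC_k^2 = (B^2-AD)/A^3$ with the single inequality $B^2 \le AD$ supplied by Cauchy--Schwarz after the symmetric splitting of the summands. The two inequalities have the same content (both assert nonnegativity of a covariance-type expression under the weights $\lambda_{k,i}e^{-\lambda_{k,i}T_k}$), but your version is shorter, needs no WLOG ordering, and as a byproduct the relation $dC_k/dT_k = A > 0$ justifies the existence and uniqueness of $T_k(C_k)$ that the paper asserts without proof after \eqref{Ck}. Your route also sidesteps a sign slip in the paper's final display: by the constraint, the middle expression in the paper's chain, $\sum_{i=1}^{l}p_{k,i}e^{-\lambda_k p_{k,i}T_k}g_{k,i} + \sum_{i=l+1}^{n_k}p_{k,i}^2e^{-\lambda_k p_{k,i}T_k}g_{k,i}$, equals $-\sum_{i=l+1}^{n_k}p_{k,i}e^{-\lambda_k p_{k,i}T_k}g_{k,i}(1-p_{k,i})$, which is \emph{nonnegative}, so the printed bound is vacuous as stated; the intended repair is to compare against the threshold popularity (termwise $p_{k,i}e^{-\lambda_k p_{k,i}T_k}g_{k,i}(p_{k,i}-p_{k,l}) \le 0$), a step your Cauchy--Schwarz inequality replaces wholesale.
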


\begin{proof}
Differentiating (\ref{hkCk}) and (\ref{Ck}) w.r.t. $C_k$ gives
\begin{eqnarray*}
\frac{dh_k(C_k)}{dC_k}&=& \lambda_k^2\sum_{i=1}^{n_k} p_{k,i}^2 e^{-\lambda_kp _{k,i} T_k(C_k)}\frac{dT_k(C_k)}{dC_k} \\
1&=&\lambda_k \sum_{i=1}^{n_k} p_{k,i}  e^{-\lambda_k p_{k,i} T_k(C_k)}  \frac{dT_k(C_k)}{dC_k}.
\end{eqnarray*}
The latter equation implies that $dT_k(C_k)/dC_k>0$, which in turn implies from the former that $dh_k(C_k)/dC_k> 0$. This proves that $h_k(C_k)$ is
strictly increasing in $C_k$. Differentiating now the above equations w.r.t. $C_k$ yields
\begin{eqnarray}
\frac{1}{\lambda_k^2} \frac{d^2h_k(C_k)}{dC^2_k}&=&\sum_{i=1}^{n_k} p_{k,i}^2  e^{-\lambda_k p_{k,i} T_k(C_k)}g_{k,i} \nonumber\\
0&=&  \sum_{i=1}^{n_k} p_{k,i}  e^{-\lambda_k p_{k,i} T_k(C_k)} g_{k,i}\label{deriv2} 
\end{eqnarray}
with $g_{k,i}:= d^2T_k(C_k)/dC^2_k - \lambda_k p_{k,i} (dT_k(C_k)/dC_k)^2$.
Assume without loss of generality that $0\leq p_{k,1}\leq \cdots\leq p_{k,n_k}\leq 1$.
(\ref{deriv2}) implies that there exists $1\leq l \leq n_{k}$ such that $ g_{k,i}\geq 0$ for $i=1,\ldots,l$ and  $ g_{k,i}\leq 0$ for $i=l+1,\ldots,n_k$.
Hence,
\begin{eqnarray*}
\frac{1}{\lambda_k^2}\frac{d^2h_k(C_k)}{dC_k^2}&\leq &\sum_{i=1}^{l}  p_{k,i}  e^{-\lambda_k p_{k,i} T_k(C_k)}g_{k,i} \\
&&\quad + \sum_{i=l+1}^{n_k} p_{k,i}^2 e^{-\lambda_k p_{k,i} T_k(C_k)}g_{k,i}\\
&=& \sum_{i=l}^{n_k}  p_{k,i}  e^{-\lambda_k p_{k,i} T_k(C_k) }    g_{k,i}(1-p_{k,i})\leq 0.
\end{eqnarray*}
This proves that $h_k(C_k)$ is concave in $C_k$.
\end{proof}

\section{Proof of Theorem~\ref{THM:SHARED_FAGIN}}
\label{sec:shared_Fagin}

\begin{proof}
We first construct a CDF $F$ from the CP specific CDFs, $\{F_k\}$.  When the providers share a single cache, documents are labelled $1, \ldots,B n$, so that documents $B_{k-1} + 1, \ldots, B_k$ are the $b_k$ documents with service provider $k$. Denote $A_k := \sum_{j=1}^{k} a_j$ with $A_0 = 0$ in what follows.
Define
\begin{multline}
F(x) = \sum_{j=1}^K   \Bigl(A_{k-1} + a_k F_k\Bigl(\frac{B}{b_k}x - \frac{B_{k-1}}{b_k}\Bigr)\Bigr)  \\
   \times {\bf 1}\left\{ \frac{B_{k-1} }{B} \leq x \leq  \frac{B_k} {B}\right\}.
\end{multline}
Let
\[
p^{(n )}_i:=F\left(\frac{i}{Bn}\right) - F\left(\frac{i-1}{Bn}\right), \quad i=1,\ldots,nB_K
\]
It is easy to see that 
\[
p^{(n)}_{B_{k-1} + i} = a_k \,p^{(n)}_{k,i}, \quad i = 1,\dotsc , b_k;\,k=1,\dotsc ,K.
\]

Note that $F$ may not be differentiable at $x\in\{B_1/B, B_2/B, \dotsc B_{K-1}/B\}$ and, hence, we cannot apply the result of \cite[Theorem 1]{Fagin1977} directly to our problem.

Let 
\[
\beta^{(s)}(n,\tau_0)= 1-\frac{1}{B n} \sum_{i=1}^{B n} \left( 1-p_i^{(n)}\right)^{n\tau_0}
\]
be the fraction of  documents  in the cache. Here, $n\tau_0$ corresponds to the window size in \cite{Fagin1977}.

We have 
\begin{align}
 \beta^{(s)}(n,\tau_0) & = 1 - \frac{1}{B n} \sum_{k=1}^K \sum_{i=1}^{b_k n}  \bigl(1-p^{(n)}_{B_{k-1}+i}\bigr)^{n\tau_0 } \nonumber \\
 & = 1-\frac{1}{Bn} \sum_{k=1}^K \sum_{i=1}^{b_k n}  \bigl(1-a_k p^{(n)}_{k,i}\bigr)^{n\tau_0 } 
   \label{sum}.
\end{align}
 We are interested in $\beta^{(s)}=\lim_{n\to \infty}\beta^{(s)}(n,\tau_0)$.
 
 By applying the result in \cite[Theorem 1]{Fagin1977}, we find that 
\begin{multline}
\lim_{n\to \infty}  \frac{1}{Bn} \sum_{i=1}^{b_k n} \bigl(1-a_k p^{(n)}_{k,i}\bigr)^{n\tau_0 }  = \\
\frac{b_k}{B}\int_0^1 e^{-\tau_0 a_k BF_k^\prime(x)/b_k}dx
\end{multline}
for $k=1,\dotsc ,K$, so that
\[
  \beta^{(s)} = 1-\sum_{k=1}^K \frac{b_k}{B} \int_0^1 e^{- a_k F_k^\prime (x)\tau_0 B/b_k} dx .
 \]
Equation \eqref{eq:mu-shared} is derived in the same way.

Last, it follows from Theorems 2 and 4 in \cite{Fagin1977} that $\mu^{(s)}$ is the limiting aggregate miss probability under LRU as $n\rightarrow \infty$.
\end{proof}



\section{Proof of Theorem~\ref{THM:3OVER2}}
\label{sec:3over2}

\begin{proof}
The optimization problems under strategies 2 and 3 are 
\begin{align*}
\min_{\tau_1,\tau_2} \quad &\sum_{k=1}^2 (a_{0,k}\mu^{(s2)}_{0,k}(\tau_k) + a_k\mu^{(s2)}_k(\tau_k)) \\
\text{s.t.} \quad &\sum_{k=1}^2 (\beta^{(s2)}_{0,k}(\tau_k) + \beta^{(s2)}_k(\tau_k)) \le \beta, \\
& \beta^{(s2)}_{0.k},\,\beta^{(s2)}_k \ge 0, \quad k=1,2. 
\end{align*}
and
\begin{align}
\min_{\tau_k} \quad &\sum_{k=1}^3 a_k \mu^{(s3)}_k(\tau_k) \notag\\
\text{s.t.} \quad &\sum_{k=1}^3\beta^{(s3)}_k(\tau_k)  \beta, \label{eq:s3constraint}\\
& \beta^{(s3)}_k \ge 0, \quad k=1,2,3. \notag
\end{align}
where $\lambda_0 = \lambda_{0,1}+\lambda_{0.2}$, $a_{0,k} = \lambda_{0,k}/\sum_{k=1}^3\lambda_k$, and $a_k = \lambda_k /\sum_{k=1}^3\lambda_k$,
\begin{align*}
\mu^{(s2)}_{0,k}(\tau) &=  \int_0^1 F'_0(x) e^{-\frac{a_{0,k}(b_0+b_k)}{(a_{0,k}+a_k)b_0}F'_0(x)\tau}dx \\ 
\mu^{(s2)}_k(\tau) &=  \int_0^1 F'_k(x) e^{-\frac{a_{0,k}(b_0+b_k)}{(a_{0,k}+a_k)b_k}F'_k(x)\tau}dx \\
\beta^{(s2)}_{0,k}(\tau) &=   \frac{b_0}{2b_0+b_1+b_2}\bigl(1 -  \int_0^1  e^{-\frac{a_{0,k}(b_0+b_k)}{(a_{0,k}+a_k)b_0}F'_0(x)\tau}dx\bigr) \\
\beta^{(s2)}_k(\tau) &=   \frac{b_k}{2b_0+b_1+b_2}\bigl(1 -  \int_0^1  e^{-\frac{a_k(b_0+b_k)}{(a_{0,k}+a_k)b_k}F'_0(x)\tau}dx\bigr)
\end{align*}
and
\begin{align*}
\mu^{(s3)}_k(\tau) &=  \int_0^1 F'_k(x) e^{-F'_k(x)\tau}dx \\ 
\beta^{(s3)}_k(\tau) &=   \frac{b_k}{b_0+b_1+b_2}\bigl(1 -  \int_0^1  e^{-F'_k(x)\tau}dx\bigr)
\end{align*}
We make two observations:
\begin{itemize}
\item $\mu^{(s2)}_{0,k}(\tau ) = \mu^{(s3)}_0(\frac{a_{0,k}(2b_0+b_1+b_2)}{(a_{0,k}+a_k)b_0})\tau$, $k=1,2$,
\item $\mu^{(s3)}_0 (\tau )$ is a decreasing function of $\tau$.
\end{itemize}
Let $\mu^{(s2)*}$ denote the minimum miss probability under strategy 2, which is achieved with $\tau^*_1$ and $\tau^*_2$.  Let $\mu^{(s3)}(\tau_1,\tau_2,\tau_3)$ denote the miss probability under strategy 3 where $\tau_k$, $k=1,2,3$ satisfy \eqref{eq:s3constraint}.   
Set $\tau_k=\frac{a_{0,k}(2b_0+b_1+b_2)}{(a_{0,k}+a_k)b_0}\tau^*_k$, $k=1,2$ for strategy 3 and allocate provider $k$ a cache of size $\beta^{(s2)}_k$ under strategy 3 for its non-shared content.  The aggregate miss probability for non-shared content is then the same under the two strategies and given by $\mu^{(s2)}_1(\tau^*_1) + \mu^{(s2)}_2(\tau^*_2)$.

Under strategy 2, the amount of shared content stored in the cache is $\beta_- = \beta^{(s2)*}_{0,1} + \beta^{(s2)*}_{0,2}$.  We allocate a cache of that size to the shared content under strategy 3 and wlog assume that $\tau^*_1 \ge \tau^*_2$ Themiss probability over all shared content under strategy 2 is
\[
\begin{split}
\sum_{k=1}^2 \frac{a_{0,k}}{a_{0,1}+a_{0.2}} \mu^{(s2)}_{0,k}(\tau^*_k) & \ge \mu^{(s2)}_{0,k}(\tau^*_1) \\
  & = \mu^{(s3)}_0(\tau_1)
\end{split}
\]
 Note that strategy 3 requires only a cache of size $\beta^{(s2)}_{0,k} < \beta_-$ to achieve a smaller miss probability for the shared content than strategy 2 can realize.  Adding additional storage to the shared partition can only decrease the hit probability further, thus proving the theorem.

\end{proof}

\section{Proof of Theorem~\ref{THM:UNIQUE}}
\label{sec:proof_obj_shared}
\begin{proof}
Let $P$ and $\mathbf{C} = (C_1, \ldots, C_P)$ denote the number of partitions and the vector of partition sizes, respectively. The hit rate for content provider $k$ in this case can be written as
\[h_k(\mathbf{C}) = \sum_{p=1}^{P}{\sum_{i\in V_p}{\lambda_{ik}(1 - e^{-\lambda_iT_p})}},\]
where $V_p$ denotes the set of files requested from partition $p$, and $\lambda_i = \sum_{k}{\lambda_{ik}}$ denotes the aggregate request rate for file $i$.

We can re-write the expression for $h_k$ as the sum of the hit rates from each partition, since distinct files are requested from different partitions. We have
\[h_k(\mathbf{C}) = \sum_{p=1}^{P}{h_{kp}(C_p)},\]
where $h_{kp}(C_p)$ denotes the hit rate for files requested from partition $p$ from content provider $k$. Since $h_{kp}$ is assumed to be a concave increasing function of $C_p$, $h_k$ is sum of concave functions and hence is also concave.
\end{proof}

\section{Partitioning and Probabilistic Routing is Sub-optimal.}
\label{sec:proof_partition}
\begin{lemma}
Partitioning a cache, and probabilistically routing content requests to different partitions is sub-optimal.
\end{lemma}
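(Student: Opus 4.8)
The plan is to show that probabilistic routing is weakly dominated by \emph{deterministic} routing, in which every object is served from a single sub-partition; since a deterministically-routed sub-partition is exactly an LRU cache of the kind already analyzed, this proves that randomizing the routing across sub-slices can never improve the aggregate hit rate. Concretely, suppose the slice of size $C$ is split into $P$ sub-partitions of sizes $C_1,\dots,C_P$ with $\sum_p C_p=C$, a request for object $i$ is sent to sub-partition $p$ with probability $q_{ip}$ (so $\sum_p q_{ip}=1$), each sub-partition runs LRU, and its characteristic time $T_p$ solves $C_p=\sum_i\bigl(1-e^{-\lambda_i q_{ip}T_p}\bigr)$. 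The aggregate hit rate is $H=\sum_p\sum_i\lambda_i q_{ip}\bigl(1-e^{-\lambda_i q_{ip}T_p}\bigr)$, while object $i$ consumes occupancy $\sum_p\bigl(1-e^{-\lambda_i q_{ip}T_p}\bigr)$. I would prove the claim by an exchange argument: starting from any optimal scheme, if some object is split across two or more sub-partitions I consolidate it into one, and show this does not decrease $H$.

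The engine of the argument is two elementary inequalities for $x\mapsto 1-e^{-x}$ and $\psi(x):=x\bigl(1-e^{-x}\bigr)$. For $b,c\ge 0$ the identity $\bigl(1-e^{-b}\bigr)+\bigl(1-e^{-c}\bigr)-\bigl(1-e^{-(b+c)}\bigr)=\bigl(1-e^{-b}\bigr)\bigl(1-e^{-c}\bigr)\ge 0$ shows that $1-e^{-x}$ is \emph{subadditive}, and the identity $\psi(b)+\psi(c)-\psi(b+c)=be^{-b}+ce^{-c}-(b+c)e^{-(b+c)}\le 0$ (using $e^{-(b+c)}\le e^{-b},e^{-c}$) shows that $\psi$ is \emph{superadditive}. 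Consequently, if two sub-streams of the same object with rates $r$ and $\lambda_i-r$ are merged into one sub-partition with characteristic time $T$, the merged copy consumes occupancy $1-e^{-\lambda_i T}\le\bigl(1-e^{-rT}\bigr)+\bigl(1-e^{-(\lambda_i-r)T}\bigr)$ and yields hit rate $\lambda_i\bigl(1-e^{-\lambda_i T}\bigr)\ge r\bigl(1-e^{-rT}\bigr)+(\lambda_i-r)\bigl(1-e^{-(\lambda_i-r)T}\bigr)$. Thus consolidating a split object into a single sub-partition simultaneously frees occupancy and raises the object's own hit rate.

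The remaining difficulty, and the main obstacle, is that the characteristic times are not fixed: moving object $i$'s mass out of a sub-partition changes that sub-partition's $T_p$ and hence the hit rates of all the other objects sharing it. I would handle this by re-equilibration in two steps. First, by the subadditivity above, evicting object $i$'s share from every sub-partition but one lowers the occupancy those sub-partitions must hold; since each sub-partition's hit rate is increasing in its size (the Lemma in Appendix~\ref{sec:hitrate_concavity}), the freed occupancy can be returned to the budget and redistributed without decreasing the total hit rate of the remaining objects. Second, by the superadditivity above, the consolidated copy of object $i$ contributes at least as much hit rate as its scattered copies did; for sub-partitions with differing characteristic times one routes the object to the sub-partition offering the largest marginal value of occupancy, and at an optimum these marginal values are equalized (exactly the stationarity condition driving Algorithm~\ref{alg:online}), so the exchange is feasible and weakly improving. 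Iterating over all split objects yields a deterministic scheme with hit rate at least $H$, proving that probabilistic routing is sub-optimal. The delicate point to get right is that the occupancy freed by consolidation must be accounted for globally through the single budget constraint rather than locally, which is where monotonicity of the per-partition hit rate in its size is essential.
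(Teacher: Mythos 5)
There is a genuine gap here, and in fact two. First, the crux of your exchange argument---consolidating a split object when the sub-partitions have \emph{different} characteristic times---is asserted rather than proved, and it does not hold as claimed. Your two inequalities are valid only at a \emph{common} characteristic time $T$. Write $r_p=\lambda_i q_{ip}$. If you consolidate object $i$ into the partition $p^*$ with the largest $T_p$, superadditivity of $\psi$ does give a hit-rate gain for object $i$, but the merged copy's occupancy $1-e^{-\lambda_i T_{p^*}}$ can \emph{exceed} the occupancy freed elsewhere: subadditivity only bounds it by $\sum_p\bigl(1-e^{-r_p T_{p^*}}\bigr)$, which is \emph{larger} than the actually freed amount $\sum_{p\neq p^*}\bigl(1-e^{-r_p T_p}\bigr)$ precisely because $T_{p^*}$ is maximal; then $T_{p^*}$ must fall and the cohabiting objects lose hit rate. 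If instead you consolidate into the smallest-$T$ partition, the budget accounting closes, but the hit-rate comparison reverses: the scattered copies sitting in larger-$T$ partitions may have contributed more than the merged copy does. So neither choice is ``feasible and weakly improving,'' and the appeal to equalized marginal values at the optimum does not repair this: the stationarity condition behind~\eqref{eq:hard} and Algorithm~\ref{alg:online} equalizes marginal utility per unit of \emph{partition size}, not hit rate per unit of \emph{per-object occupancy}, which is what your exchange needs.

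Second, even if the exchange were completed, it proves the wrong dominance. You would show that probabilistic routing is weakly dominated by a \emph{deterministic per-object assignment} to sub-partitions, whereas the lemma, as proved in Appendix~\ref{sec:proof_partition} and used in the Implications discussion, compares against \emph{not partitioning at all}, i.e., $h_P \le h(C)$. These conclusions are incomparable: in your generalized model the deterministic assignments are the extreme points $q_{ip}\in\{0,1\}$, and such assignments can strictly \emph{beat} the undivided LRU cache (a dedicated slice per popular object tends to static caching, which outperforms LRU---the paper's own per-object remark in the Implications subsection), so ``dominated by deterministic routing'' does not imply the lemma. The paper's proof sidesteps all of this by taking the routing probability $p$ to be the \emph{same for every object}: Poisson thinning then rescales the characteristic-time equations \eqref{eq:CT-beta}--\eqref{eq:CT-hit} so that the thinned stream on a slice of size $C_1$ attains exactly $p\,h(C_1)$, whence $h_P = p\,h(C_1)+(1-p)\,h(C_2) \le h(C)$ by monotonicity and concavity of $h$ (Appendix~\ref{sec:hitrate_concavity})---three lines, no exchange argument needed. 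Finally, a small sign slip: your displayed identity should read $\psi(b)+\psi(c)-\psi(b+c)=(b+c)e^{-(b+c)}-be^{-b}-ce^{-c}\le 0$; the superadditivity conclusion is correct, but the difference as you wrote it is negated.
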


\begin{proof}
Assume we partition the cache into two slices of size $C_1$ and $C_2$, and route requests to partition one with probability $p$, and with probability $1-p$ route to partition two. Let $h_P$ denote the hit rate obtained by partitioning the cache. From concavity of~$h(C)$ we have
\begin{align}
h_P &= ph(C_1) + (1-p)h(C_2) \notag\\
    &\le ph(C) + (1-p)h(C) = h(C).\notag
\end{align}
\end{proof}

\section{Proof of Theorem~\ref{THM:GA}}
\label{sec:lyapunov}
\begin{proof}
We first note that since $W(\mathbf{C})$ is a strictly concave function, it has a unique maximizer $\mathbf{C}^*$.
Moreover $V(\mathbf{C}) = W(\mathbf{C}^*) - W(\mathbf{C})$ is a non-negative function and equals zero only at $\mathbf{C} = \mathbf{C}^*$.
Differentiating $V(\cdot)$ with respect to time we obtain
\begin{align*}
\dot{V}(\mathbf{C}) &= \sum_{k}{\frac{\partial V}{\partial h_k}\dot{h_k}} \\
&= -\sum_{k}{\left( U'_k(h_k) - P'(\sum_{k}{C_k} - C)\frac{\partial C_k}{\partial h_k} \right) \dot{h_k}}.
\end{align*}

For $\dot{h_k}$ we have
\[\dot{h_k} = \frac{\partial h_k}{\partial C_k}\dot{C_k}.\]

From the controller for $C_k$ we have
\[\dot{C_k} = \gamma_k \left( U'_k(h_k)\frac{\partial h_k}{\partial C_k} - P'(\sum_{k}{C_k} - C) \right).\]

Since $\frac{\partial h_k}{\partial C_k} \ge 0$, we get
\begin{align*}
\dot{V}(\mathbf{C}) &= -\sum_{k}{\gamma_k \frac{\partial h_k}{\partial C_k}\left( U'_k(h_k) - P'(\sum_{k}{C_k} - C)\frac{\partial C_k}{\partial h_k} \right)^2} \\
&\le 0.
\end{align*}

Therefore, $V(\cdot)$ is a Lyapunov function, and the system state will converge to $\mathbf{C}^*$ starting from any initial condition. A description of Lyapunov functions and their applications can be found in~\cite{srikant13}.
\end{proof}


\bibliographystyle{IEEEtran}
\bibliography{references}

\end{document}